\def\ep{\epsilon_n}
\def\pm{P'_{1\rightarrow 2}}
\def\pn{P'_{2\rightarrow 1}}
\newcommand{\hG} {\hat{G}}
\newcommand{\hV} {\hat{V}}
\def\endproof{\hspace*{\fill}~$\blacksquare$}
\long\def\comment#1{} 
\newtheorem{theorem}{Theorem}
\newtheorem{lemma}[theorem]{Lemma}
\newtheorem{claim}[theorem]{Claim}
\newtheorem{corollary}[theorem]{Corollary}
\newtheorem{definition}[theorem]{Definition}
\begin{document}

\title{On the multiple unicast capacity of 3-source, 3-terminal directed acyclic networks}

\author{Shurui Huang {\it Student Member, IEEE} and Aditya Ramamoorthy, {\it Member, IEEE}
\thanks{The authors are with the Dept. of Electrical \& Computer Eng., Iowa State Univ., Ames, IA 5011. The material in this paper has appeared in part at the IEEE Information Theory and Applications Workshop (ITA 2012), San Diego, CA, Feb. 5 - 10, 2012, and the IEEE International Conference on Communications (ICC 2011), Kyoto, Japan, Jun. 5 - 9, 2011. This work was supported in part by NSF grant CCF-1018148.}}


\maketitle 
\begin{abstract} 
We consider the multiple unicast problem with three source-terminal pairs over directed acyclic networks with unit-capacity edges. The three $s_i-t_i$ pairs wish to communicate at unit-rate via network coding. The connectivity between the $s_i - t_i$ pairs is quantified by means of a connectivity level vector, $[k_1~k_2~k_3]$ such that there exist $k_i$ edge-disjoint paths between $s_i$ and $t_i$. In this work we attempt to classify networks based on the connectivity level.
It can be observed that unit-rate transmission can be supported by routing if $k_i \geq 3$, for all $i = 1, \dots, 3$.  In this work, we consider, connectivity level vectors such that $\min_{i = 1, \dots, 3} k_i  < 3$.  We present either a constructive linear network coding scheme or an instance of a network that cannot support the desired unit-rate requirement, for all such connectivity level vectors except the vector $[1~2~4]$ (and its permutations).
The benefits of our schemes extend to networks with higher and potentially different edge capacities. Specifically, our experimental results indicate that for networks where the different source-terminal paths have a significant overlap, our constructive unit-rate schemes can be packed along with routing to provide higher throughput as compared to a pure routing approach.

\end{abstract}
\section{Introduction} 
In a network that supports multiple unicast, there are several source terminal pairs; each source
wishes to communicate with its corresponding terminal. Multiple unicast connections form bulk of the traffic over both wired and wireless networks. Thus, network coding schemes that
can help improve network throughput for multiple unicasts are of considerable interest. However, it is well recognized that the design of constructive network coding schemes for multiple unicasts
is a hard problem when compared with the case of multicast that is very well understood \cite{rm, hoMKKESL06, ebrahimiF11}. Specifically, it is known that there are instances of networks where linear (whether scalar or vector) network coding is insufficient \cite{DoughertyFZ05}.

The multiple unicast problem has been examined for both directed
acyclic networks \cite{medardEHK03,yan06isit,HarveyIT,Traskov06, tracy06allerton,wangIT10,feder09,Kamath11,javidi08,JafarISIT,interfernce10}
and undirected networks \cite{LiLiunicast} in previous work.

The work of \cite{yan06isit}, provides an information theoretic
characterization for directed acyclic networks. However, this bound is not computable. The work of \cite{HarveyIT} proposes an outer bound for general directed networks. However, this bound is hard to evaluate even for small networks due to the large number of inequalities involved. There have been attempts to find constructive schemes leveraging network coding between pairs of sources \cite{Traskov06,tracy06allerton}. Numerous works consider restricted cases such as unicast with two sessions \cite{wangIT10,feder09, Kamath11,javidi08} and unicast with three sessions \cite{JafarISIT,interfernce10, wang11}. We discuss the related work in detail in Section \ref{sec:background}.

In this work we consider network coding for wired three-source, three-terminal directed acyclic networks with unit capacity edges. There are source-terminal pairs denoted $s_i-t_i, i = 1,\dots,3$, such that the maximum flow from $s_i$ to $t_i$ is $k_i$. Each source contains a unit-entropy message that needs to be communicated to the
corresponding terminal. 
In this work, for a given connectivity level vector $[k_1~k_2~k_3]$ we attempt to either design a constructive scheme based on linear network codes or demonstrate an instance of a network where supporting unit-rate transmission is impossible. Our achievability schemes use a combination of  random linear network coding and appropriate precoding. 
Our solutions are based on either scalar or vector network codes that operate over at most two time units (i.e., two network uses). This is useful, as one can arrive at multiple unicast schemes for arbitrary rates by packing unit-rate structures for which our achievability schemes apply.

%
%
%
\noindent \uline{{\it Main Contributions}}
\begin{list}{}{\leftmargin=0.1in \labelwidth=0cm \labelsep = 0cm}
\item[$\bullet~$] For the case of three unicast sessions with unit rates, we
identify certain feasible and infeasible connectivity levels $[k_1~k_2
~k_3]$. For the feasible cases, we construct schemes based on linear
network coding. For the infeasible cases, we provide
counter-examples, i.e., instances of graphs where the multiple
unicast cannot be supported under any (potentially nonlinear)
network coding scheme.
\item[$\bullet~$] We provide experimental results that demonstrate that our feasible schemes for unit-rate are useful for networks with higher capacity edges.
Specifically, we demonstrate classes of networks with higher capacity edges, where {\it packing} our unit-rate schemes allows us to achieve transmission rates that are strictly greater than those achieved by pure routing.
\end{list}
This paper is organized as follows. Section \ref{sec:background} contains an overview of related work. In Section \ref{sec:pre}, we introduce
the network coding model and problem formulation. Section \ref{sec:netcod_infeas} discusses infeasible instances, and Section \ref{sec:netcod_feas} discusses our achievable schemes for 3-source, 3-terminal multiple unicast networks. 
Section \ref{sec:simu} presents simulation results on networks with higher capacity edges and
Section \ref{sec:con} concludes the paper with a discussion of future work.

\section{Background and Related Work}
\label{sec:background}

It is well-recognized that network coding for multiple unicast is significantly harder than the network coding for multicast. The work of \cite{rm} establishes an equivalence between network coded multicast and the problem of solving systems of linear equations. In the same paper, they also point out that for multiple unicast, one also needs to somehow decode the intended message in the presence of undesired interference. In general, it is intractable to find network code assignments that simultaneously allow the intended message to be decoded while mitigating the interference. In fact, it is known that linear codes are insufficient for the multiple unicast problem \cite{DoughertyFZ05}.

In this work our focus is exclusively on multiple unicast for directed acyclic networks (see \cite{LiLiunicast} for the undirected case). Previous work in this domain includes the work of \cite{yan06isit} that presents an information theoretic characterization of the capacity region. However, in practice this bound is not computable due to the lack of upper bounds on the cardinality of the alphabets of the random variables involved in the characterization. Moreover, even for small sized networks, the number of inequalities involved is very large. Similar issues exist with the outer bound of \cite{HarveyIT}. There have been numerous works on achievable schemes for multiple unicast. The butterfly network with two unicast sessions is an instance where there is clear advantage to performing network coding over routing. Accordingly Traskov et al. \cite{Traskov06} proceed by packing butterfly networks for general multiple unicast. Ho et al. \cite{tracy06allerton} propose an achievable region by using XOR coding coupled with back-pressure algorithms. Multiple unicast in the presence of link faults and errors, under certain restricted (though realistic) network topologies has been studied in \cite{kamalRLL11}\cite{shizhengR11}.

Further progress has been made in certain restricted classes of problems. For instance, an improved outer bound (GNS bound) over the network sharing outer bound for two-unicast is proposed in \cite{Kamath11}. Price et al. \cite{javidi08} also propose an outer bound for two-unicast and demonstrate a network for which the outer bound is the exact capacity region. For two-unicast, Wang et al. \cite{wangIT10} (also see \cite{shenvi}) present a necessary and sufficient condition for unit-rate transmission and the work of \cite{feder09} and \cite{huangR_ITW11} propose an achievable region for general rates.

Some recent work deals with the case of three unicast sessions, which is also the focus of our work. The work of \cite{JafarISIT} and \cite{interfernce10} use the technique of interference alignment (proposed in \cite{caja08}) for multiple unicast. Roughly speaking they use random linear network coding and design appropriate precoding matrices at the source nodes that allow undesired interference at a terminal to be aligned. However, their approach requires several algebraic conditions to be satisfied in the network. It does not appear that these conditions can be checked efficiently. There has been a deeper investigation of these conditions in \cite{wang11}.
Our work is closest in spirit to these papers. Specifically, we also examine network coding for the three-unicast problem. However, the problem setting is somewhat different. Considering networks with unit capacity edges and given the maximum-flow $k_i$ between each source ($s_i$) - terminal ($t_i$) pair we attempt to either design a network code that allows unit-rate communication between each source-terminal pair, or demonstrate an instance of a network where unit-rate communication is impossible. Our achievability schemes for unit rate are useful since they can be packed into networks with higher capacity edges. Furthermore, these schemes require vector network coding over at most two time units, unlike the work of \cite{JafarISIT} and \cite{interfernce10}, that require a significantly higher level of time-expansion.

\section{Preliminaries}
\label{sec:pre}
We represent the network as a directed acyclic graph $G = (V, E)$.
Each edge $e \in E$ has unit capacity and can transmit one symbol
from a finite field of size $q$ per unit time (we are free to
choose $q$ large enough). If a given edge has higher capacity, it
can be treated as multiple unit capacity edges. A directed edge
$e$ between nodes $i$ and $j$ is represented as $(i,j)$, so that
$head(e) = j$ and $tail(e) = i$. A path between two nodes $i$ and
$j$ is a sequence of edges $\{ e_1, e_2, \dots, e_k\}$ such that
$tail(e_1) = i, head(e_k) = j$ and $head(e_i) = tail(e_{i+1}), i =
1, \dots, k-1$. The network contains a set of $n$ source nodes
$s_i$ and $n$ terminal nodes $t_i, i = 1, \dots
n$. Each source node $s_i$ observes a discrete integer-entropy
source, that needs to be communicated to
terminal $t_i$. Without loss of generality, we assume that the
source (terminal) nodes do not have incoming (outgoing) edges. If
this is not the case one can always introduce an artificial source
(terminal) node connected to the original source (terminal) node
by an edge of sufficiently large capacity that has no incoming
(outgoing) edges.


We now discuss the network coding model under consideration in
this paper. For the sake of understanding the model, suppose for now that each source
has unit-entropy, denoted by $X_i$ (as will be evident, in the sequel we work with integer entropy sources). In scalar linear network coding, the
signal on an edge $(i,j)$ is a linear combination of the signals
on the incoming edges of $i$ or the source signals at $i$ (if $i$
is a source). We shall only be concerned with networks that are
directed acyclic and can therefore be treated as delay-free
networks \cite{rm}. Let $Y_{e_i}$ (such that $tail(e_i) = k$ and
$head(e_i) = l$) denote the signal on edge $e_i \in E$. Then, we
have
\begin{align*}
Y_{e_i} &= \sum_{\{e_j | head(e_j) = k\}} f_{j,i} Y_{e_j} \text{~if $k \in V \backslash \{s_1, \dots, s_n\}$}, \text{~and}\\
Y_{e_i} &= \sum_{j=1}^n a_{j,i} X_j
\text{~~ where $a_{j,i} = 0$ if $X_j$ is not observed at $k$.}
\end{align*}
The coefficients $a_{j,i}$ and $f_{j,i}$ are from the operational field.
Note that since the graph is directed acyclic, it is equivalently possible to
express $Y_{e_i}$ for an edge $e_i$ in terms of the sources
$X_j$'s. If $Y_{e_i} = \sum_{k=1}^n \beta_{e_i, k} X_k$ then we say that the
global coding vector of edge $e_i$ is $\boldsymbol{\beta}_{e_i} =
[ \beta_{e_i, 1} ~\cdots~ \beta_{e_i, n}]$. We shall also
occasionally use the term coding vector instead of global coding
vector in this paper. We say that a node $i$ (or edge $e_i$) is
downstream of another node $j$ (or edge $e_j$) if there exists a
path from $j$ (or $e_j$) to $i$ (or $e_i$).

Vector linear network coding is a generalization of the scalar
case, where we code across the source symbols in time, and the
intermediate nodes can implement more powerful operations.
Formally, suppose that the network is used over $T$ time units. We
treat this case as follows. Source node $s_i$ now observes a
vector source $[X_i^{(1)} ~ \dots ~ X_i^{(T)}]$. Each edge in the
original graph is replaced by $T$ parallel edges. In this graph,
suppose that a node $j$ has a set of $\beta_{inc}$ incoming edges
over which it receives a certain number of symbols, and $\beta_{out}$
outgoing edges. Under vector network coding, node $j$ chooses
a matrix of dimension $\beta_{out} \times \beta_{inc}$. Each row
of this matrix corresponds to the local coding vector of an
outgoing edge from $j$.

Note that the general multiple unicast problem, where edges have
different capacities and the sources have different entropies can
be cast in the above framework by splitting higher capacity edges
into parallel unit capacity edges and a higher entropy source into
multiple, collocated unit-entropy sources. This is the approach taken below.

An instance of the multiple unicast problem is specified by the
graph $G$ and the source terminal pairs $s_i - t_i,  i = 1, \dots,
n$, and is denoted $<G, \{s_i - t_i\}_{1}^n, \{R_i\}_{1}^n>$,
where the integer rates $R_i$ denote the entropy of the $i^{th}$ source.
The $s_i-t_i$ connections will be referred to as sessions that we need to support.

Let the sources at $s_i$ be denoted as $X_{i1}, \dots, X_{i R_i}$. The instance is said to have a scalar linear network coding
solution if there exist a set of linear encoding coefficients for
each node in $V$ such that each terminal $t_i$ can recover $X_{i1}, \dots, X_{i R_i}$
using the received symbols at its input edges. Likewise, it is
said to have a vector linear network coding solution with vector
length $T$ if the network employs vector linear network codes and
each terminal $t_i$ can recover $[X_{i1}^{(1)} ~ \dots ~ X_{i1}^{(T)}], \dots, [X_{i R_i}^{(1)} ~ \dots ~ X_{i R_i}^{(T)}] $. If the instance has either a scalar or a vector
network coding solution, we say that it is feasible.

We will also be interested in examining the existence of a routing
solution, wherever possible. In a routing solution, each edge
carries a copy of one of the sources, i.e., each coding vector is
such that at most one entry takes the value $1$, all others are
$0$. Scalar (vector) routing solutions can be defined in a
manner similar to scalar (vector) network codes. We now define
some quantities that shall be used throughout the paper.

\begin{definition}
{\it Connectivity level.} The connectivity level for
source-terminal pair $s_i - t_i$ is said to be $\beta$ if the maximum
flow between $s_i$ and $t_i$ in $G$ is $\beta$. The connectivity level
of the set of connections $s_1 - t_1, \dots, s_n - t_n$ is the
vector $[ \text{max-flow}(s_1 - t_1) ~ \text{max-flow}(s_2 - t_2)~
\dots ~ \text{max-flow}(s_n - t_n)]$.
\end{definition}

In this work our aim is to characterize the feasibility of the multiple unicast problem based on the connectivity level of the $s_i -t_i$ pairs. The questions that we seek to answer are of the following form - suppose that the connectivity level is $[k_1~ k_2~ \dots~ k_n]$. Does any instance always have a linear (scalar or vector) network coding solution? If not, is it possible to demonstrate a counter-example, i.e, an instance of a graph $G$ and $s_i - t_i$'s such that recovering the $i$-th source at $t_i$ for all $i$ is impossible under linear (or nonlinear) strategies?


We conclude this section by observing that a multiple unicast instance $<G, \{s_i - t_i\}_1^n, \{1, 1, \dots, 1\}>$ with connectivity level $[n ~n ~ \dots ~n]$ is always feasible. Let $X_i, i = 1, \dots, n$ denote the $i$-th unit entropy source. We employ vector routing over $n$ time units. Source $s_i$ observes
$[X_i^{(1)} ~ \dots ~ X_i^{(n)}]$ symbols. Each edge $e$ in the
original graph $G$ is replaced by $n$ parallel edges, $e^1, e^2,
\dots, e^n$. Let $G_{\alpha}$ represent the subgraph of this graph
consisting of edges with superscript $\alpha$. It is evident that
max-flow($s_{\alpha} - t_{\alpha}$) = $n$ over $G_{\alpha}$. Thus,
we transmit $X_{\alpha}^{(1)}, \dots, X_{\alpha}^{(n)}$ over $G_{\alpha}$
using routing, for all $\alpha=1,\dots,n$. It is clear that this strategy satisfies the
demands of all the terminals.
%
In general, though a network with the above connectivity level may not be able to support a scalar routing solution.


\section{Network coding for three unicast sessions - Infeasible Instances}
\label{sec:netcod_infeas}
It is clear based on the discussion above that for three unicast sessions if the connectivity level is $[3 ~3 ~3]$, then a vector
routing solution always exists. We investigate counter-examples for certain connectivity levels in this section.
\begin{lemma}
\label{lemma:meagerness_cases}
There exist multiple unicast instances with three unicast sessions, $<G, \{s_i - t_i\}_{i=1}^3, \{1, 1, 1\}>$ such that the connectivity levels $[2~2~2]$ and $[1~1~3]$ are infeasible.
\end{lemma}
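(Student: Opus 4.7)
The plan is to exhibit two explicit counter-example instances and rule out feasibility by a cut-set / entropy argument that applies to any (possibly nonlinear, vector) network code.

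For the connectivity level $[2~2~2]$, I would construct a network with sources $s_1, s_2, s_3$, terminals $t_1, t_2, t_3$, and three internal nodes $a, b, c$. Each source $s_i$ has one edge into $a$ and one edge into $b$; nodes $a$ and $b$ each have a single outgoing edge to $c$; and $c$ has two parallel unit-capacity edges to each terminal $t_i$. First I would verify the connectivity claim: for each $i$ the two edge-disjoint paths $s_i \to a \to c \to t_i$ and $s_i \to b \to c \to t_i$ show the max-flow is at least $2$, and the cut isolating $\{s_i\}$ shows it is at most $2$. Then I would observe that the two edges $a \to c$ and $b \to c$ form an edge cut separating $\{s_1,s_2,s_3\}$ from $\{t_1,t_2,t_3\}$. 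Every symbol seen at any terminal is a deterministic function of the two symbols crossing this cut, whose joint entropy is at most $2$; since the three unit-entropy independent sources have joint entropy $3$, no scheme can deliver $X_i$ to $t_i$ for every $i$.

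For the connectivity level $[1~1~3]$, I would take a graph consisting of two vertex-disjoint sub-networks. In the first, $s_1$ and $s_2$ each have a single edge into a node $a$, then a single edge $a \to b$, and then one edge from $b$ to each of $t_1$ and $t_2$. In the second, $s_3$ is joined to $t_3$ by three parallel unit-capacity edges. The max-flows are $1, 1, 3$ respectively. Infeasibility follows because, the two sub-networks being disjoint, the third session conveys nothing to $t_1$ or $t_2$. Both $t_1$ and $t_2$ receive only functions of the single symbol on $a \to b$; writing that symbol as $f(X_1,X_2)$, recoverability of $X_1$ at $t_1$ forces $f$ to be independent of $X_2$, while recoverability of $X_2$ at $t_2$ forces $f$ to be independent of $X_1$, so $f$ must be constant and neither source can be decoded.

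The main obstacle is making both infeasibility arguments genuinely robust against arbitrary (not just linear) codes operating over arbitrary block lengths. Both arguments are really cut-set / entropy bounds: in the $[2~2~2]$ instance the cut $\{a\to c, b\to c\}$ has capacity $2$ but the joint source entropy $H(X_1,X_2,X_3)=3$; in the $[1~1~3]$ instance the cut $\{a\to b\}$ separating $\{s_1,s_2\}$ from $\{t_1,t_2\}$ in the first sub-network has capacity $1$ but $H(X_1,X_2)=2$. Phrased in terms of $H(\cdot)$ rather than vector-space dimensions, each argument reduces to the elementary fact that a cut of capacity $k$ cannot carry more than $k$ symbols of information per network use, which is independent of the choice of encoding functions.
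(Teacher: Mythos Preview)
Your proposal is correct and follows essentially the same approach as the paper: exhibit an explicit graph for each connectivity level and rule out feasibility via a cut-set bound (a capacity-$2$ cut separating all sources from all terminals in the $[2~2~2]$ case, and a capacity-$1$ cut separating $\{s_1,s_2\}$ from $\{t_1,t_2\}$ in the $[1~1~3]$ case). The particular graphs you build differ in minor details from the paper's, but the argument is the same.
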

\begin{proof}
The examples are shown in Figs. \ref{fig:3sources2paths} and \ref{fig:counter311paths}. In Fig. \ref{fig:3sources2paths}, the cut specified by the set of nodes $\{s_1,s_2,s_3,v_1,v_2\}$ has a value of two, while it needs to support a sum rate of three. Similarly in Fig. \ref{fig:counter311paths}, the cut $\{s_1,s_2,v_1\}$ has a value of one, but needs to support a rate of two.
\vspace{-0.1in}
\begin{figure}[htbp]
\center{
\subfigure[]{\label{fig:3sources2paths}
\includegraphics[width=33mm,clip=false, viewport=50 55 150 190]{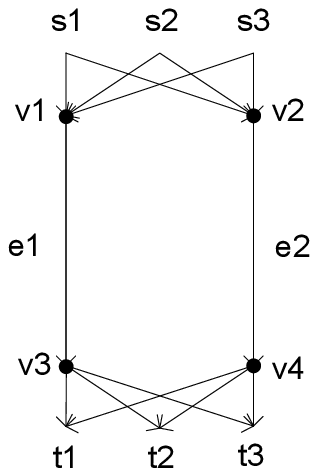}}
\hspace{0.4in}
\subfigure[]{\label{fig:counter311paths}
\includegraphics[width=40mm,clip=false, viewport=40 40 220 240]{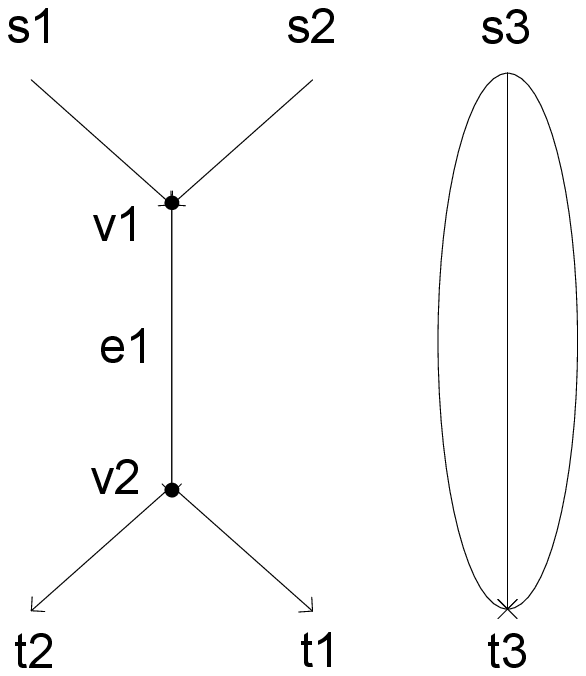}}
\caption{(a) An example of $[2~2~2]$ connectivity network without a network coding solution.
(b) An example of $[1~1~3]$ connectivity network without a network coding solution.}}
\end{figure}
\end{proof}
While the cutset bound is useful in the above cases, there exist certain connectivity levels for which a cut set bound is not tight enough.
We now present such an instance in Fig. \ref{fig:21counter}. 
This instance was also presented in 
\cite{feder09}, though the authors did not provide a formal proof of this fact. 

\begin{lemma}
\label{lemma:21counter}
There exists a multiple unicast instance, with two sessions $<G,\{s_1-t_1, s_2-t_2\}, \{2, 1\}>$ with connectivity level $[2~3]$ that is infeasible.
\end{lemma}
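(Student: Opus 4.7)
The plan is to proceed by contradiction using a purely information-theoretic argument, since the cut-set bound is provably insufficient here (both the $s_1$-$t_1$ cut of value $2$ and the $s_2$-$t_2$ cut of value $3$ meet the rate demands). Let $X_{11}, X_{12}$ be the two independent unit-entropy sources at $s_1$ and $X_2$ the unit-entropy source at $s_2$, all mutually independent; set $X_1 = (X_{11}, X_{12})$. For each edge $e$, let $Y_e$ denote the (vector) random variable carried on $e$; by unit capacity, $H(Y_e) \le 1$, and $Y_e$ is a deterministic function of $(X_1, X_2)$ with no linearity assumed.

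First I would translate decodability into entropy equalities: $H(X_1 \mid \mathrm{In}(t_1)) = 0$ and $H(X_2 \mid \mathrm{In}(t_2)) = 0$, where $\mathrm{In}(\cdot)$ denotes the incoming-edge variables. For any $s_1$-$t_1$ cut $C_1$ of size $2$, the data-processing inequality combined with $H(X_1) = 2$ and $H(Y_{C_1}) \le 2$ forces $H(Y_{C_1}) = 2$ and, crucially, $H(Y_{C_1} \mid X_1) = 0$, so the two edge variables on $C_1$ are independent and jointly a function of $X_1$ alone. Symmetrically, for any $s_2$-$t_2$ cut $C_2$ of size $3$, the three edge variables on $C_2$ together determine $X_2$; by combining this with the upstream reachability of $X_1$ into $C_2$, I would argue that the edge on $C_2$ that is ``downstream of $s_2$ but not of $s_1$'' must be a function of $X_2$ only.

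The heart of the argument is a structural step in Fig.~\ref{fig:21counter}: I would identify a single intermediate edge $e^{\star}$ that belongs simultaneously to some min-cut $C_1$ separating $s_1$ from $t_1$ and to some min-cut $C_2$ separating $s_2$ from $t_2$, with the property that the other edges of $C_1$ lie strictly \emph{upstream} of $s_2$'s influence while the other edges of $C_2$ lie strictly \emph{upstream} of $s_1$'s influence. The $C_2$-side analysis then forces $H(Y_{e^{\star}} \mid X_2) = 0$, whereas the $C_1$-side analysis forces $I(Y_{e^{\star}}; X_1) \ge 1$ (since the other edge of $C_1$ can account for at most one unit of $H(X_1) = 2$). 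These two conclusions are incompatible because $X_1$ and $X_2$ are independent, giving the desired contradiction. Since the argument never invokes linearity or a field size, it rules out arbitrary (scalar or vector, linear or nonlinear) network coding schemes.

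The main obstacle I anticipate is the combinatorial bookkeeping needed to locate the correct pair of cuts $(C_1, C_2)$ and the shared edge $e^{\star}$ in the specific topology of Fig.~\ref{fig:21counter}, and to verify the ``upstream disjointness'' conditions that justify the two one-sided dependence statements. Once those are in place, the contradiction follows from standard Shannon inequalities (chain rule, data processing, and independence of sources) with no further structural input, mirroring the style of entropy arguments used in the two-unicast literature such as \cite{wangIT10} and \cite{Kamath11}.
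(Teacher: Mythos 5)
Your overall framing is right---the cut-set bound is indeed insufficient here, and the impossibility must come from Shannon-type inequalities exploiting the specific topology of Fig.~\ref{fig:21counter}---and your first step (the two incoming edges of $t_1$ must jointly be, up to $n\epsilon_n$, a deterministic function of $X_1$ with entropy $2$, hence nearly independent of $s_2$'s message) matches the paper's eqs.~(\ref{eq:cn})--(\ref{eq:diff}) in exact rather than asymptotic form. The gap is in what you call the heart of the argument. First, the implication you attribute to the $C_2$-side does not follow: knowing that a three-edge cut of total capacity $3$ determines a source of entropy $1$ places essentially no constraint on any individual edge of that cut, so it cannot force $H(Y_{e^\star}\mid X_2)=0$ for an edge $e^\star$ that is reachable from $s_1$; and if $e^\star$ is \emph{not} reachable from $s_1$, then $I(Y_{e^\star};X_1)=0$ holds trivially and it is your $C_1$-side conclusion that fails. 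Second, the structural configuration you posit does not exist in this network: both edges of the cut formed by the inputs of $t_1$ are reachable from $s_2$, all three inputs of $t_2$ are reachable from $s_1$, and the shared edges ($e_{12}$, $e_{21}$, $e_{22}$ in the paper's labeling) all carry mixtures of both sources. No local, single-edge contradiction of the form ``function of $X_2$ only, yet carries a unit of information about $X_1$'' is available.

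The paper's proof is instead a global counting argument whose crux your outline is missing. Because $Y_{22}$ is a function of $(X_3,Y_{21})$, and the pair $(Y_{12},Y_{22})$ seen by $t_1$ must both determine $(X_1,X_2)$ and be nearly independent of $X_3$, one shows $H(Y^n_{21},Y^n_{22})\le an+2n\epsilon_n$ (eq.~(\ref{eq:same})): two of $t_2$'s three inputs are forced to be mutually redundant. Combining this with $H(Y^n_{11},Y^n_{21},Y^n_{22}\mid X_3^n)\ge 2an-2n\epsilon_n$ (eq.~(\ref{eq:indep}); the three inputs of $t_2$ must still carry essentially all of $(X_1,X_2)$) yields $H(X_3^n\mid Y^n_{11},Y^n_{21},Y^n_{22})\ge an-4n\epsilon_n$, so $t_2$ cannot decode. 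If you want to repair your outline, the statement to aim for is not a property of a single shared edge but the joint redundancy of the pair $(Y_{21},Y_{22})$; establishing it requires the functional dependencies $Y_{22}=f(X_3,Y_{21})$, $Y_{21}=f(Y_{12},Y_{20})$, $Y_{20}=f(X_1,X_2)$ read off from the figure, and cannot be extracted from cut values and reachability alone.
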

\begin{proof}
The graph instance is shown in Fig. \ref{fig:21counter}. Assume that in $n$ time units, $s_1$ observes two vector sources $[X_1^{(1)} ~ \dots ~ X_1^{(n)}]$ and $[X_2^{(1)} ~ \dots ~ X_2^{(n)}]$, $s_2$ observes one vector source $[X_3^{(1)} ~ \dots ~ X_3^{(n)}]$. The sources are denoted as $X_1^n$, $X_2^n$ and $X_3^n$ and are independent. The $n$ symbols that are transmitted over edge $(i,j)$ are denoted by $Y_{ij}^n$.
Suppose that the alphabet of $X_i$ is $\mathcal{X}$. Since the entropy rates for the three sources are the same,
we assume $H(X_i)= \log|\mathcal{X}|=a$. Also, since we are interested in
the feasibility of the solution, we assume that the alphabet size of $Y_{ij}$
is also the same as $\mathcal{X}$, and $H(Y_{ij})\leq\log|\mathcal{X}|=a$ by the capacity constraint of the edge.
At terminal $t_1$ and $t_2$, from $Y^n_{11}$, $Y^n_{12}$, $Y^n_{21}$ and $Y^n_{22}$,
we estimate $X^n_1$, $X^n_2$ and $X^n_3$.
Let the estimate be denoted as $\widehat{X}^n_1$, $\widehat{X}^n_2$ and $\widehat{X}^n_3$.
Suppose that there exist network codes and decoding functions such that
$P((\widehat{X}^n_1,\widehat{X}^n_2)\neq (X_1^n,X_2^n))\rightarrow 0$ as $n\rightarrow \infty$.
\begin{figure}[t]
\begin{center}
\includegraphics[width=40mm,clip=false, viewport=10 25 188 246]{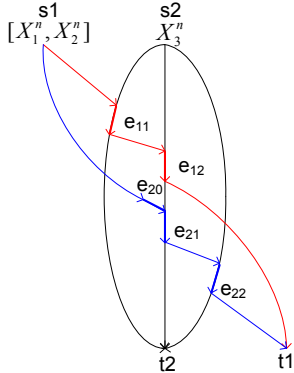}
\caption{An example of $[2~3]$ connectivity network, rate $\{2,1\}$
cannot be supported.} \label{fig:21counter}\centering
\end{center}
\end{figure}
For successful decoding at $t_1$, using Fano's inequality, we have
\begin{equation}
\label{eq:fano}
 H(X_1^n,X_2^n|\widehat{X}^n_1,\widehat{X}^n_2)\leq n\ep.
\end{equation}
where $n\ep=1+2nP_e\log(|\mathcal X|)$, $P_e=P((\widehat{X}^n_1,\widehat{X}^n_2)\neq (X_1^n,X_2^n))$ and $\ep \rightarrow 0$ as $n\rightarrow\infty$.
The topological structure of the network implies that $\widehat{X}^n_1,\widehat{X}^n_2$ are functions of $Y^n_{12}$ and $Y^n_{22}$. Hence, we have
\begin{equation}
\label{eq:fano2}
\begin{split}
H(X_1^n,X_2^n|Y^n_{12},Y^n_{22})&=H(X_1^n,X_2^n|\widehat{X}^n_1,\widehat{X}^n_2,Y^n_{12},Y^n_{22})\\
&\leq H(X_1^n,X_2^n|\widehat{X}^n_1,\widehat{X}^n_2)\leq n\ep.
\end{split}
\end{equation}
Since $H(Y^n_{12},Y^n_{22})\leq 2an$, using eq. (\ref{eq:fano2}) and the independence of $X^n_1$, $X^n_2$ and $X^n_3$, by Claim \ref{claim:endecode} (see Appendix), we have
\begin{align}
an-n\ep &\leq H(X_3^n|Y^n_{12},Y^n_{22})\leq an, \text{~and} \label{eq:cn}\\
&H(Y^n_{12},Y^n_{22}|X_3^n) \geq 2an-2n\ep. \label{eq:diff}
\end{align}
%
Next, we have
\begin{equation}
\label{eq:same}
\begin{split}
&H(Y^n_{21},Y^n_{22})\stackrel{_{(a)}}{=}H(X_3^n,Y^n_{21},Y^n_{22})-H(X_3^n|Y^n_{21},Y^n_{22})\\
&\stackrel{_{(b)}}{=}H(X_3^n,Y^n_{21})-H(X_3^n|Y^n_{21},Y^n_{22})\\
&\stackrel{_{(c)}}{\leq}2an-H(X_3^n|Y^n_{21},Y^n_{22},Y^n_{20},Y^n_{12},X_1^n,X_2^n)\\
&\stackrel{_{(d)}}{=}2an-H(X_3^n|Y^n_{22},Y^n_{20},Y^n_{12},X_1^n,X_2^n)\\
&\stackrel{_{(e)}}{=}2an-H(X_3^n|Y^n_{22},X_1^n,X_2^n,Y^n_{12})\\
&\stackrel{_{(f)}}{=}2an-H(X_3^n|Y^n_{22},Y^n_{12})+I(X_3^n;X_1^n,X_2^n|Y^n_{22},Y^n_{12})\\
&\leq 2an-H(X_3^n|Y^n_{22},Y^n_{12})+H(X_1^n,X_2^n|Y^n_{22},Y^n_{12})\\
&\stackrel{_{(g)}}{\leq}2an-an+n\ep+n\ep=an+2n\ep,
\end{split}
\end{equation}
where (a) follows from the chain rule, (b) holds because $Y^n_{22}$ is a
function of $X_3^n$ and $Y^n_{21}$, (c) follows from the capacity
constraints and the fact that conditioning reduces entropy, (d) follows as  $Y^n_{21}$ is a function of $Y^n_{12}$ and $Y^n_{20}$,
(e) is due to the fact that $Y^n_{20}$ is a function of $X_1^n$ and $X_2^n$, (f) 
follows from the definition of mutual information, and (g) is a consequence of
eq. (\ref{eq:fano2}) and eq. (\ref{eq:cn}). The above
inequalities indicate that $e_{21}$ and $e_{22}$ need to carry the
same information asymptotically for successful decoding at $t_1$.

From the network, we know that $Y^n_{12}$ is a function of
$Y^n_{11}$ and $X_3^n$. This implies that
\begin{equation}
\label{eq:indep}
\begin{split}
H(Y^n_{11},Y^n_{21},Y^n_{22}&|X_3^n)=H(Y^n_{11},Y^n_{21},Y^n_{22},X_3^n|X_3^n)\\
&\geq H(Y^n_{12},Y^n_{21},Y^n_{22}|X_3^n)\\
&\geq H(Y^n_{22},Y^n_{12}|X_3^n)\stackrel{_{(a)}}{\geq} 2an-2n\ep,
\end{split}
\end{equation}
where (a) is due to eq. (\ref{eq:diff}). Finally, we have
\begin{equation}
\begin{split}
&H(X_3^n|Y^n_{11},Y^n_{21},Y^n_{22})\\
&=H(Y^n_{11},Y^n_{21},Y^n_{22}|X_3^n)+H(X_3^n)-H(Y^n_{22},Y^n_{21},Y^n_{11})\\
&\stackrel{_{(a)}}{\geq}2an-2n\ep+an-H(Y^n_{22},Y^n_{21})-H(Y^n_{11}|Y^n_{22},Y^n_{21})\\
&\stackrel{_{(b)}}{\geq}3an-2n\ep-an-2n\ep-H(Y^n_{11}|Y^n_{22},Y^n_{21})\\
&\stackrel{_{(c)}}{\geq}2an-4n\ep-an=an-4n\ep,
\end{split}
\end{equation}
where (a) is due to eq. (\ref{eq:indep}), (b) is because of
eq. (\ref{eq:same}) and (c) holds because of the capacity constraint on $Y^n_{11}$. This implies that $t_2$ cannot decode $X_3^n$ with an asymptotically vanishing probability of error.
\end{proof}
\begin{corollary}
There exists a multiple unicast instance with three sessions, and connectivity level $[2~3~2]$ that is infeasible.
\end{corollary}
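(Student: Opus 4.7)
The plan is a direct reduction to Lemma \ref{lemma:21counter}. I would start from the graph $G$ in Fig.~\ref{fig:21counter} and reinterpret the rate-$2$ source $s_1$ as two collocated unit-rate sources, which I relabel as $s_1$ and $s_3$. Analogously, I would reinterpret the rate-$2$ terminal $t_1$ as two collocated unit-rate terminals $t_1$ and $t_3$. The third source/terminal pair $s_2 - t_2$ is left unchanged. This reinterpretation is precisely the collocation convention already used in Section~\ref{sec:pre} to cast higher-entropy sources into the unit-entropy framework, just applied in reverse.

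Next, I would verify the three ingredients needed for the corollary. First, the messages $X_1$ and $X_3$ at the collocated source node are independent unit-entropy sources, matching the two independent components comprising the original rate-$2$ message at $s_1$. Second, the maximum flows in $G$ are unchanged by relabeling endpoints, so $\text{max-flow}(s_1,t_1) = \text{max-flow}(s_3,t_3) = 2$ and $\text{max-flow}(s_2,t_2) = 3$, giving connectivity level $[2~3~2]$. Third, any source/terminal node with incoming/outgoing edges can be regularized by the standard artificial-node construction noted in the preliminaries, so the relabeled instance fits the model verbatim.

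The infeasibility then follows by contradiction. Suppose the three-session unit-rate instance admits a (possibly nonlinear, possibly vector) network coding solution with vanishing error probability, so that each $t_i$ recovers $X_i$. Because $s_1$ and $s_3$ sit at the same physical node (which therefore observes both $X_1$ and $X_3$), and because $t_1$ and $t_3$ sit at the same physical node (which therefore recovers both), the identical code, viewed under the original labeling, transmits the composite rate-$2$ message $(X_1, X_3)$ from the original $s_1$ to the original $t_1$ and the unit-rate message $X_2$ from $s_2$ to $t_2$. This is a valid coding scheme for the two-session instance of Lemma \ref{lemma:21counter} with rates $\{2,1\}$, contradicting that lemma.

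Since this is a reduction rather than a fresh information-theoretic argument, I do not anticipate a substantive obstacle; the only delicate points are the bookkeeping checks above, namely that the collocation preserves the max-flow values, that the two unit-entropy sources at the merged source node can be declared independent without altering the graph, and that the paper's model genuinely permits (or is trivially extendable to) colocated source and terminal nodes for distinct sessions.
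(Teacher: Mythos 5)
Your proposal is correct and follows essentially the same route as the paper: the paper likewise takes the graph of Fig.~\ref{fig:21counter}, collocates two unit-rate sources (terminals) at $s_1$ ($t_1$), notes the resulting connectivity level $[2~3~2]$, and invokes Lemma~\ref{lemma:21counter} to conclude infeasibility. Your additional bookkeeping (independence of the split messages, preservation of max-flows, and the explicit reduction of a three-session solution back to a $\{2,1\}$-rate two-session solution) is exactly the content the paper leaves implicit.
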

\begin{proof}
Consider the instance $<G,\{s'_i-t'_i\}^3_1,\{1,1,1\}>$, where $G$ is the graph in Fig. \ref{fig:21counter}.
The sources $s'_1$ and $s_3'$ are collocated at $s_1$ (in $G$), and the terminals $t_1'$ and $t_3'$ are collocated
at $t_1$ (in $G$). Likewise, the source $s_2'$ and terminal $t_2'$ are located at $s_2$ and $t_2$ in $G$.
The three sessions have connectivity level $[2~3~2]$. Based on the arguments in Lemma \ref{lemma:21counter},
there is no feasible solution for this instance.
\end{proof}

The previous example can be generalized to an instance with two unicast sessions with connectivity level $[n_1~n_2]$ that cannot support rates $R_1 = n_1, R_2 = n_2 - 3n_1/2+1$ when $n_2\geq 3n_1/2$ and $n_1>1$.
\begin{theorem}
\label{lemma:counterforn} For a directed acyclic graph $G$ with
two $s-t$ pairs, if the connectivity level for $(s_1,t_1)$ is
$n_1$, for $(s_2,t_2)$ is $n_2$, where $n_2\geq 3n_1/2$ and $n_1>1$, there exist instances
that cannot support $R_1=n_1$ and $R_2=n_2-3n_1/2+1$.
\end{theorem}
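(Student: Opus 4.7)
I plan to build a parametric counter-example $G_{n_1,n_2}$ by gluing $n_1/2$ edge-disjoint copies of the $[2~3]$ gadget of Fig.~\ref{fig:21counter} at common sources $s_1,s_2$ and common terminals $t_1,t_2$, and then appending $n_2 - 3n_1/2$ additional edge-disjoint $s_2$-to-$t_2$ paths that avoid all gadgets. Each full gadget contributes $2$ to the $s_1$-$t_1$ max-flow and $3$ to the $s_2$-$t_2$ max-flow, and the free paths contribute the remaining $n_2 - 3n_1/2$ to the $s_2$-$t_2$ max-flow, yielding the prescribed connectivity level $[n_1~n_2]$. The construction is cleanest for even $n_1$; the odd-$n_1$ case can be handled by replacing one full gadget with a pruned variant contributing $1$ to the $s_1$-$t_1$ connectivity and the matching $s_2$-rate loss.

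Given any code supposedly achieving $R_1 = n_1$, I would first apply a cut argument to each gadget: every $s_1$-$t_1$ path must pass through some gadget, and each gadget admits at most $2$ edge-disjoint $s_1$-$t_1$ paths, so achieving $R_1 = 2\cdot(n_1/2)$ pins the $s_1$-rate through each copy to exactly $2$. Hence within each copy $\ell$ one can identify two independent $s_1$-symbols $(X_{1,\ell},X_{2,\ell})$ that $t_1$ must reconstruct from copy-$\ell$ edges alone, reproducing the hypothesis of Lemma~\ref{lemma:21counter} locally.

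The heart of the argument is a per-copy replay of the Fano chain from Lemma~\ref{lemma:21counter}. Because the topological dependencies driving steps (b)--(f) of eq.~(\ref{eq:same}) and the independence relation eq.~(\ref{eq:indep}) are preserved inside each copy by construction, and the per-edge capacities are unchanged, I can bundle everything the encoder sees other than $(X_{1,\ell}^n,X_{2,\ell}^n)$---namely the $s_2$-message together with the source and edge variables of all other copies and the free paths---into an auxiliary $\tilde X_{3,\ell}^n$ playing the role of $X_3^n$, and obtain $H(\tilde X_{3,\ell}^n \mid \text{copy-}\ell\text{ edges at }t_2) \geq an - O(n\ep)$. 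Summing across the $n_1/2$ copies and adding the trivial $n_2 - 3n_1/2$ rate routed over the free paths gives $R_2 \leq n_2 - 3n_1/2$, contradicting the supposed $R_2 = n_2 - 3n_1/2 + 1$.

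The main obstacle I anticipate is precisely this ``isolation'' step: $s_2$ may spread a single long codeword jointly across all $n_1/2$ gadgets and the free paths, and the per-copy inequalities must not double-count shared randomness across copies. I expect the cleanest resolution to be a careful per-copy choice of the auxiliary variable---conditioning on the forced $s_1$-pair $(X_{1,\ell},X_{2,\ell})$ and on the outputs of all non-$\ell$ copies and free paths as observed at $t_2$---so that the per-copy chain bounds the \emph{incremental} $s_2$-information delivered to $t_2$ by copy $\ell$ alone. Once isolation is in hand, the remainder is an elementary summation combined with the routing capacity bound on the $n_2 - 3n_1/2$ free paths.
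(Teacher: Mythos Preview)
The paper defers its proof to supplementary documentation, so a line-by-line comparison is impossible; the following assesses your plan on its merits and against the most natural extension of Lemma~\ref{lemma:21counter}.

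Your construction---$n_1/2$ edge-disjoint copies of the $[2~3]$ gadget glued at $s_1,s_2,t_1,t_2$, together with $n_2-3n_1/2$ free $s_2$--$t_2$ paths---is the right instance and yields the stated connectivity level. The infeasibility argument, however, takes an unnecessary detour. The ``isolation'' step you flag as the main obstacle can be avoided entirely: rather than analyzing each copy separately and then summing, run the Fano chain of Lemma~\ref{lemma:21counter} \emph{once}, collectively, on the vectorized edge variables $\bar Y_{ij}=(Y_{ij,1},\ldots,Y_{ij,n_1/2})$. Every functional dependence used there (e.g.\ $Y_{22}$ a function of $X_3$ and $Y_{21}$; $Y_{20}$ a function of the $s_1$-message) holds coordinate-wise, hence for the vectors, with the single-edge bound $an$ replaced by $(n_1/2)\,an$ and the pair $(X_1,X_2)$ replaced by the full $s_1$-message of entropy $n_1\,an$. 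Claim~\ref{claim:endecode} applies verbatim to yield the analogues of (\ref{eq:cn})--(\ref{eq:diff}); the chain (\ref{eq:same}) gives $H(\bar Y_{21},\bar Y_{22})\le (n_1/2)\,an+2n\epsilon_n$, and the final step gives $H(Z\mid \bar Y_{11},\bar Y_{21},\bar Y_{22})\ge H(Z)-4n\epsilon_n$. Thus all the gadgets together leak only $o(1)$ rate about $Z$ to $t_2$; appending the $m=n_2-3n_1/2$ free paths forces $R_2\le m$. No per-copy conditioning, no double-counting issue.

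Your per-copy route may be completable, but one step as written is not correct: ``within each copy $\ell$ one can identify two independent $s_1$-symbols $(X_{1,\ell},X_{2,\ell})$ that $t_1$ must reconstruct from copy-$\ell$ edges alone'' does not follow from the cut argument---$t_1$ decodes the $n_1$ symbols jointly, and the encoder may spread each $s_1$-symbol across all copies. The repair via conditioning on all other copies' $t_1$- and $t_2$-observations is precisely the complication that the collective argument sidesteps.
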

\begin{proof} Provided in the supplementary documentation.
\end{proof}

\section{Network coding for three unicast sessions - Feasible Instances}
\label{sec:netcod_feas}
It is evident that there exist instances with connectivity level $[2~2~3]$ (and component-wise lower) that are infeasible. Therefore, the possible instances that are potentially feasible are $[1~3~3]$ and $[1~2~4]$, or their permutations and connectivity levels that are greater than them. 
In the discussion below, we show that all the instances with the connectivity levels $[1~3~3]$, $[2~2~4]$ and $[1~2~5]$ are feasible using linear network codes. Our work leaves out one specific connectivity level vector, namely $[1~2~4]$ for which we have been unable to provide either a feasible network code or a network topology where communicating at unit rate is impossible.

As pointed out by the work of \cite{rm}, under linear network coding, the case of multiple unicast requires (a) the transfer matrix for each source-terminal pair to have a rank that is high enough, and (b) the interference at each terminal to be zero. Under random linear network coding, it is possible to assert that the rank of any given transfer matrix from a source $s_i$ to a terminal $t_j$ has w.h.p. a rank equal to the minimum cut between $s_i$ and $t_j$; however, in general this is problematic for satisfying the zero-interference condition. 

Our strategies rely on a combination of graph-theoretic and algebraic methods. Specifically, starting with the connectivity level of the graph, we use graph theoretic ideas to argue that the transfer matrices of the different terminals have certain relationships. The identified relationships then allow us to assert that suitable precoding matrices that allow each terminal to be satisfied can be found. A combination of graph-theoretic and algebraic ideas were also used in the work of \cite{ramamoorthyL12_jnl}, where the problem of multicasting finite field sums over wired networks was considered. However, there are some crucial differences. Reference \cite{ramamoorthyL12_jnl} considered a multicast situation; thus, the issue of dealing with interference did not exist. As will be evident, a large part of the effort in the current work is to demonstrate that the terminals can decode their intended message in the presence of the interfering messages.

We begin with the following definitions.
\begin{definition} {\it Minimality.} Consider a multiple unicast instance $<G = (V,E), \{s_i - t_i\}_{1}^{n}, \{1,~ \dots, ~1\}>$, with connectivity level $[k_1~k_2~\dots~ k_n]$. The graph $G$ is said to be minimal if the removal of any edge from $E$ reduces the connectivity level. If $G$ is minimal, we will also refer to the multiple unicast instance as minimal.
\end{definition}
Clearly, given a non-minimal instance $G = (V,E)$, we can always remove the non-essential edges from it, to obtain the minimal graph $G_{\min}$. This does not affect connectivity. A network code for $G_{\min} = (V, E_{\min})$ can be converted into a network code for $G$ by simply assigning the zero coding vector to the edges in $E \backslash E_{\min}$.
\begin{definition}
{\it Overlap edge.} An edge $e$ is said to be an overlap edge for paths $P_i$ and $P_j$ in $G$, if $e \in P_i \cap P_j$. 
\end{definition}
\begin{definition} {\it Overlap segment.} Consider a set of edges $E_{os} = \{e_1, \dots, e_l\} \subset E$ that forms a path. This path is called an overlap segment for paths $P_i$ and $P_j$ if
\begin{itemize}
\item[(i)] $\forall k \in\{1, \dots, l\}$, $e_k$ is an overlap edge for $P_i$ and $P_j$,
\item[(ii)] none of the incoming edges into tail($e_1$) are overlap edges for $P_i$ and $P_j$, and
\item[(iii)] none of the outgoing edges leaving head($e_l$) are overlap edges for $P_i$ and $P_j$.
\end{itemize}
\end{definition}
Our solution strategy is as follows. We first convert the original instance into another {\it structured} instance where each internal node has at most degree three (in-degree + out-degree). We then convert this new instance into a minimal one, and develop the network code assignment algorithm. This network code, can be converted into a network code for the original instance.

Following \cite{LSB06} we can efficiently construct a {\em structured} graph $\hat{G}=(\hat{V},\hat{E})$ in which each internal node $v \in \hV$ is of total degree at most three with the following properties.
\begin{itemize}
\item[(a)] $\hG$ is acyclic.
\item[(b)] For every source (terminal) in $G$ there is a corresponding source (terminal) in $\hG$.
\item[(c)] For any two edge disjoint paths $P_i$ and $P_j$ for one unicast session in $G$, there exist two {\em vertex} disjoint paths in $\hG$ for the corresponding session in $\hG$.
\item[(d)] Any feasible network coding solution in $\hG$ can be efficiently turned into a feasible network coding solution in $G$.
\end{itemize}
In all the discussions below, we will assume that the graph $G$ is structured. It is clear that this is w.l.o.g. based on the previous arguments.
\subsection{Code Assignment Procedure For Instances With Connectivity Level $[1~3~3]$}
We begin by showing some basic results for two-unicast. The three unicast result follows by applying vector network coding over two time units and using the two-unicast results.

\begin{lemma}
\label{lemma:uneven_two_unicast}
A minimal multiple unicast instance $<G,\{s_1 - t_1, s_2 - t_2\}, \{1, m\}>$ with connectivity level $[1~m+1]$ is always feasible.
\end{lemma}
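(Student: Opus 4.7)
The plan is to apply random linear network coding throughout the minimal structured graph and then introduce a carefully chosen precoding matrix at the high-rate source $s_2$. The intuition is that $s_2$ has $m+1$ edge-disjoint paths to $t_2$ but only $m$ independent messages, so exactly one dimension's worth of ``slack'' is available, and I will use that slack to zero out the interference $s_2$ creates at $t_1$.

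First, I reduce to the minimal structured setting: by minimality of $G$, $s_1$ has a single outgoing edge, $t_1$ a single incoming edge, and $s_2$ (resp.\ $t_2$) exactly $m+1$ outgoing (resp.\ incoming) edges. Drawing all internal coding coefficients independently and uniformly at random from a sufficiently large field $\mathbb{F}_q$, I write the end-to-end transfer as
\[
\begin{pmatrix} Y_{t_1} \\ \mathbf{Y}_{t_2} \end{pmatrix} = \begin{pmatrix} \alpha & \mathbf{h}^T \\ \mathbf{g} & M \end{pmatrix} \begin{pmatrix} X_1 \\ \mathbf{Y}_{s_2} \end{pmatrix},
\]
where $\mathbf{Y}_{s_2}\in\mathbb{F}_q^{m+1}$ is the vector of signals emitted on $s_2$'s outgoing edges, $\mathbf{h},\mathbf{g}\in\mathbb{F}_q^{m+1}$ are transfer vectors, and $M\in\mathbb{F}_q^{(m+1)\times(m+1)}$. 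I next set $\mathbf{Y}_{s_2}=V\mathbf{X}_2$ for an $(m+1)\times m$ matrix $V$ of rank $m$ whose columns are chosen to lie in $\ker \mathbf{h}^T$; since $\dim\ker\mathbf{h}^T\ge m$ in all cases, such a $V$ always exists.

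With this precoder, $t_1$ receives $\alpha X_1$ and decodes $X_1$ once $\alpha\neq 0$, while $t_2$ receives $\mathbf{g} X_1 + MV\mathbf{X}_2$ and decodes $\mathbf{X}_2$ provided $MV$ has rank $m$ and $\mathbf{g}\notin\operatorname{col}(MV)$. Standard arguments (exhibiting a monomial corresponding to the unique $s_1$--$t_1$ path, respectively to the $m+1$ edge-disjoint $s_2$--$t_2$ paths guaranteed by the connectivity level) show that $\alpha$ and $\det M$ are not identically zero as polynomials in the coding variables, so $\alpha\neq 0$ and $M$ invertible hold generically. Using $\operatorname{col}(MV)=M\cdot\ker\mathbf{h}^T$ and invertibility of $M$, the condition $\mathbf{g}\notin\operatorname{col}(MV)$ simplifies to $\mathbf{h}^T M^{-1}\mathbf{g}\neq 0$. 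When $\mathbf{h}\equiv 0$ or $\mathbf{g}\equiv 0$ the problem essentially decouples and decodability is immediate (in the $\mathbf{h}\equiv 0$ case, $V$ can be chosen freely with $M^{-1}\mathbf{g}\notin\operatorname{col}(V)$), so the interesting case is when both $\mathbf{h}$ and $\mathbf{g}$ are nonzero polynomials.

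The main obstacle is thus to show that the polynomial $\mathbf{h}^T\operatorname{adj}(M)\mathbf{g}$ in the random coding variables is not identically zero whenever both $\mathbf{h}$ and $\mathbf{g}$ are nonzero polynomials. Nonvanishing of $\mathbf{h}$ (resp.\ $\mathbf{g}$) is equivalent to the existence of a path in $G$ from $s_2$ to $t_1$ (resp.\ from $s_1$ to $t_2$), and by minimality together with the structured property any such path must traverse an overlap segment between the $s_1$--$t_1$ path $P_0$ and some $s_2$--$t_2$ path $Q_j$. My plan is to exhibit an explicit coefficient assignment that routes signals along these overlap-induced $s_2$--$t_1$ and $s_1$--$t_2$ routes while keeping the remaining $Q_i$'s in pure routing mode, producing a single non-cancelling term in the cofactor expansion $\sum_{i,j} h_i g_j C_{ji}(M)$. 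This witnesses $\mathbf{h}^T\operatorname{adj}(M)\mathbf{g}\not\equiv 0$, after which the Schwartz--Zippel lemma yields $\mathbf{h}^T M^{-1}\mathbf{g}\neq 0$ with high probability over a sufficiently large $\mathbb{F}_q$, establishing feasibility.
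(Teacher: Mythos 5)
Your overall architecture — random linear coding everywhere, a rank-$m$ precoder $V$ at $s_2$ whose columns span $\ker \mathbf{h}^T$, decoding at $t_1$ via $\alpha\neq 0$ and at $t_2$ via $\mathrm{rank}(MV)=m$ together with $\mathbf{g}\notin\mathrm{col}(MV)$ — matches the paper's strategy, and your algebraic reduction of the $t_2$ condition to $\mathbf{h}^T\operatorname{adj}(M)\,\mathbf{g}\neq 0$ is correct. But the proof stops exactly at the step that carries all the difficulty: you assert that $\mathbf{h}^T\operatorname{adj}(M)\,\mathbf{g}$ is not identically zero whenever $\mathbf{h}\not\equiv 0$ and $\mathbf{g}\not\equiv 0$, and offer only a ``plan'' to exhibit a witnessing coefficient assignment. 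That plan is not obviously executable as stated: every $s_2\!\to\!t_1$ path and every $s_1\!\to\!t_2$ path must pass through overlap segments of $P_{11}$ with the $P_{2k}$'s, so the witness paths for $h_i$ and $g_j$ and the ``pure routing'' choices for the remaining cofactor $C_{ji}$ are forced to share edges; several pairs $(i,j)$ can then contribute the same monomial to $\sum_{i,j}h_ig_jC_{ji}$ with signs that could cancel, and nothing in your sketch rules this out. Recall also that for nearby parameters (connectivity $[2~3]$ with rates $\{2,1\}$, Lemma~\ref{lemma:21counter}) the analogous interference-cancellation fails, so the claim genuinely depends on topology and cannot be waved through.

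The missing ingredient is the topological observation the paper uses to close precisely this gap: in the minimal structured graph, the overlap segment $E_{os1}$ on $P_{11}$ closest to $t_1$ (say it lies on $P_{21}$) is also the overlap segment on $P_{21}$ closest to $t_2$ (otherwise the graph has a cycle), and downstream of $E_{os1}$ both $P_{11}$ and $P_{21}$ only forward. Hence the coding vector $[\alpha~\,\mathbf{h}^T]$ received by $t_1$ is a nonzero scalar multiple of the first row of $[\mathbf{g}~\,M]$. In your notation this gives $\mathbf{h}^T M^{-1} = c\, e_1^T$ and therefore $\mathbf{h}^T\operatorname{adj}(M)\,\mathbf{g} = \alpha\det(M)$, which is not identically zero by the cut conditions — no cofactor bookkeeping needed. (The paper exploits the same fact more operationally: after precoding, $t_2$'s first input carries $\alpha X_1$ outright, so $t_2$ recovers $X_1$ and subtracts it.) With that observation inserted, your argument closes; without it, the central claim of the lemma is unproven.
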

\begin{proof}
Denote the path from $s_1$ to $t_1$ as $\mathcal P_{1}=\{P_{11}\}$, and the $m+1$ paths from $s_2$ to $t_2$ as $\mathcal P_{2}=\{P_{21},\dots,P_{2m+1}\}$. The information that needs to be transmitted from $s_1$ is $X_1$, and the information that needs to be transmitted from $s_2$ is $X_{21},\dots, X_{2m}$. We assume that $P_{11}$ overlaps with all paths in $\mathcal P_2$. Otherwise, if $P_{11}$ overlaps with $n$ paths in $\mathcal P_2$ where $0\leq n<m+1$, w.l.o.g, assume they are $P_{21},\dots, P_{2n}$. Then $X_{2n},\dots,X_{2m}$ can be simply transmitted over the overlap free paths $P_{2n+1}, \dots, P_{2m+1}$, and the problem reduces to communicating $X_1$ and $X_{21},\dots, X_{2n-1}$ over $P_{11} \cup P_{21} \cup \dots \cup P_{2n}$, which corresponds to the statement of the theorem with $m$ replaced by $n-1$. Hence, we focus on the case that $P_{11}$ overlaps with all paths in $\mathcal P_2$.

We assume that the local coding vectors for each edge are indeterminates for now. Source $s_2$ uses a precoding matrix $\Theta$; the rows of $\Theta$ specify the coding vectors on the outgoing edges of $s_2$. The choice of the local coding vectors and $\Theta$ is discussed below. The transmitted symbol on the outgoing edge from $s_2$ belonging to $P_{2i}$ is $[\theta_{i1}~\cdots~\theta_{im}][X_{21}~\cdots~ X_{2m}]^T$ where $i=1,~\dots~,m+1$. Let $\underline{\theta}_j=[\theta_{1j}~\cdots~\theta_{(m+1)j}]^T$ where $j=1,\dots,m$. 

As $P_{11}$ overlaps with all paths on $\mathcal P_{2}$, there will be many overlap segments on $P_{11}$. Let $E_{os1}$ denote the overlap segment that is closest to $t_1$ (under the topological order imposed by the directed acyclic nature of the graph) along $P_{11}$ and suppose that it is on $P_{21}$. A key observation is that $E_{os1}$ is also the overlap segment on $P_{21}$ that is closest to $t_2$. Indeed if there is another overlap segment $E_{os1}'$ that is closer to $t_2$ along $P_{21}$, then it implies the existence of a cycle in the graph. Let the coding vectors at each intermediate node be specified by indeterminates for now.

The overall transfer matrix from the pair of sources $\{s_1,s_2\}$ to $t_1$ can be expressed as
$$
[M_{11}~|~M_{12}]=[\alpha_1~|~\gamma_{11}~\cdots~\gamma_{1(m+1)}].
$$
Similarly, the transfer matrix from the pair of sources $\{s_1,s_2\}$ to $t_2$ can be expressed as
$$
[M_{21}~|~M_{22}]=\left[{\begin{array}{c|ccc}
\alpha_1&\gamma_{11}&\cdots&\gamma_{1(m+1)}\\
\alpha_2&\gamma_{21}&\cdots&\gamma_{2(m+1)}\\
\vdots&\vdots&\ddots&\vdots\\
\alpha_{m+1}&\gamma_{(m+1)1}&\cdots&\gamma_{(m+1)(m+1)}\end{array}}\right].
$$
The received vector at terminal $t_i$ is therefore $[M_{i1} ~|~ M_{i2}] \left[{\begin{array}{c} X_1 \\ \Theta [X_{21}\cdots X_{2m}]^T \end{array}}\right]$. The variables $\alpha_i's$ and $\gamma_{ij}'s$ in the above matrices depend on the indeterminate local coding vectors and are therefore undetermined at this point.

We emphasize that the first row of $[M_{21}~|~M_{22}]$ is the same as $[M_{11}~|~M_{12}]$. As there exists a single path between $s_1$ and $t_1$, it is clear that $\alpha_1$ is not identically zero. Similarly, as there are $m+1$ edge-disjoint paths between $s_2$ to $t_2$, we have that $\det(M_{22})$ is not identically zero. Now suppose that we employ random linear network coding at all nodes. Using the Schwartz-Zippel lemma \cite{motwaniR}, this implies that $\alpha_1\neq 0$ and $\det(M_{22}) \neq 0$ w.h.p. We assume that $\alpha_1\neq 0$ and $\det(M_{22}) \neq 0$ in the discussion below. Next we select $\theta_{ij}$, $i=1,\dots,m+1$, $j=1,\dots,m$ such that they satisfy the following equation.
\begin{align}
\label{eq:cons133}
M_{22}[\underline{\theta}_1~\cdots~\underline{\theta}_m]&=\left[{\begin{array}{ccc}
0&\cdots&0\\
a_1&\cdots&0\\
\vdots&\ddots&\vdots\\
0&\cdots&a_m\end{array}}\right]
\end{align}
where $a_1,\dots,a_m$ are non-zero values. Note that such $[\underline{\theta}_1~\cdots~\underline{\theta}_m]$ 
can be chosen since $M_{22}$ is full-rank.

Terminal $t_1$ can decode, since $M_{12}[\underline{\theta}_1~\cdots~\underline{\theta}_{m}]=[0\cdots0]$ and $\alpha_1\neq 0$, and $t_2$ can decode, since $X_1$ is available at $t_2$, and $rank(M_{22}[\underline{\theta}_1~\cdots~\underline{\theta}_m])=m$ (from eq. (\ref{eq:cons133})). Finally, we note that there are $q-1$ choices for each $\underline{\theta}_j$.
\end{proof}
We remark that the main issue in the above argument is to demonstrate that the choice of $\Theta$ works simultaneously for both $t_1$ and $t_2$. The observation that $E_{os1}$ is overlap segment closest to $t_1$ and $t_2$ along $P_{11}$ and $P_{21}$ respectively allows us to make this argument.

The result for three unicast sessions with connectivity level $[1~3~3]$ now follows by using vector linear network coding over two time units, as discussed below. 
\begin{theorem}
\label{thm:1_3_3_result}
A multiple unicast instance with three sessions, $<G,\{s_i-t_i\}^3_1,\{1,1,1\}>$ with connectivity level at least $[1 ~ 3 ~3]$ is feasible.
\end{theorem}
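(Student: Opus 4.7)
The plan is to leverage vector linear network coding over two time units so that the problem reduces to two disjoint applications of Lemma \ref{lemma:uneven_two_unicast} (with $m=2$). Over $T=2$ time units, each $s_i$ needs to deliver the vector $[X_i^{(1)}~X_i^{(2)}]$ to $t_i$. I would first restrict attention to a subgraph that realizes exactly the connectivity $[1~3~3]$; any additional connectivity beyond this can simply be ignored by assigning the zero coding vector on non-essential edges, so without loss of generality the connectivity is exactly $[1~3~3]$.

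Next I would design the code in each time unit independently, using the two time copies of $G$ in the time-expanded graph so that coding choices in one copy do not affect the other. In time unit 1, I set $s_3$ to silent (its outgoing edges carry the zero signal) and treat the remaining instance as a two-unicast problem on the pair $(s_1-t_1,\, s_2-t_2)$. Since silencing $s_3$ does not remove any edges or reduce any cut between $s_1,s_2$ and their terminals, this induced two-unicast instance has connectivity at least $[1~3]$. I then invoke Lemma \ref{lemma:uneven_two_unicast} with $m=2$ to obtain a scheme that delivers $X_1^{(1)}$ to $t_1$ and the two symbols $X_2^{(1)},X_2^{(2)}$ to $t_2$ in this time unit. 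In time unit 2, I symmetrically silence $s_2$ and apply Lemma \ref{lemma:uneven_two_unicast} with $m=2$ to the pair $(s_1-t_1,\, s_3-t_3)$, delivering $X_1^{(2)}$ and $X_3^{(1)},X_3^{(2)}$.

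Putting the two time units together, $t_1$ receives $X_1^{(1)}$ from time 1 and $X_1^{(2)}$ from time 2; $t_2$ receives both $X_2^{(1)},X_2^{(2)}$ from time 1 and gets nothing useful in time 2 (but needs nothing more); and $t_3$ is handled analogously. Each terminal recovers its full 2-symbol vector, so the scheme achieves unit rate per session per time unit, establishing feasibility.

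The main obstacle I anticipate is justifying that silencing one source in a given time unit genuinely preserves the hypothesis of Lemma \ref{lemma:uneven_two_unicast} for the other two sessions. This requires observing that silencing $s_3$ corresponds merely to fixing its global coding contribution to $0$ in that time copy, which leaves all edge capacities and all $s_1 \to t_1$, $s_2 \to t_2$ cuts intact (since the outgoing edges of $s_3$ do not belong to any $s_1-t_1$ or $s_2-t_2$ path), so the random linear coding argument inside Lemma \ref{lemma:uneven_two_unicast} goes through unaltered. A minor secondary point is to confirm that the two time copies do not interact—this is automatic in the time-expanded graph construction, where the copy-1 and copy-2 edges form edge-disjoint parallel subgraphs.
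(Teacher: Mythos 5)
Your proposal is correct and follows essentially the same route as the paper: both use vector coding over two time units, pair the single $s_1$--$t_1$ path with one of the connectivity-3 sessions in each time copy, and invoke Lemma \ref{lemma:uneven_two_unicast} with $m=2$ on each copy independently. The paper phrases the two time units as two edge-disjoint parallel copies $G^*_1$ and $G^*_2$ of the graph rather than as ``silencing'' a source, but this is only a cosmetic difference.
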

\begin{proof}
W.l.o.g. we assume that the connectivity level is exactly $[1~3~3]$. We use vector linear network coding over two time units. For facilitating the presentation we form a new graph $G^*$ where each edge $e \in E$ is replaced by two parallel unit capacity edges $e^1$ and $e^2$ in $G^*$. The messages at source node $s_i$ are denoted $[X_{i1}~X_{i2}], i = 1, \dots, 3$. Let the subgraph of $G^*$ induced by all edges with superscript $i$ be denoted $G^*_i$. In $G^*_1$, there exists a single $s_1 - t_1$ path and three edge disjoint $s_2 - t_2$ paths. Therefore, we can transmit $X_{11}$ from $s_1$ to $t_1$ and $[X_{21}~X_{22}]$ from $s_2$ to $t_2$ using the result of Lemma \ref{lemma:uneven_two_unicast}. Similarly, we use $G_2^*$ to communicate $X_{12}$ from $s_1$ to $t_1$ and $[X_{31}~X_{32}]$ from $s_3$ to $t_3$. Thus, over two time units a rate of $[1~1~1]$ can be supported.
\end{proof}


\vspace{-0.1in}
\subsection{Code Assignment Procedure For Instances With Connectivity Level $[2~2~4]$}
Our solution approach is similar in spirit to the discussion above. In particular, we first investigate a two-unicast scenario with connectivity level $[2~4]$ and rate requirement $\{2,1\}$ and use that in conjunction with vector network coding to address the three-unicast with connectivity level $[2~2~4]$. 
\begin{lemma}
\label{lemma:24}
A minimal multiple unicast instance $<G,\{s_1 - t_1, s_2 - t_2\}, \{2,1\}>$ with connectivity level $[2~4]$ is feasible.
\end{lemma}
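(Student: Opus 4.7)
The plan is to generalize the approach of Lemma \ref{lemma:uneven_two_unicast}, now performing precoding at $s_2$ (the rate-1 source equipped with four edge-disjoint paths) so that interference at $t_1$ can be cancelled while $t_2$ remains able to decode its intended symbol despite interference from $s_1$. The overall solution will combine random linear network coding at the intermediate nodes with a carefully chosen precoding vector at $s_2$.

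First I would set up the transfer matrices. In the minimal structured graph, $s_1$ and $t_1$ have exactly two outgoing / incoming edges, while $s_2$ and $t_2$ have exactly four. Let $M_{11}$ ($2\times 2$) and $M_{21}$ ($4\times 2$) be the transfer matrices from the two outgoing edges of $s_1$ to $t_1$ and $t_2$, respectively, and let $M_{12}$ ($2\times 4$) and $M_{22}$ ($4\times 4$) be the corresponding matrices from the four outgoing edges of $s_2$. Source $s_1$ emits $X_{11}$ and $X_{12}$ on its two outgoing edges, while $s_2$ emits $\theta X_2$ on its four outgoing edges, where $\theta\in\mathbb{F}_q^4$ is to be chosen. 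The received vectors at the terminals are $M_{i1}[X_{11}\ X_{12}]^T + M_{i2}\theta X_2$, so the requirements are (i) $M_{11}$ invertible and $M_{12}\theta=0$ (so that $t_1$ can invert $M_{11}$ after interference cancellation), and (ii) the augmented $4\times 3$ matrix $[M_{21}\,|\,M_{22}\theta]$ has full column rank $3$ (so that $t_2$ can recover $X_2$).

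Using random linear network coding plus the Schwartz--Zippel lemma, the two square matrices $M_{11}$ and $M_{22}$ are non-singular with high probability, since both connectivity values ($2$ and $4$) match the matrix sizes. The nullspace $\ker(M_{12})\subseteq\mathbb{F}_q^4$ then has dimension at least $2$, and any $\theta$ in this nullspace satisfies the cancellation condition at $t_1$. The ``bad'' vectors for $t_2$ form the preimage $B:=M_{22}^{-1}(\mathrm{col}(M_{21}))$, a subspace of dimension at most $2$. The whole feasibility question reduces to showing that a $\theta \in \ker(M_{12})\setminus B$ exists.

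The main obstacle, which I expect to be the crux of the argument, is showing the non-containment $\ker(M_{12})\not\subseteq B$; this is where the graph structure has to be exploited, since in principle two equidimensional subspaces of $\mathbb{F}_q^4$ can coincide. Mimicking the overlap-segment analysis of Lemma \ref{lemma:uneven_two_unicast}, I would work in the minimal structured graph and, for each of the two $s_1$--$t_1$ paths, identify the overlap segment closest to $t_1$; acyclicity then forces each such segment to also be the terminal-side overlap segment on the corresponding $s_2$--$t_2$ path. Using this correspondence together with minimality (every edge is essential for some path), I would exhibit an explicit choice of the indeterminate local coding coefficients under which $\ker(M_{12})\not\subseteq B$. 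Since the containment $\ker(M_{12})\subseteq B$ is a polynomial condition in these indeterminates, one such witness suffices to conclude via Schwartz--Zippel that random coefficients satisfy the non-containment with high probability over a large field, whereupon a random $\theta\in\ker(M_{12})$ meets both decoding conditions w.h.p. A case analysis on how the two $s_1$ paths overlap with the four $s_2$ paths (in particular on which $s_2$ paths are overlap-free and can be reduced away at the outset, in analogy with the opening reduction of Lemma \ref{lemma:uneven_two_unicast}) may be needed, since different overlap patterns may require different witness constructions.
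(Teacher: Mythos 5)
Your algebraic framework is the right one and matches the paper's main case in spirit: random coding makes $M_{11}$ and the relevant $4\times 4$ matrix full rank, the precoding vector is drawn from $\ker(M_{12})$, and everything reduces to showing that this kernel is not swallowed by the ``bad'' subspace at $t_2$. But the proposal stops exactly where the proof begins. You correctly note that two $2$-dimensional subspaces of $\mathbb{F}_q^4$ can coincide and that only the graph structure can rule this out, yet the proposal only gestures at ``a case analysis \dots may be needed.'' The paper's proof supplies precisely that missing content: (i) a minimality claim that two neighboring mixed-colour overlap segments with respect to a path $P_{1i}$ cannot lie on the same path $P_{2j}$; (ii) an acyclicity dichotomy showing that the last overlap segment $E_r$ on $P_{11}$ or the last overlap segment $E_b$ on $P_{12}$ must also be the last overlap segment on its $\mathcal{P}_2$-path. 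It is this identification of rows of $t_2$'s effective transfer matrix with rows of $t_1$'s, combined with the full rank of the $4\times4$ matrix to the edge set $\{E_r,E_b,E_{os3},E_{os4}\}$, that yields the non-containment; nothing in your outline produces it.

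The more serious issue is that your scheme --- pure random linear coding plus precoding at $s_2$ --- is not what the paper uses in all cases, and the configuration it cannot handle directly is a real one. When $E_r$ and $E_b$ lie on the \emph{same} path of $\mathcal{P}_2$ (the paper's Case 2), the two precoding constraints null the $X_3$ component on both edges feeding $t_1$, and the last overlap segment that $t_2$ sees on that path is one of these nulled edges; the row-identification argument breaks, and the paper abandons random coding at $E_b$ and its downstream edges in favour of a deterministic assignment forcing $E_b$ to copy $E'_b$. So even granting your Schwartz--Zippel reduction (which itself needs care, since both $\ker(M_{12})$ and the bad set $B$ vary with the indeterminates, and ``containment'' is not literally the vanishing of a single polynomial), a generic witness need not exist in that configuration; detecting it and modifying the code is a substantial part of the lemma that your plan does not anticipate.
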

\begin{proof}
Let $\mathcal P_1=\{P_{11},P_{12}\}$ denote two edge disjoint paths (also vertex disjoint due to the structured nature of $G$) from $s_1$ to $t_1$ and $\mathcal P_2=\{P_{21}, P_{22}, P_{23}, P_{24}\}$ denote the four vertex disjoint paths from $s_2$ to $t_2$. Let the source messages at $s_1$ be denoted by $X_1$ and $X_2$, and the source message at $s_2$ by $X_3$. We color the edges of the graph such that each edge on $P_{11}$ is colored red, each edge on $P_{12}$ is colored blue and each edge on a path in $\mathcal{P}_2$ is colored black. 

As the paths in $\mathcal{P}_1$ and $\mathcal{P}_2$ are vertex-disjoint, it is clear that a node with an in-degree of two is such that its outgoing edge has two colors (either {\it (blue, black)} or {\it (red, black)}). The path further downstream continues to have two colors until it reaches a node of out-degree two.

Such an overlap segment with two colors will be referred to as a \emph{mixed color overlap segment}. We shall also use the terms \emph{red} or \emph{blue overlap segment} to refer to segments with colors {\it (red, black)} and {\it (blue, black)} respectively. Note that by our naming convention path $P_{ij}$ is a path that enters terminal $t_i$. Under the topological order in $G$ we can identify the overlap segment on $P_{ij}$ that is closest to $t_i$. In the discussion below this will be referred to as the last overlap segment with respect to path $P_{ij}$. Two overlap segments $E_{os1}$ and $E_{os2}$ are said to be neighboring with respect to $P_{ij}$ if there are no overlap segments between them along $P_{ij}$. An example of neighboring overlap segments is shown in Fig. \ref{fig:os}.


\begin{claim}
\label{claim:samepath}
Consider two neighboring mixed color overlap segments $E_{os1}$ and $E_{os2}$ with respect to path $P_{1i} \in \mathcal{P}_1$. Then $E_{os1}$ and $E_{os2}$ cannot lie on the same path $P_{2j} \in \mathcal{P}_2$.
\end{claim}
\emph{proof}:
W.l.o.g., assume that $E_{os1} = \{e_{1}, \dots, e_{k_1}\}$ and $E_{os2} = \{e'_{1}, \dots, e'_{k_2}\}$ are such that $e_{k_1}$ is upstream of $e'_{1}$. Now assume that both $E_{os1}$ and $E_{os2}$ are on $P_{2j}$. Note that $head(e_{k_1})$ has two outgoing edges, one of which belongs to $P_{1i}$ and the other belongs to $P_{2j}$ (denoted by $e^*$). We claim that $e^*$ can be removed while the connectivity level remains the same. This is because $e^*$ does not belong to $P_{1i}$ and $P_{2k}$, $\forall k\neq j$. 
Moreover, after the removal, $P_{2j}$ can be modified to the path specified as $path(s_2, head(e_{k_1})) - path(e_{k_1}, e'_{1}) - path(head(e'_{1}), t_2)$ where $path(e_{k_1}, e'_{k_2})$ is along $P_{1i}$. The new $P_{2j}$ is vertex disjoint of $P_{2k}$, $\forall k\neq j$, since $E_{os1}$ and $E_{os2}$ are neighboring mixed color overlap segments along $P_{1i}$ which means that $path(e_{k_1} - e'_1)$ is either purely blue or purely red. This contradicts the minimality of the graph.  \endproof

Likewise, two neighboring mixed color overlap segments 
with respect to $P_{2i}$, cannot lie on the same path $P_{1j}$. 


To explain our coding scheme, we first denote the last red (blue) overlap segment with respect to $P_{11}$ ($P_{12}$) by $E_{r}$ ($E_{b}$). If there is no $E_{r}$, then $X_1$ can be transmitted along $P_{11}$. According to Lemma \ref{lemma:uneven_two_unicast}, $X_2$ and $X_3$ can be transmitted to $t_1$ and $t_2$ respectively. A similar argument can be applied to the case when there is no $E_{b}$. Hence, we assume that both $E_{r}$ and $E_{b}$ exist. Based on their locations in $G$, we distinguish the following two cases.


\begin{figure}[t]
\center{
\subfigure[]{\label{fig:os}
\includegraphics[width=43mm,clip=false, viewport=60 20 330 350]{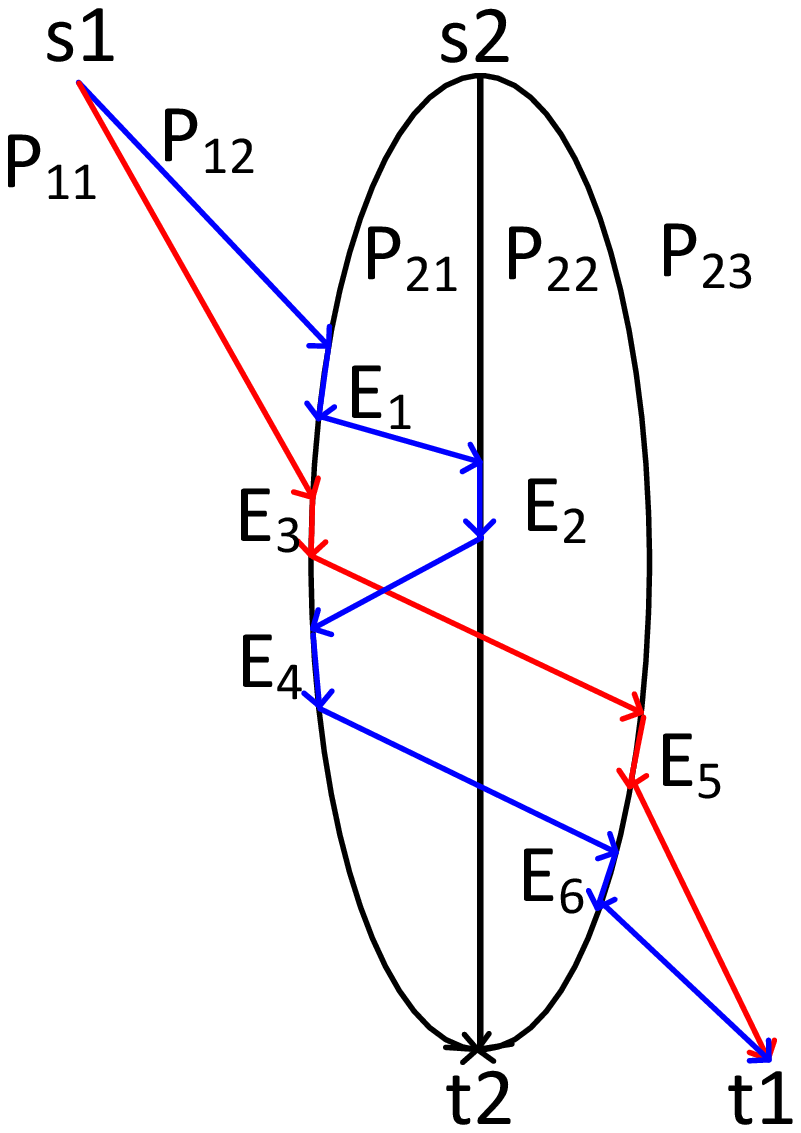}}
\subfigure[]{\label{fig:case24}
\includegraphics[width=42mm,clip=false, viewport=0 40 210 300]{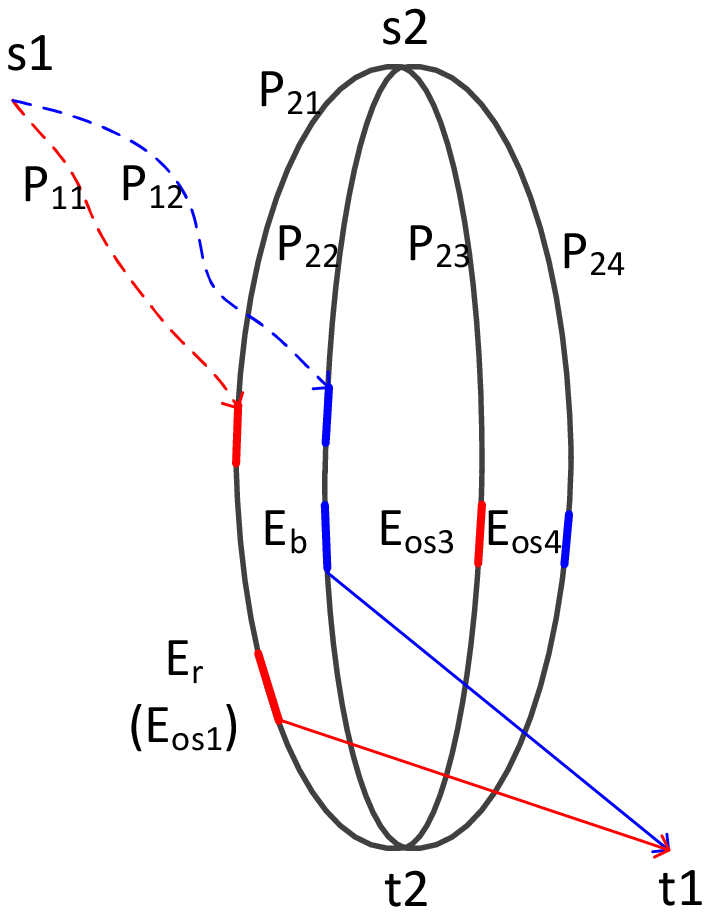}}
\caption{ (a) An instance of network where there are several pairs of neighboring overlap segments. $E_1$ and $E_3$ are neighboring overlap segments along $P_{21}$, $E_1$ and $E_2$ are neighboring overlap segments along $P_{12}$. $E_1$ and $E_4$ are not overlap segments along any paths. (b) A network with connectivity level $[2~4]$ and rate $\{2,1\}$. The coloring of the different paths helps us to show that a linear network coding solution exists.}}
\end{figure}

\begin{list}{}{\leftmargin=0.1in \labelwidth=0cm \labelsep = 0cm}
\item[$\bullet~$] \underline{{\it Case 1: $E_{r}$ and $E_{b}$ are on different paths $\in \mathcal{P}_2$.}}\\ W.l.o.g. we assume that $E_r$ and $E_b$ are on paths $P_{21}$ and $P_{22}$. If there are no mixed color overlap segments on either $P_{23}$ or $P_{24}$, $X_3$ can be transmitted to $t_2$ through the overlap free path, and $X_1, X_2$ can be routed to $t_1$. Therefore, we focus on the case that there are mixed color overlap segments on both $P_{23}$ and $P_{24}$. 
Let $E_{osi}$ denote the last mixed color overlap segments with respect to $P_{2i}$, $i=1,\dots,4$ (see Fig. \ref{fig:case24}).

Our coding scheme is as follows. Symbol $X_i$ is transmitted over the outgoing edge from $s_1$ over $P_{1i}, i =1,2$; symbols $\theta_j X_3$ are transmitted over the outgoing edges of $s_2$ over $P_{2j}$, $j=1,\dots,4$ respectively. The values of $\theta_j \in GF(q)$ will be chosen as part of the code assignment below. Let the coding vectors at each intermediate node be specified by indeterminates for now. The overall transfer matrix from the pair of sources $\{s_1,s_2\}$ to $t_1$ can be expressed as
$$
[M_{11}~|~M_{12}]=\left[{\begin{array}{cc|cccc}
\alpha_1&\beta_1&\gamma_{11}&\gamma_{12}&\gamma_{13}&\gamma_{14}\\
\alpha_2&\beta_2&\gamma_{21}&\gamma_{22}&\gamma_{23}&\gamma_{24}\end{array}}\right],
$$
such that the received vector at $t_1$ is $[M_{11}~|~M_{12}] [X_1 ~X_2 ~|~ \theta_1 X_3~ \dots ~\theta_4 X_3]^T$. Recall that $E_{r}$ and $E_{b}$ are the last mixed color segments with respect to $P_{11}$ and $P_{12}$. Thus, they carry the same information as the incoming edges of $t_1$ which implies that the row vectors of $[M_{11}~|~M_{12}]$ are the coding vectors on $E_{r}$ and $E_{b}$ respectively. Similarly, the transfer matrix from $\{s_1,s_2\}$ to the edge set $\{E_{r},E_{b},E_{os3},E_{os4}\}$ can be expressed as
$$
[M^{e}_{21}~|~M^{e}_{22}]=\left[{\begin{array}{cc|cccc}
\alpha_1&\beta_1&\gamma_{11}&\gamma_{12}&\gamma_{13}&\gamma_{14}\\
\alpha_2&\beta_2&\gamma_{21}&\gamma_{22}&\gamma_{23}&\gamma_{24}\\
\alpha_3&\beta_3&\gamma_{31}&\gamma_{32}&\gamma_{33}&\gamma_{34}\\
\alpha_4&\beta_4&\gamma_{41}&\gamma_{42}&\gamma_{43}&\gamma_{44}\\
\end{array}}\right]
$$
where we use the superscript $e$ to emphasize that these transfer matrices are to the edge set $\{E_{r},E_{b},E_{os3},E_{os4}\}$ and not to the terminal $t_2$.

Note that the entries of the transfer matrices above are functions of the choice of the local coding vectors in the network which are indeterminate. Thus, at this point, the $M_{ij}$ and $M^e_{ij}$ matrices are also composed of indeterminates.

As there exist two edge disjoint paths from $s_1$ to $\{E_{r},E_{b}\}$, the determinant of $M_{11}$
is not identically zero. Similarly, since the edges $E_{r}$, $E_{b}$, $E_{os3}$ and $E_{os4}$ lie on different paths in $\mathcal{P}_{2}$, there are four edge disjoint paths from $s_2$ to the edge subset $\{E_{r},E_{b},E_{os3},E_{os4}\}$, and the determinant of $M^e_{22}$
is not identically zero. 
This implies that their product is not identically zero. Hence, by the Schwartz-Zippel
lemma \cite{motwaniR}, under random linear network coding there exists an assignment of
local coding vectors so that  $rank(M_{11})=2$ and $rank(M^e_{22})=4$. We assume that the local coding vectors are chosen from a large enough field $GF(q)$ so that this is the case. For this choice of local coding vectors we propose a choice of $\underline{\theta} = [\theta_1~\theta_2~\theta_3~\theta_4]^T$ such that the decoding is simultaneously successful at both $t_1$ and $t_2$.

\noindent \underline{Decoding at $t_1$:} As $M_{11}$ is a square full-rank matrix, we only need to null the interference from $s_2$. Accordingly, we choose $\underline{\theta}$ from the null space of $M_{12}$, i.e.,
\vspace{-0.05in}
\begin{equation} M_{12} \underline{\theta} = 0. \end{equation} There are at least $q^2 - 1 $ such non-zero choices for $\underline{\theta}$ as $M_{12}$ is a $2 \times 4$ matrix.


\noindent \underline{Decoding at $t_2$:} The primary issue is that one needs to demonstrate that the choice of $\underline{\theta}$ allows both terminals to simultaneously decode. Indeed, it may be possible that our choice of $\underline{\theta}$ along with a specific network topology may make it impossible to decode at $t_2$. The key argument that this does not happen requires us to leverage certain topological properties of the overlap segments, that we present below.
\begin{claim}
In $G$ either one or both of the following statements hold. (i) $E_{r}$ is the last overlap segment w.r.t. $P_{21}$. (ii) $E_b$ is the last overlap segment w.r.t. $P_{22}$.
\end{claim}
\begin{proof}
Assume that neither statement is true. This means that there is a blue overlap segment $E'_{b}$ below $E_{r}$ along $P_{21}$, and there is a red overlap segment $E'_{r}$ below $E_{b}$ along $P_{22}$. Thus, $E'_{r}$ is upstream of $E_{r}$ and $E'_{b}$ is upstream of $E_{b}$. However, this means that edges $E'_{r}$, $E_{r}$, $E'_{b}$ and $E_{b}$ form a cycle, which is a contradiction.
\end{proof}
In the discussion below, w.l.o.g., we assume that $E_{r}$ is the last overlap segment on $P_{21}$. The argument above allows us to identify edges $E_{r}$, $E_{os3}$ and $E_{os4}$ that carry the {\it same symbols} as those entering $t_2$.
We show below that the $X_1$ and $X_2$ components can be canceled by using the information on $E_{os3}$ and $E_{os4}$ while retaining the $X_3$ component.

Let $\underline{\gamma}_i$ represent the vector $[\gamma_{i1}~\gamma_{i2}~\gamma_{i3}~\gamma_{i4}]^T, i = 1, \dots, 4$ in the discussion below. Note that if $[\alpha_3~\beta_3]$ and $[\alpha_4~\beta_4]$ are linearly independent, there exist $\delta_3$ and $\delta_4$ such that
\begin{equation*} [\alpha_1~\beta_1]=\delta_3[\alpha_3~\beta_3]+\delta_4[\alpha_4~\beta_4], \end{equation*}
where $\delta_3$ and $\delta_4$ are not both zero. Thus, $t_2$ can recover $[-\underline{\gamma}_1 + \delta_3 \underline{\gamma}_3 + \delta_4 \underline{\gamma}_4]^T \underline{\theta} X_3$. Note that $\underline{\gamma}_1^T \underline{\theta} = 0$, by the constraint on $\underline{\theta}$ above, thus we only need to pick $\underline{\theta}$ such that $[\delta_3\underline{\gamma}_3 + \delta_4 \underline{\gamma}_4]^T \underline{\theta} \neq 0$. To see that this can be done, we note that $M_{22}$ is full rank which implies that the matrix 
$[\underline{\gamma}_1 ~~ \underline{\gamma}_2 ~~ (\delta_3\underline{\gamma}_3 + \delta_4 \underline{\gamma}_4)]^T$ is full rank. Therefore, there exist at most $q$ choices for $\underline{\theta}$ such that $[\underline{\gamma}_1 ~~ \underline{\gamma}_2 ~~ (\delta_3\underline{\gamma}_3 + \delta_4 \underline{\gamma}_4)]^T \underline{\theta} = 0$. Hence, there are at least $q^2 - q - 1 > 0$ non-zero choices for $\underline{\theta}$ that allow decoding at $t_1$ and $t_2$ simultaneously.

If $[\alpha_3~\beta_3]$ and $[\alpha_4~\beta_4]$ are dependent, decoding can be performed simply by working only with the received values over $E_{os3}$ and $E_{os4}$ using a similar argument as above.

\item[$\bullet~$] \underline{{\it Case 2: $E_{r}$ and $E_{b}$ are on the same path $P_{2i}$.}}\\ W.l.o.g., assume that $E_{b}$ is downstream of $E_{r}$ along $P_{21}$. Then $E_b$ will be the last overlap segment w.r.t. $P_{21}$. Let $E'_{b}$ denote the blue overlap segment that is a neighbor of $E_{b}$ w.r.t. $P_{12}$. Note that $E'_{b}$ cannot be on $P_{21}$ according to Claim \ref{claim:samepath}. If $E'_{b}$ does not exist, it implies that there is only one blue overlap segment (namely, $E_b$) in the network. Therefore, there only exist red overlap segments on $P_{23}$ and $P_{24}$; using Lemma \ref{lemma:uneven_two_unicast}, $X_1$ and $X_3$ can be transmitted to $t_1$ and $t_2$ respectively over $P_{11} \cup P_{23} \cup P_{24}$, and $X_2$ can be routed along $P_{12}$ to $t_1$.

    We now focus on the case when an $E'_{b}$ exists and assume (w.l.o.g.) that it is on $P_{22}$. The main difference is that instead of using random coding over the entire graph, we modify our coding scheme such that random coding is performed over the graph except at $E_b$ and all the edges downstream of $E_b$. At $E_b$, deterministic coding is performed such that $E_b$ carries the same information as the incoming edge of it along $P_{12}$. The information on $E_b$ is further routed to all the downstream edges of $E_b$. Note that by the deterministic coding, $E_b$ carries the same information as $E'_b$.

\noindent \underline{Decoding at $t_1$:}
Using the arguments developed in Case 1, it is clear that $X_1$ and $X_2$ can be decoded from the information on $E'_{b}$ and $E_{r}$. The code assignment ensures that $E_b$ and $E'_b$ carry the same information, thus $t_1$ is satisfied.

\noindent \underline{Decoding at $t_2$:}
In Case 1, we showed that $X_3$ can be decoded from the information on $E_{r}$, $E_{os3}$ and $E_{os4}$. A similar argument can be made that $X_3$ can be decoded from the information on $E'_{b}$, $E_{os3}$ and $E_{os4}$. Since $E_{b}$ carries the same information as $E'_{b}$ and $E_b$ is the last overlap segment on $P_{21}$, terminal $t_2$ can decode $X_3$ by the information on $E_{b}$, $E_{os3}$ and $E_{os4}$.
\end{list}
\end{proof}

By using the result of Lemma \ref{lemma:24} and the idea of vector network coding, we have the following theorem when the connectivity level is $[2~2~4]$.

\begin{theorem}
A multiple unicast instance with three sessions, $<G,\{s_i-t_i\}^3_1,\{1,1,1\}>$ with connectivity level at least $[2~2~4]$ is feasible.
\end{theorem}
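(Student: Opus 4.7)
The plan is to mimic the proof of Theorem~\ref{thm:1_3_3_result}: apply vector network coding over two time units and reduce the three-unicast problem to two invocations of the two-unicast Lemma~\ref{lemma:24}. As before, I would form an auxiliary graph $G^{*}$ by replacing each edge $e \in E$ with two parallel unit-capacity edges $e^{1}$ and $e^{2}$, and let $G^{*}_{1}$ and $G^{*}_{2}$ denote the subgraphs induced by the superscript-$1$ and superscript-$2$ edges respectively. Each $G^{*}_{i}$ is isomorphic to $G$, so in particular the max-flow from $s_{j}$ to $t_{j}$ in $G^{*}_{i}$ equals $k_{j}$ for all $j$ and $i$. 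Source $s_{j}$ is then viewed as observing a length-two vector $[X_{j1}~X_{j2}]$ over the two time units.

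The key observation is that the connectivity level $[2~2~4]$ contains two disjoint ``$[2,4]$-compatible'' subproblems, one for each time unit. In $G^{*}_{1}$ I would use Lemma~\ref{lemma:24} on the two-unicast instance consisting of the pair $s_{1}$--$t_{1}$ at rate two and $s_{3}$--$t_{3}$ at rate one; this delivers $[X_{11}~X_{12}]$ to $t_{1}$ and $X_{31}$ to $t_{3}$, while $s_{2}$ and $t_{2}$ are inactive in this time slot. In $G^{*}_{2}$ I would symmetrically apply Lemma~\ref{lemma:24} to the two-unicast instance formed by $s_{2}$--$t_{2}$ at rate two and $s_{3}$--$t_{3}$ at rate one, delivering $[X_{21}~X_{22}]$ to $t_{2}$ and $X_{32}$ to $t_{3}$. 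Since the two subproblems live on edge-disjoint parallel copies of the original edges, the two codes can be operated independently without interference between time units.

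To see that every terminal is satisfied over the two time units: $t_{1}$ recovers both of its symbols $X_{11}, X_{12}$ in $G^{*}_{1}$; $t_{2}$ recovers both of its symbols $X_{21}, X_{22}$ in $G^{*}_{2}$; and $t_{3}$ recovers $X_{31}$ in $G^{*}_{1}$ and $X_{32}$ in $G^{*}_{2}$. Hence a rate vector of $[1~1~1]$ is supported per time unit, as desired. If the true connectivity level is strictly greater than $[2~2~4]$ (component-wise), the claim is immediate since we may restrict attention to any subgraph that realizes the $[2~2~4]$ max-flows and apply the above construction.

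The only subtlety, and the step I would check most carefully, is that Lemma~\ref{lemma:24} may internally invoke its preprocessing steps (structuring, minimization) on the two-unicast subproblem, potentially deleting edges from $G^{*}_{i}$; since each $G^{*}_{i}$ is an independent copy, these deletions do not affect the other time slot, and the feasibility guarantee in the lemma still applies because the relevant $[2,4]$ connectivity is inherited from $G$. There is no obstacle analogous to the ``interference across sessions'' issue in Lemma~\ref{lemma:24} itself, because here the two subproblems are decoupled by the time-expansion.
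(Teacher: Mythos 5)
Your proposal is correct and is essentially the paper's own proof: the paper likewise forms two parallel copies of the graph via vector coding over two time units and invokes Lemma~\ref{lemma:24} on the pair $(s_1\text{--}t_1,\, s_3\text{--}t_3)$ in one copy and $(s_2\text{--}t_2,\, s_3\text{--}t_3)$ in the other, exactly mirroring the decomposition used for Theorem~\ref{thm:1_3_3_result}. Your additional remarks on the edge-disjointness of the two copies and on per-copy minimization/structuring are consistent with how the paper handles these points elsewhere.
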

\begin{proof} It can be seen that the line of argument used in the proof of Theorem \ref{thm:1_3_3_result}, namely using vector network coding over two time units and use the result of Lemma \ref{lemma:24} gives us the desired result.
\end{proof}

\subsection{Code Assignment Procedure For Instances With Connectivity Level $[1~2~5]$}
We now consider network code assignment for networks where the connectivity level is $[1~2~5]$. The code assignment in this case requires somewhat different techniques. In particular, the idea of using a two-session unicast result along with vector network coding does not work unlike the cases considered previously. At the top level, we still use random network coding followed by appropriate precoding to align the interference seen by the terminals. However, as we shall see below, we will need to depart from a purely random linear code in the network in certain situations. 

As before, we consider a minimal structured graph $G$ and let $X_i$ be the source symbol at source node $s_i$ for $i=1, \dots, 3$ and $\mathcal P_1=\{P_{11}\}$ denote the path from $s_1$ to $t_1$, $\mathcal P_2=\{P_{21},P_{22}\}$ denote the edge disjoint paths from $s_2$ to $t_2$, $\mathcal P_3=\{P_{31}, P_{32}, P_{33}, P_{34}, P_{35}\}$ denote the edge disjoint paths from $s_3$ to $t_3$.



Our scheme operates as follows: $X_1$ is transmitted over the outgoing edge from $s_1$ along $P_{11}$ , $\xi_i X_2$ are transmitted over the outgoing edges of $s_2$ along $P_{2i}$, $i=1,2$, and $\theta_{j}X_3$ are transmitted over the outgoing edges of $s_3$ along  $P_{3j},~j=1,\dots,5$ where $\underline{\xi}= [\xi_1~\xi_2]^T$ and $\underline{\theta} = [\theta_1~\dots~\theta_5]^T$ are precoding vectors chosen from a finite field with size $q$. 


Let $M_i = [M_{i1} ~|~ M_{i2} ~|~ M_{i3}]$ denote the transfer matrix from $\{s_1,s_2,s_3\}$ to terminal $t_i$. Each $M_{ij}$ corresponds to the transformation from source $s_j$ to terminal $t_i$, i.e., the number of columns in $M_{ij}$ is $1, 2$ and $5$ for $j=1, 2$ and $3$ respectively. Similarly, the number of rows in $M_{ij}$ is $1, 2$ and $5$ for $i =1,2$ and $3$ respectively.

In the discussion below we will need to refer to the individual entries of $M_1$ and $M_2$. Accordingly, we express these matrices explicitly as follows.
\begin{align*}
M_1 &= [M_{11} ~|~ M_{12} ~|~ M_{13}] = \left[\alpha_1~|~\underline{\beta}^T~|~\underline{\gamma}^T~\right]\\ &=\left[\alpha_1~|~\beta_1~\beta_2~|~\gamma_1~\gamma_2~\gamma_3~\gamma_4~\gamma_5\right],\\
M_2&=[M_{21}~|~M_{22}~|~M_{23}]= \left[{\begin{array}{c|c|c}
\alpha_1'&\underline{\beta'}^T_1& \underline{\gamma'}^T_1\\
\alpha_2'&\underline{\beta'}^T_2& \underline{\gamma'}^T_2\\
\end{array}}\right]\\
 &=\left[{\begin{array}{c|cc|ccccc}
\alpha'_1&\beta'_{11}&\beta'_{12}&\gamma'_{11}&\gamma'_{12}&\gamma'_{13}&\gamma'_{14}&\gamma'_{15}\\
\alpha'_2&\beta'_{21}&\beta'_{22}&\gamma'_{21}&\gamma'_{22}&\gamma'_{23}&\gamma'_{24}&\gamma'_{25}\\
\end{array}}\right],
\end{align*}
where the entries of the matrices above are functions of indeterminate local coding vectors. The cut conditions imply that $\det(M_{ii})$ is not identically zero for $i = 1, \dots, 3$, and furthermore that their product $\det(M_{11})\det(M_{22})\det(M_{33})$ is not identically zero.




Our solution proceeds as follows.
We first identify a minimal structured subgraph $G'$ of $G$ with the following properties. 
\begin{itemize}
\item[(i)] There exists a path $P'_{11}$, from $s_1$ to $t_1$, \item[(ii)] vertex disjoint paths $P'_{21}$ and $P'_{22}$ from $s_2$ to $t_2$, \item[(iii)] path $\pm$ from $s_1$ to $t_2$ and \item[(iv)] path $\pn$ from $s_2$ to $t_1$.
\end{itemize}
Again, $G'$ is said to be minimal if the removal of any edge from it causes one of the above properties to fail. We note that it is possible that there do not exist any paths from $s_1$ to $t_2$ or from $s_2$ to $t_1$ in $G$. These situations are considered below.

Our analysis depends on the following topological properties of $G'$.\\
\noindent \underline{{\it Case 1:}} The graph $G'$ is such that
\begin{itemize}
\item there is no path from $s_1$ to $t_2$ in $G'$, i.e., $\pm=\emptyset$ (this happens only if there is no path from $s_1$ to $t_2$ in $G$), or
\item there is no path from $s_2$ to $t_1$ in $G'$, i.e., $\pn=\emptyset$ (this happens only if there is no path from $s_2$ to $t_1$ in $G$), or
\item there are paths $\pm$ and $\pn$ in $G'$, and there are overlap segments between $P'_{11}$ and $P'_{21}\cup P'_{22}$.
\end{itemize}

\noindent \underline{{\it Case 2:}} The graph $G'$ is such that
\begin{itemize}
\item there are paths $\pm$ and $\pn$ in $G'$, and $P'_{11}$ does not overlap with either $P'_{21}$ or $P'_{22}$.
\end{itemize}
We emphasize that together Case 1 and Case 2 cover all the possible types of subgraphs for $G'$. Specifically, either $\pm = \emptyset$ or $\pn = \emptyset$. If both $\pm$ and $\pn$ exist in $G'$, then either there are overlaps between $P'_{11}$ and $P'_{21}\cup P'_{22}$ or there are not.
\begin{theorem}
\label{th:case1}
A multiple unicast instance with three sessions, $<G,\{s_i-t_i\}^3_1,\{1,1,1\}>$, with connectivity level $[1~2~5]$ is feasible. 
\end{theorem}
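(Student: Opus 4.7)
The plan is to split on the two topological cases of the auxiliary subgraph $G'$ and, in each case, combine random linear network coding (modified with a small number of deterministic edge assignments in Case 2) with a careful choice of the precoding vectors $\underline{\xi}\in GF(q)^2$ and $\underline{\theta}\in GF(q)^5$. The decoding requirements translate into the following algebraic conditions. At $t_1$ we need $\alpha_1\ne 0$ together with the two scalar nulling constraints $\underline{\beta}^T\underline{\xi}=0$ and $\underline{\gamma}^T\underline{\theta}=0$. At $t_2$, since the 2-dimensional received vector must absorb contributions from three messages, we need to align the $X_3$-interference with the $X_1$-interference, i.e., $M_{23}\underline{\theta}$ parallel to $M_{21}$, while $M_{22}\underline{\xi}$ remains independent of $M_{21}$. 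At $t_3$, no alignment is needed: it suffices that $M_{33}\underline{\theta}$ lie outside the 2-dimensional interference span generated by $M_{31}$ and $M_{32}\underline{\xi}$, which holds generically once $\det(M_{33})\ne 0$.

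In Case 1, the topology of $G'$ yields direct simplifications. If $\pm=\emptyset$ then $M_{21}\equiv 0$, so the alignment condition at $t_2$ collapses to $M_{23}\underline{\theta}=0$; combined with $\underline{\gamma}^T\underline{\theta}=0$ this is a system of at most three linear constraints on $\underline{\theta}\in GF(q)^5$, which admits a non-zero solution. If $\pn=\emptyset$ then $\underline{\beta}\equiv 0$, the constraint $\underline{\beta}^T\underline{\xi}=0$ is vacuous, and an analogous argument applies. When both $\pm$ and $\pn$ are present but $P'_{11}$ overlaps $P'_{21}\cup P'_{22}$, I would use the ``last overlap segment'' argument of Lemma \ref{lemma:uneven_two_unicast}: by acyclicity, the overlap segment on $P'_{11}$ closest to $t_1$ is also the overlap segment on the corresponding $P'_{2j}$ closest to $t_2$. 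This couples the relevant row of $M_2$ to $M_1$ in a controlled way, so that the alignment and nulling constraints can be solved simultaneously, with the remaining inequality constraints surviving with high probability by the Schwartz--Zippel lemma applied to the non-identically-zero product of the three min-cut determinants.

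Case 2 is where the work lies, because $P'_{11}$ is vertex-disjoint from $P'_{21}\cup P'_{22}$ while $\pm$ and $\pn$ cross over, producing indirect dependencies that prevent the direct alignment argument above. Following the strategy used in Case 2 of Lemma \ref{lemma:24}, the plan is to locate a carefully chosen edge or overlap segment (one that sits simultaneously on a path of $\mathcal{P}_3$ and interacts with both $\pm$ and $\pn$) and fix deterministic coding on that edge and on everything downstream of it, so that the fixed edge transmits exactly the linear combination needed to make $M_{23}\underline{\theta}$ parallel to $M_{21}$. On the remaining edges one still uses random coding, and then one re-verifies via Schwartz--Zippel that $\alpha_1\ne 0$, $M_{22}\underline{\xi}$ stays independent of $M_{21}$, and $M_{33}\underline{\theta}$ avoids the 2-dimensional interference subspace at $t_3$.

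The hard part will be Case 2: showing that the deterministic backbone does not break the min-cut rank conditions on $M_{11}$, $M_{22}$, and $M_{33}$ inherited from the connectivity level $[1~2~5]$, and verifying that the polynomial whose non-vanishing guarantees joint decoding at all three terminals is not identically zero in the remaining random local coding coefficients. Establishing this compatibility will rely crucially on minimality of $G'$ and on the precise location of the deterministically-fixed edge, in the same spirit in which minimality was used to preclude cycles in the overlap-structure arguments of Lemma \ref{lemma:24}.
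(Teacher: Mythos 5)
Your overall architecture---random coding plus precoding vectors $\underline{\xi},\underline{\theta}$, a case split on the subgraph $G'$, interference alignment at $t_2$, and a deterministic patch in Case 2---is the same as the paper's, but two genuine gaps remain. First, in Case 1 with $\pm=\emptyset$ you null the entire $s_3$-interference at $t_2$ by imposing $M_{23}\underline{\theta}=0$; together with $\underline{\gamma}^T\underline{\theta}=0$ this is three linear constraints on $\underline{\theta}\in GF(q)^5$, leaving only $q^2-1$ nonzero choices, while the $t_3$ requirement $M_{33}\underline{\theta}\notin span([M_{31}~~M_{32}\underline{\xi}])$ excludes up to $q^2$ values of $M_{33}\underline{\theta}$. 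Your counting therefore cannot close (and ``holds generically'' is not a substitute, since one must exhibit a coding assignment witnessing non-vanishing of the relevant polynomial). The paper avoids this by nulling the $X_3$-interference only on the single component of the received vector at $t_2$ where $M_{22}\underline{\xi}$ is nonzero (one constraint, eq.~(\ref{eq:case1t2})), which preserves $q^3-1$ choices of $\underline{\theta}$ and lets the $t_3$ count go through.

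Second, Case 2---which you correctly identify as the crux---is deferred rather than proved, and the sketch you do give points in a slightly wrong direction. The paper first shows via the special edges $e_1,\dots,e_4$ and minimality that $G'$ must be one of exactly three topologies; one of them (Fig.~\ref{fig:t2ex11}) still admits pure random coding, and only the other two (Figs.~\ref{fig:t2ex12}, \ref{fig:t2ex13}) need a deterministic patch. For those, the special edge $e_{last}$ is produced by an explicit graph modification procedure, and its deterministic assignment serves to \emph{cancel the $X_1$ component} on $P'_{22}$ (the $X_3$-interference at $t_2$ is then removed by a separate precoding constraint $\underline{\hat{\gamma}}^T\underline{\theta}=0$ on the last overlap segment), not to force $M_{23}\underline{\theta}$ parallel to $M_{21}$ as you propose. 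Most importantly, the compatibility with $t_3$ that you explicitly leave open is where the real work lies: the deterministic assignment can drop $rank(M_{33})$ to four, and the paper must argue that $t_3$ first recovers $X_2$ from the overlap segment, reduces to a $4\times 6$ system $[\breve{M}_{31}~|~\breve{M}_{33}]$, and then invoke a counting argument (Lemma~\ref{lemma:distinctValue}) showing at least $q^2-1$ distinct values of $\breve{M}_{33}\underline{\theta}$ survive against at most $q$ bad ones. Without these steps the proposal is a plausible outline, not a proof.
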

\begin{figure}[htbp]
\hspace{-0.1in}\center{
\subfigure[]{\label{fig:t2ex2}
\includegraphics[width=42mm,clip=false, viewport=0 0 270 340]{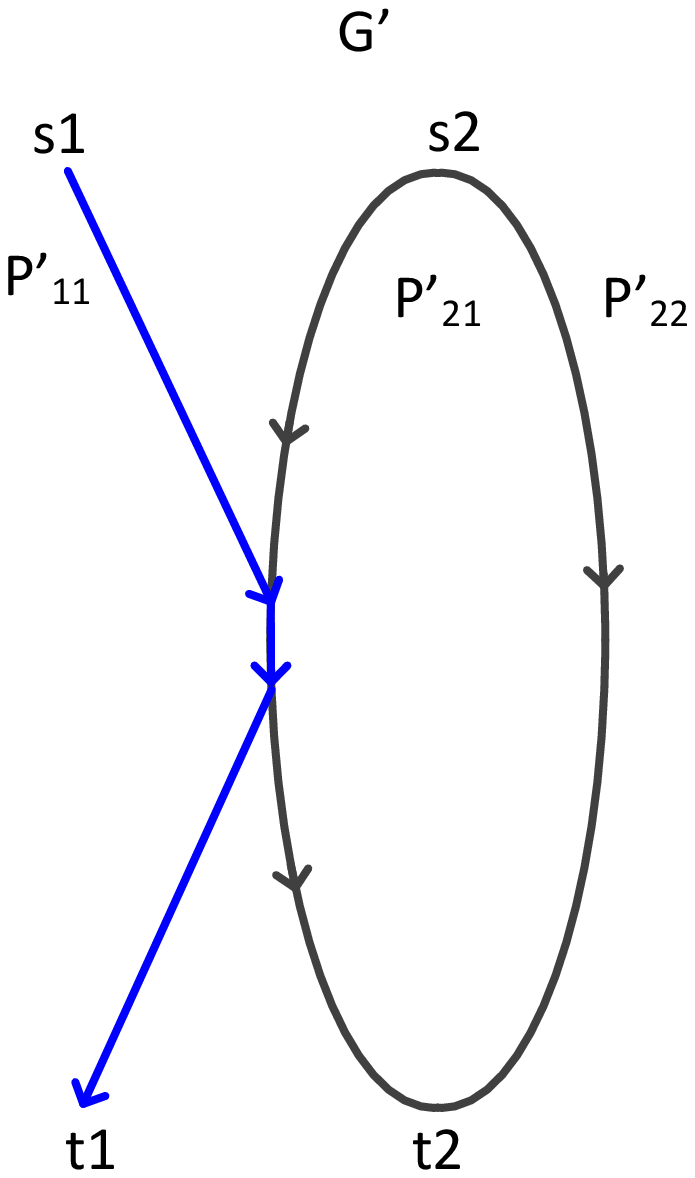}\vspace{-2mm}}
\subfigure[]{\label{fig:t2ex3}
\includegraphics[width=42mm,clip=false, viewport=0 0 270 340]{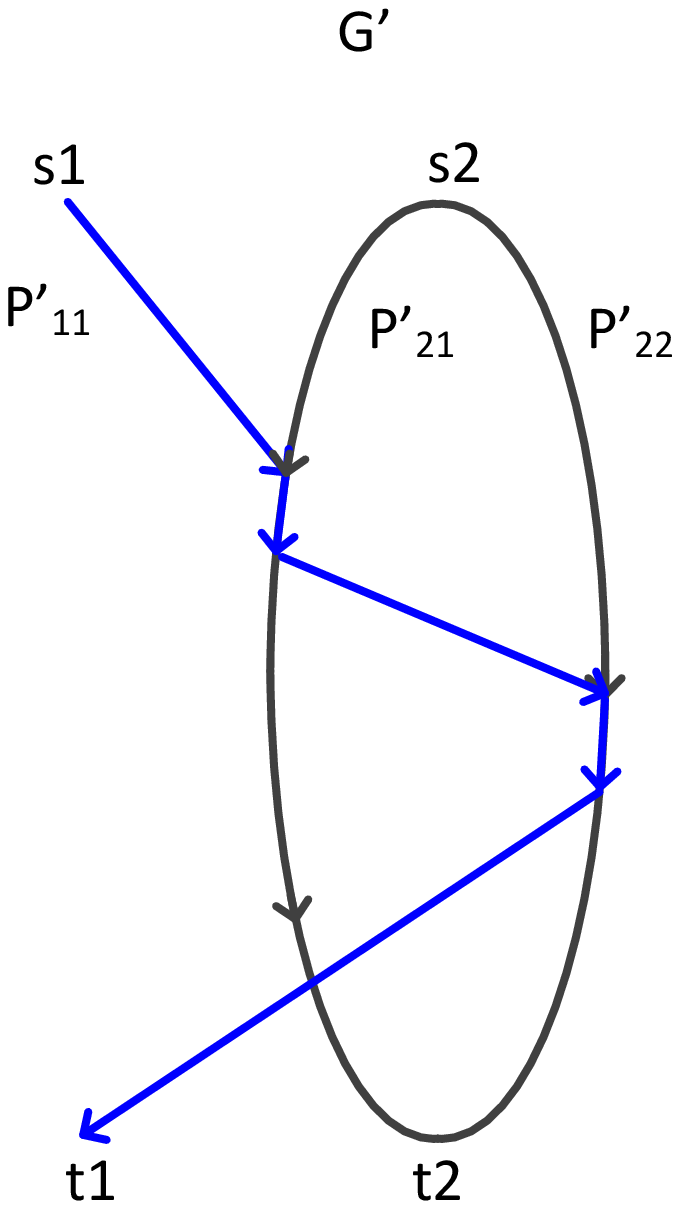}\vspace{-2mm}}
%
\caption{(a) Subgraph $G'$ when $P'_{11}$ overlap with $P'_{21}$. (b) Subgraph $G'$ when $P'_{11}$ overlap with both $P'_{21}$ and $P'_{22}$.}}
\end{figure}

\begin{proof}
We break up the proof into two parts based on type of the subgraph $G'$ that we can find in $G$.

\noindent \uline{Proof when there exists a subgraph $G'$ that satisfies the conditions of Case 1}\\
We perform random linear coding over the graph $G$ over a large enough field. In the discussion below, we will leverage the fact that multivariate polynomials that are not identically zero, evaluate to a non-zero value w.h.p. under a uniformly random choice of the variables. This is needed at several places. By using standard union bound techniques, we can claim that our strategy works w.h.p.

In particular, in the discussion below, we assume that the matrices $M_{ii}, i = 1, \dots, 3$ are full rank and design appropriate precoding vectors $\underline{\xi}$ and $\underline{\theta}$.

\noindent \underline{Decoding at $t_1$:}
For $t_1$ to decode $X_1$, we need to have $\alpha_1 \neq 0$ and the precoding constraints
\begin{equation}
\label{eq:cons11}
[\beta_1~\beta_2]\underline{\xi}=0, \text{~and}
\end{equation}
\begin{equation}
\label{eq:cons12}
[\gamma_1~\gamma_2~\gamma_3~\gamma_4~\gamma_5]\underline{\theta}=0.
\end{equation}

There are at least $q-1$ non-zero vectors $\underline{\xi}$ and $q^4-1$ non-zero vectors $\underline{\theta}$ that can be selected from the field of size $q$ such that eq. (\ref{eq:cons11}) and eq. (\ref{eq:cons12}) are satisfied.

\noindent \underline{Decoding at $t_2$:}

We begin by noting that since $rank(M_{22}) = 2$, $M_{22}\underline{\xi}\neq0$, as long as $\xi \neq 0$.
Next, we argue according to the topological structure of $G'$. The following possibilities can occur.
\begin{list}{}{\leftmargin=0.1in \labelwidth=0cm \labelsep = 0cm}
\item[$(i)$~] {\it There is no path from $s_1$ to $t_2$ in $G'$, i.e., $\pm=\emptyset$.} This implies that $\alpha'_1=\alpha'_2=0$ and in $G$, interference at $t_2$ only exists from $s_3$. Next, at least one component of $M_{22}\underline{\xi}$ will be non-zero, based on the argument above; w.l.o.g. assume that it is the first component. We choose $\underline{\theta}$ to satisfy
    \begin{equation}
    \label{eq:case1t2}
    \underline{\gamma'}_1^T \underline{\theta} = 0.
    \end{equation}
    It is evident that there are at least $q^3 -1 $ non-zero choices of $\underline{\theta}$ that satisfy the required constraints on $\underline{\theta}$ (eqs. (\ref{eq:cons12}) and (\ref{eq:case1t2})). Hence $t_2$ can decode.

\item[$(ii)$~] {\it There exists a path $\pm$ from $s_1$ to $t_2$, i.e., $\pm \neq \emptyset$.}. This means that $M_{21}$ is not identically zero. Here, we first align the interference from $s_3$ within the span of interference from $s_1$ by selecting an appropriate $\underline{\theta}$. We have the following lemma.

\begin{lemma}
If $M_{21}\neq 0$, there exist at least $q^4-1$ choices for $\underline{\theta}$ such that
\vspace{-0.05in}
\begin{equation}
\label{eq:cons22}
M_{23}\underline{\theta}=cM_{21}
\end{equation}
where $c$ is some constant.
\end{lemma}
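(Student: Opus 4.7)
My plan is to absorb the unknown scalar $c$ into the list of unknowns, converting the relation $M_{23}\underline{\theta}=cM_{21}$ into a single homogeneous linear system on six variables. Specifically, I would stack $\underline{\theta}$ and $c$ into a single column $[\underline{\theta}^T~c]^T \in GF(q)^6$; the constraint then reads
\begin{equation*}
[M_{23}~|~-M_{21}][\underline{\theta}^T~c]^T=\underline{0}.
\end{equation*}

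The coefficient matrix $[M_{23}~|~-M_{21}]$ has only two rows (since $t_2$ has two incoming edges in the structured graph), so its rank is at most $2$. Rank-nullity then yields a null space of dimension at least $6-2=4$, which contains at least $q^4$ vectors. To pass from this count of augmented solutions to a count of admissible $\underline{\theta}$'s, I would invoke the hypothesis $M_{21}\neq \underline{0}$: for each fixed $\underline{\theta}$ the equation $cM_{21}=M_{23}\underline{\theta}$ has at most one solution in $c$, so the projection from the null space onto the $\underline{\theta}$-coordinate is injective. This gives at least $q^4$ distinct admissible $\underline{\theta}$'s, and excluding the trivial choice $\underline{\theta}=\underline{0}$ (which forces $c=0$) leaves at least the claimed $q^4-1$ nonzero choices.

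The argument is essentially rank-nullity bookkeeping, so I do not anticipate a substantive obstacle. The only place where the hypothesis $M_{21}\neq 0$ actually enters is in making the projection $[\underline{\theta}^T~c]^T\mapsto\underline{\theta}$ injective; without it, different values of $c$ could be paired with the same $\underline{\theta}$ and the dimension count would need to be redone. Note that the lemma as stated does not attempt to simultaneously enforce the $t_1$-constraint eq.~(\ref{eq:cons12}); that is a single additional scalar equation which can presumably be imposed in a subsequent step without exhausting all of the $q^4-1$ degrees of freedom produced here.
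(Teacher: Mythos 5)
Your proof is correct, but it takes a somewhat different route from the paper's. The paper eliminates $c$ by an explicit change of basis: assuming w.l.o.g.\ that $\alpha'_2\neq 0$, it constructs an invertible $2\times 2$ matrix $U$ with $UM_{21}=[0~\alpha'_2]^T$, so that the first row of $UM_{23}\underline{\theta}=cUM_{21}$ collapses to a single homogeneous constraint $\widetilde{\underline{\gamma}}_1^{'T}\underline{\theta}=0$ on $\underline{\theta}\in GF(q)^5$ (giving $q^4-1$ nonzero solutions), while the second row merely defines $c=\underline{\gamma}_2^{'T}\underline{\theta}/\alpha'_2$. You instead keep $c$ as a sixth unknown, apply rank--nullity to the $2\times 6$ system $[M_{23}~|~-M_{21}]$, and use $M_{21}\neq 0$ to make the projection onto the $\underline{\theta}$-coordinates injective; this is equally valid, and it avoids both the w.l.o.g.\ relabeling and the explicit construction of $U$. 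What the paper's version buys is the alignment condition in the form of one named linear equation, $\widetilde{\underline{\gamma}}_1^{'T}\underline{\theta}=0$: downstream, the admissible $\underline{\theta}$'s must be intersected with the $t_1$ constraint $\underline{\gamma}^T\underline{\theta}=0$ to conclude that $q^3-1$ choices survive, and having an explicit single constraint makes that count immediate. Your argument supports the same conclusion --- the projected solution set is a subspace of $GF(q)^5$ of dimension at least four, hence of codimension at most one --- but this should be stated explicitly so the later counting goes through; your closing remark indicates you are aware of the issue.
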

\begin{proof}
First, w.l.o.g., we assume $\alpha'_2\neq 0$. Hence, there exists a full rank $2\times 2$ upper triangular matrix $U$ such that $U M_{21} = [ 0 ~\alpha'_2]^T$.
Next, define
\begin{equation}
\label{eq:cons23}
[1~0]U M_{23} = \widetilde{\underline{\gamma}}_1^{'T}
\end{equation}
and choose $\underline{\theta}$ to satisfy $\widetilde{\underline{\gamma}}_1^{'T} \underline{\theta} = 0$ and set $c = \underline{\gamma}_2^{'T} \underline{\theta}/\alpha'_2$. Upon inspection, it can be verified that this implies that $U M_{23} \underline{\theta} = c U M_{21}$. As $U$ is invertible, and there is only one linear constraint on $\underline{\theta}$, we have the required conclusion.   
\end{proof}
Thus, under this choice of $\underline{\theta}$, the interference from $s_3$ is aligned within the span of the interference from $s_1$ at $t_2$. Let $\underline{X} = [ X_1 ~X_2~X_3]^T$. The received signal at $t_2$ is
\begin{equation}
\label{eq:re_t2}
[M_{21}~M_{22}\underline{\xi}~M_{23}\underline{\theta}] \underline{X}
=[M_{21}~M_{22}\underline{\xi}]\left[{\begin{array}{c}
X_1+cX_3\\
X_2
\end{array}}\right].
\end{equation}

The following claim concludes the decoding argument for $t_2$.

\begin{claim}
\label{claim:fullrank}
If $M_{21}$ is not identically zero, under random linear coding w.h.p., there exists a $\underline{\xi}$ such that $rank[M_{21}~~M_{22}\underline{\xi}]=2$ and $[\beta_1~\beta_2] \underline{\xi} = 0$.
\end{claim}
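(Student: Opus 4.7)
The plan is to reduce Claim \ref{claim:fullrank} to a polynomial non-vanishing statement and then invoke the Schwartz--Zippel lemma together with a topological argument. Since there are two edge-disjoint $s_2$--$t_2$ paths, $M_{22}$ is of full rank w.h.p.\ under random linear coding. Under this event, $\mathrm{rank}[M_{21}~M_{22}\underline{\xi}]<2$ if and only if $\underline{\xi}$ lies in the one-dimensional ``bad'' subspace $\mathcal{B}=\mathrm{span}(M_{22}^{-1}M_{21})$. Meanwhile the constraint $[\beta_1~\beta_2]\underline{\xi}=0$ defines a subspace $\mathcal{N}$ of dimension at least one. If $[\beta_1~\beta_2]$ is identically zero (for instance when there is no path from $s_2$ to $t_1$ in $G$ through $s_2$'s chosen outgoing edges), then $\mathcal{N}=\mathbb{F}_q^2$ and I simply pick any $\underline{\xi}\notin\mathcal{B}$. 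Otherwise, w.h.p.\ both $\mathcal{N}$ and $\mathcal{B}$ are one-dimensional, and the claim reduces to $\mathcal{N}\neq\mathcal{B}$, equivalently to showing that the polynomial
\[
p := [\beta_1~\beta_2]\,\mathrm{adj}(M_{22})\,M_{21}
\]
does not vanish identically in the edge coefficients of $G$.

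To recognize $p$ combinatorially, I would apply the Schur complement to the $3\times 3$ block transfer matrix $T$ whose top-left, top-right, bottom-left, and bottom-right blocks are $\alpha_1$, $[\beta_1~\beta_2]$, $M_{21}$, and $M_{22}$ respectively. This yields
\[
p \;=\; \alpha_1\det M_{22} \;-\; \det T.
\]
The matrix $T$ is itself the transfer matrix associated with the full $3\times 3$ coupling of the outgoing edges of $\{s_1,s_2\}$ to the incoming edges of $\{t_1,t_2\}$, so by a Koetter--M\'edard style expansion, $\alpha_1\det M_{22}$ plays the role of the ``identity'' term of $\det T$ (in which $s_1\to t_1$ and $s_2\to t_2$ pair up as intended), while $-p$ is the sum of the ``crossover'' terms (in which one unit of flow crosses from $s_1$ to $t_2$ and one unit crosses from $s_2$ to $t_1$).

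The non-vanishing of $p$ therefore amounts to the existence of at least one crossover path system that contributes a non-cancelable monomial. I would leverage the topological hypotheses of Case~1's third sub-case: both $\pm$ and $\pn$ exist in $G'$, and $P'_{11}$ shares an overlap segment with $P'_{21}\cup P'_{22}$. The existence of $\pm$ and $\pn$ means one can assemble a crossover path system: $s_1$ flows along $\pm$ to $t_2$, one of $s_2$'s outputs flows along $\pn$ to $t_1$, and the remaining $s_2$ output continues along $P'_{21}$ or $P'_{22}$ to $t_2$. Substituting an explicit routing-style assignment of local coding vectors, and assigning distinct indeterminates to edges used only by $\pm$ and $\pn$, produces a monomial in the crossover contribution that contains those indeterminates, whereas the identity term $\alpha_1\det M_{22}$ contains none. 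Hence $p\not\equiv 0$.

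The main obstacle I anticipate is a careful control of cancellations, particularly along the overlap segment between $P'_{11}$ and $P'_{21}\cup P'_{22}$, where edges are shared between the $P'_{ij}$ paths that enter into both $\alpha_1\det M_{22}$ and $\det T$. Here the minimality and structured nature of $G'$ (every edge is essential for some path, and internal nodes have total degree at most three) are what allow one to isolate an edge variable appearing only in $\pm$ or $\pn$ and drive $p$ to a nonzero value. Having established $p\not\equiv 0$, Schwartz--Zippel over a sufficiently large finite field $\mathbb{F}_q$ then guarantees $p\neq 0$ w.h.p., yielding the desired $\underline{\xi}\in\mathcal{N}\setminus\mathcal{B}$ with at least $q-1$ non-zero choices.
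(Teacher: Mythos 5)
Your reduction is correct and is in fact the same as the paper's: the condition $\mathcal{N}\neq\mathcal{B}$ is equivalent to the non-vanishing of $p=[\beta_1~\beta_2]\,\mathrm{adj}(M_{22})\,M_{21}$, and expanding $p$ shows it is exactly the $2\times 2$ determinant appearing in the paper's Lemma~\ref{lemma:common} (eq.~(\ref{eq:deterin})), up to the nonzero factor $\xi_2/\beta_1$; the degenerate case $[\beta_1~\beta_2]\equiv 0$ is also handled identically. The genuine gap is in the step where you establish $p\not\equiv 0$. Your claim that the ``crossover'' contribution produces a monomial containing indeterminates of edges used only by $\pm$ and $\pn$, ``whereas the identity term $\alpha_1\det M_{22}$ contains none,'' fails in precisely the regime where Claim~\ref{claim:fullrank} is needed: when $P'_{11}$ overlaps $P'_{21}\cup P'_{22}$ (the third bullet of Case~1), $\alpha_1$ is a sum over \emph{all} $s_1$--$t_1$ paths in the subgraph and therefore picks up hybrid paths that travel along $\pm$, ride an overlap segment of $P'_{21}$ or $P'_{22}$, and return to $t_1$ via $\pn$. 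These hybrid terms carry exactly the indeterminates you hoped were exclusive to the crossover contribution, so the separation collapses and you are left with the cancellation problem you yourself flag as ``the main obstacle'' but do not resolve. Moreover, $\alpha_1\det M_{22}$ is not literally the identity-permutation part of a Lindstr\"om--Gessel--Viennot expansion of $\det T$ (it is a product of path sums and so includes non-disjoint path systems), so even the combinatorial identification underlying your monomial argument would need its own proof.

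The paper closes this gap by a concrete evaluation rather than a monomial-survival argument: it zeroes out all coding vectors outside the minimal subgraph $G'=P'_{11}\cup P'_{21}\cup P'_{22}\cup\pm\cup\pn$, uses minimality to show $G'$ has one of exactly two topologies (Figs.~\ref{fig:t2ex2} and \ref{fig:t2ex3}), and in each case exhibits explicit local coding coefficients for which $[\beta_1~\beta_2]$ and $[M_{21}~M_{22}]$ take concrete values making the determinant in eq.~(\ref{eq:deterin}) nonzero; a single nonzero evaluation point suffices for $p\not\equiv 0$ and sidesteps all cancellation bookkeeping. To complete your proof you would need either to carry out that topological case analysis, or to prove a surviving-monomial lemma that explicitly accounts for the hybrid paths created by the overlaps; as written, the argument is incomplete at its central step.
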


\begin{proof} 
We will show that there exists an assignment of local coding vectors such that $\det[M_{21}~~M_{22}\underline{\xi}]\neq0$. This will imply that w.h.p. under random linear coding, this property continues to hold.

Suppose that there is no path from $s_2$ to $t_1$ in $G$, i.e., $\pn=\emptyset$ and $[\beta_1~ \beta_2]$ is identically zero. This does not impose any constraint on $\underline{\xi}$. Next, $M_{22}$ is full rank w.h.p. Hence, we can choose a $\underline{\xi}$ such that required condition is satisfied.

If there exists a path $\pn$ from $s_2$ to $t_1$ in $G'$, $[\beta_1~~\beta_2]$ is not identically zero. W.l.o.g., we assume that $\beta_1$ is not identically zero. By Lemma \ref{lemma:common} (see Appendix), proving that $\det[M_{21}~~M_{22}\underline{\xi}]\neq0$, is equivalent to checking that the determinant in (\ref{eq:deterin}) is not identically zero. Now we demonstrate that there exists a set of local coding vectors such that the determinant in (\ref{eq:deterin}) is non-zero. We consider the subgraph $G'=P'_{11}\cup P'_{21}\cup P'_{22}\cup \pm\cup\pn$ (identified above) - our choice of the coding vectors on all the other edges will be assigned to the zero vector. As both $\pm\neq \emptyset$ and $\pn\neq \emptyset$, we only consider the case where $P'_{11}$ overlaps with $P'_{21}\cup P'_{22}$. We distinguish the following cases.
\begin{enumerate}
\item {\it $P'_{11}$ overlaps with either $P'_{21}$ or $P'_{22}$.} W.l.o.g., assume it is $P'_{21}$. First note that when $P'_{11}$ overlap with one of $P'_{21}$ and $P'_{22}$ in $G'$, there is a path from $s_1$ to $t_2$ and a path from $s_2$ to $t_1$ in $P'_{11}\cup P'_{21}\cup P'_{22}$. Hence, $G'$ can be completely represented by $P'_{11}\cup P'_{21}\cup P'_{22}$. This is shown in Fig. \ref{fig:t2ex2}. It is evident that we can choose coding coefficients such that
$$[\beta_1~~\beta_2]=[1~0], \text{~and}$$
\begin{equation}
[M_{21}~~M_{22}]=\left[{\begin{array}{ccc}
1&1&0\\
0&0&1\\
\end{array}}\right].
\end{equation}

By substituting them into eq. (\ref{eq:deterin}), the determinant of $[M_{21}~~M_{22}\underline{\xi}]$ is not zero. 

\item {\it $P'_{11}$ overlaps with both $P'_{21}$ and $P'_{22}$.} Using a similar argument as above, $G'$ can be completely represented by $P'_{11}\cup P'_{21}\cup P'_{22}$ if $P'_{11}$ overlaps with both $P'_{21}$ and $P'_{22}$. 
    Note that there will be one overlap between $P'_{11}$ and each of $P'_{21}$ and $P'_{22}$. Otherwise, assume there are two overlaps between $P'_{11}$ and $P'_{21}$, then some edges can be removed without contradicting the minimality of the graph $G'$. This is shown in Fig. \ref{fig:t2ex3}. Assume $P'_{11}$ overlap with $P'_{21}$ first. We can find a set of coding coefficients such that  
$$[\beta_1~~\beta_2]=[1~1] \text{~and}$$
\begin{equation}
[M_{21}~~M_{22}]=\left[{\begin{array}{ccc}
1&1&0\\
1&1&1\\
\end{array}}\right].
\end{equation}

By substituting them into eq. (\ref{eq:deterin}), the determinant of $[M_{21}~~M_{22}\underline{\xi}]$ is not zero. 
\end{enumerate}
In both cases, therefore the required condition holds w.h.p. under random linear coding.
\end{proof}
\noindent Terminal $t_2$ can decode since it can solve the system of equations specified by in eq. (\ref{eq:re_t2}).

\end{list}

\noindent \underline{Decoding at $t_3$:}
At $t_3$, we need to decode $X_3$ in the presence of the interference from $s_1$ and $s_2$. The prior constraints on $\underline{\theta}$, namely (\ref{eq:cons12}) and (\ref{eq:case1t2}) for case (i), or (\ref{eq:cons12}) and (\ref{eq:cons22}) for case (ii) allow at least $q^3 - 1$ choices for it. As $M_{33}$ is full-rank, this implies that there are at least $q^3 - 1$ corresponding distinct $M_{33} \underline{\theta}$ vectors.  
Next, for $t_3$ to decode $X_3$, from Lemma \ref{lemma:partialDecode}, we need to have
\begin{equation}
\label{eq:cons31}
M_{33}\underline{\theta}\notin span([M_{31}~~M_{32}\underline{\xi}]).
\end{equation}
Since there are at most $q^2$ vectors in $span([M_{31}~~M_{32}\underline{\xi}])$, there are at least $q^3-q^2-1 > 0$ choices for $\underline{\theta}$ such that all the required constraints on $\underline{\theta}$ are satisfied.

%
\noindent \uline{Proof when there exists a subgraph $G'$ that satisfies the conditions of Case 2}\\
As before, our overall strategy will be to use random linear network coding, however in certain cases we will need to make modifications to the code assignment. We argue based on the properties of the minimal structured subgraph $G'$. Recall that under Case 2, paths $\pm$ and $\pn$ exist and $P'_{11}$ does not overlap with $P'_{21}\cup P'_{22}$. As the graph is structured, this implies that $P'_{11}$, $P'_{21}$ and $P'_{22}$ are all vertex disjoint.

Our first goal is to show that $G'$ is topologically equivalent to one of the graphs shown in Figs. \ref{fig:t2ex11}, \ref{fig:t2ex12} and \ref{fig:t2ex13}. Towards this end, we color $P'_{11}\cup P'_{21}\cup P'_{22}$ black, the path $\pm$ red, and the path $\pn$ blue. In this process, certain edges will get a set of colors (which are a subset of $\{red, blue, black\}$). Note that there cannot be any edge that has the color $\{blue, red\}$. To see this, assume otherwise: then one could find a new path from $s_1$ to $t_1$ that overlaps $\pm$ and $\pn$ and delete at least one edge from $P'_{11}$, contradicting the minimality of $G'$. By similar arguments, $\pm$ and $\pn$ cannot overlap on $P'_{21}\cup P'_{22}$. Hence,  paths $\pm$ and $\pn$ can only overlap if they also overlap with $P'_{11}$.


Next, we identify certain special edges in $G'$. As there is only one path going out of $s_1$, $P'_{11}$ and $\pm$ will overlap. A similar argument shows that $P'_{11}$ and $\pn$ will overlap. Likewise, $\pm$ and $\pn$ will overlap with $P'_{21}$ or $P'_{22}$. Consider, the overlap between $P'_{11}$ and $\pm$. Using the minimality of $G'$ it can be seen that there can be exactly one overlap segment between them; we identify the edge  $\in P'_{11} \cap \pm$ at the farthest distance from $s_1$, such that it has two outgoing edges belonging to exclusively $P'_{11}$ and $\pm$, and call it $e_1$. Similarly, we identify the edge $\in P'_{11}\cap \pn$ that is closest to $s_1$, and call it $e_3$.

Next, consider the overlap between $\pm$ and $P'_{21}\cup P'_{22}$. Once again, by minimality it holds that there is exactly one contiguous overlap segment between $\pm$ and $P'_{21}\cup P'_{22}$, that can either be on $P'_{21}$ or $P'_{22}$. We identify $e_4$ as the edge in $\pm \cap (P'_{21}\cup P'_{22})$ that is closest to $s_1$. In a similar manner, $e_2$ is identified as the edge $\pn \cap (P'_{21}\cup P'_{22})$ that is farthest away from $s_2$.


We now consider the possible orders of the edges $e_1, \dots, e_4$. As $e_1$ and $e_3$ belong to $P'_{11}$, one of them has to be downstream of the other along $P'_{11}$. Consider the following cases.
\begin{itemize}
\item {\it $e_3$ is downstream of $e_1$ along $P'_{11}$.} If edges $e_2$ and $e_4$ lie on the same path $\in \{P'_{21}, P'_{22}\}$, we first note that $e_4$ has to be downstream of $e_2$ (by minimality, otherwise the segment between $e_1$ and $e_3$ along $P'_{11}$ can be removed); the graph $G'$ is topographically equivalent to Fig. \ref{fig:t2ex11}. If $e_2$ and $e_4$ lie on different paths $\in \{P'_{21}, P'_{22}\}$, the graph $G'$ is topographically equivalent to Fig. \ref{fig:t2ex12}.
\item {\it $e_1$ is downstream of $e_3$ along $P'_{11}$, or $e_1 = e_3$.} In this case $e_2$ and $e_4$ have to lie on different paths $\in \{P'_{21}, P'_{22}\}$. To see this, assume they both lie on $P'_{21}$: if $e_4$ is downstream of $e_2$, the minimality of $G'$ does not hold (segment between $e_2$ and $e_4$ along $P'_{21}$ can be removed), whereas if $e_2$ is downstream of $e_4$, the acyclicity of $G'$ is contradicted. Therefore, the only possibility is that $e_2$ and $e_4$ lie on different paths  $\in \{P'_{21}, P'_{22}\}$ and in this case $G'$ is topographically equivalent to Fig. \ref{fig:t2ex13}.
\end{itemize}
With the above arguments in place, it is clear that $G'$ is topographically equivalent to one of the graphs in Fig. \ref{fig:t2ex11}, \ref{fig:t2ex12} or \ref{fig:t2ex13}.

\begin{figure*}[htbp]
\hspace{-0.1in}\center{
\subfigure[]{\label{fig:t2ex11}
\includegraphics[width=45mm,clip=false, viewport=0 10 270 300]{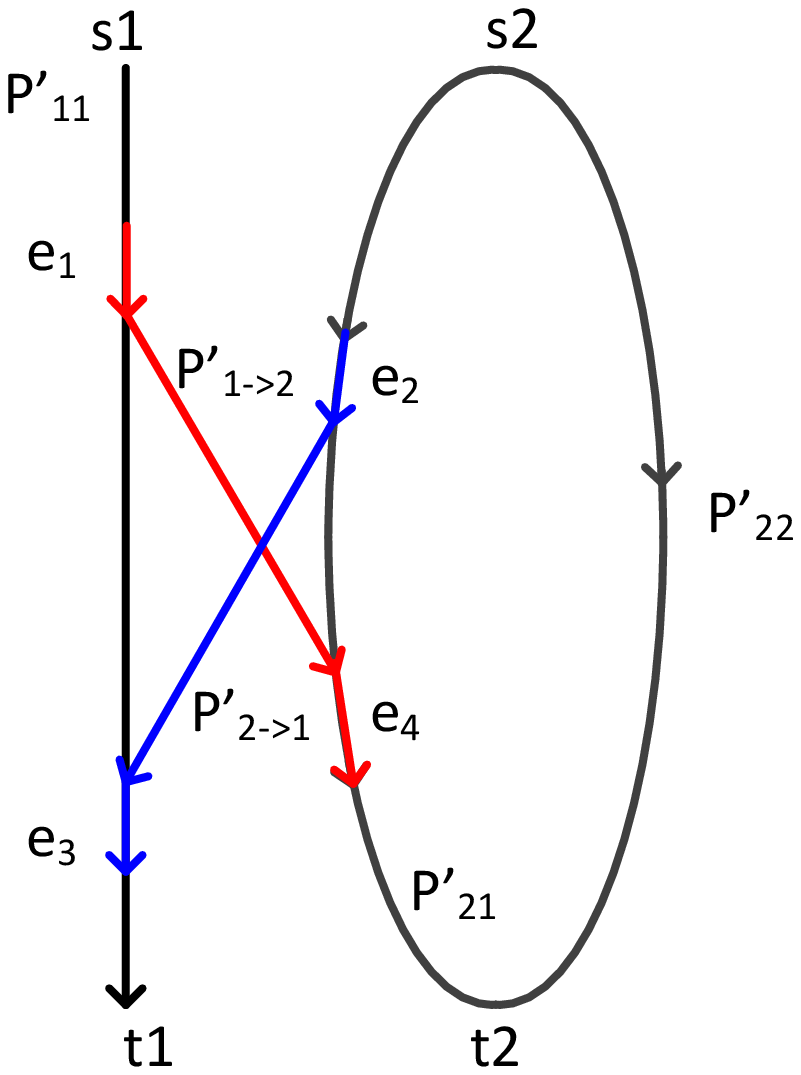}\vspace{-2mm}}
\subfigure[]{\label{fig:t2ex12}
\includegraphics[width=45mm,clip=false, viewport=0 10 270 300]{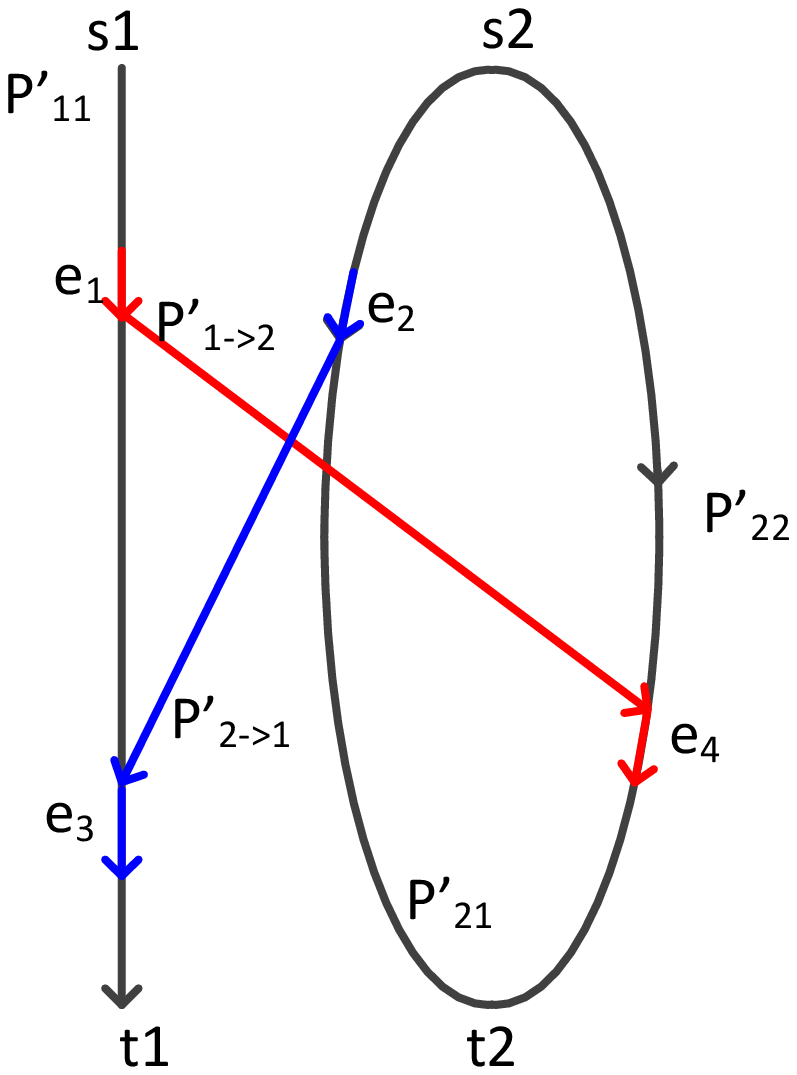}\vspace{-2mm}}
\subfigure[]{\label{fig:t2ex13}
\includegraphics[width=45mm,clip=false, viewport=0 10 270 300]{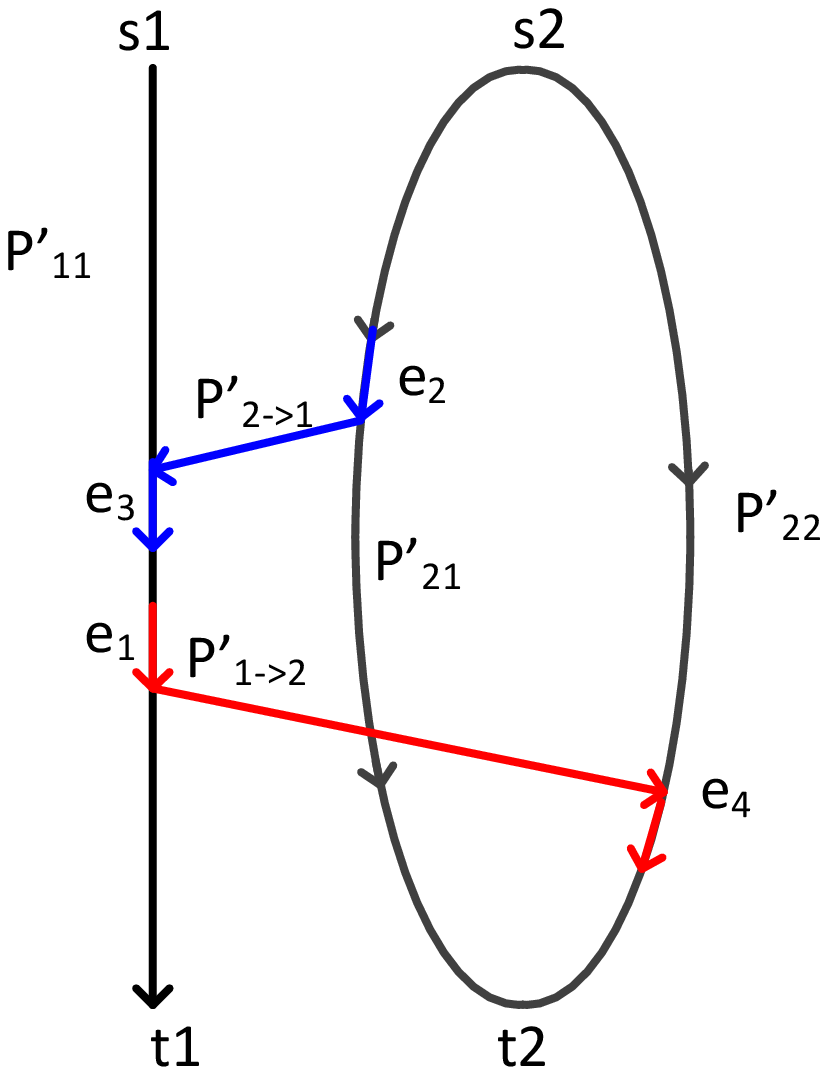}\vspace{-2mm}}
\caption{Possible subgraphs $G'$ when $P'_{11}$ does not overlap with either $P'_{21}$ or $P'_{22}$.}}
\vspace{-0.2in}
\end{figure*}
We now present our schemes for the different possibilities for $G'$.
For the class of $G'$ that fall in Fig. \ref{fig:t2ex11}, it suffices to use the approach in the proof of Theorem \ref{th:case1}. Namely, we use random linear network coding in the network and precoding at sources $s_2$ and $s_3$. As in this case $M_{21} \neq 0$, one needs to argue that $rank[M_{21}~M_{22}\xi] = 2$. Following the line of argument used previously, we can do this by demonstrating a choice of local coding coefficients such that $[\beta_1~\beta_2] = [1 ~0]$ and $[M_{21}~~M_{22}]=\left[{\begin{array}{ccc} 1&1&0\\ 0&0&1\\ \end{array}}\right]$.
%
However, such an approach does not work when the subgraph $G'$ belong to the class of graphs shown in Figs. \ref{fig:t2ex12} and \ref{fig:t2ex13}. For instance, it is easy to observe that if we use random coding on Fig. \ref{fig:t2ex12}, and precoding to cancel the $X_2$ component at $t_1$, then $t_2$ will receive a linear combination of $X_1$ and $X_2$ w.h.p., i.e., decoding $X_2$ at $t_2$ will fail.
Accordingly, when $G'$ looks like Fig. \ref{fig:t2ex12} or \ref{fig:t2ex13}, we require a different scheme that we now present.

\noindent \underline{{\it  Modified random coding for cases in Fig \ref{fig:t2ex12} and Fig \ref{fig:t2ex13}.}}\\
It is clear that the strategy of random linear network coding and precoding at the sources fails since the determinant of the matrix $[M_{21}~M_{22}\underline{\xi}]$ is identically zero for the cases in Fig. \ref{fig:t2ex12} and \ref{fig:t2ex13}. Thus, at the top level our approach is to modify the original graph $G$ by removing certain edges and identifying a special node in $G$ that is upstream of $t_2$. The transfer matrix on the two incoming edges of this special node can be expressed as $[\tilde{M}_{21}~\tilde{M}_{22}~\tilde{M}_{23}]$ such that the determinant of $[\tilde{M}_{21}~\tilde{M}_{22}\underline{\xi}]$ is not identically zero. Thus, at this node it becomes possible to remove the effect of $X_1$ via deterministic coding. Accordingly, our strategy is to first perform random linear coding at all nodes except the special node and those that are downstream of the special node. Following this, we perform deterministic coding at the special node to cancel the effect of $X_1$, and random linear coding downstream of it. Finally, we argue based on the precoding constraints that each terminal can decode its desired message. In the discussion below we outline each of the steps and the corresponding analysis in a systematic manner.

Recall that based on $G'$ (which is a subgraph of $G$) we have identified paths $P'_{11}$, $P'_{21}$, $P'_{22}$ that are all vertex disjoint, paths $\pm$ and $\pn$ and edges $e_1, \dots, e_4$. At the outset we demonstrate that certain structures in $G$, need not be considered. In particular, 
\begin{itemize}
\item if in $G$, there exists a path from $s_1$ to $t_1$ that has an overlap with $P'_{21} \cup P'_{22}$, it is clear that an alternate minimal subgraph $G{''}$ can be found that satisfies the conditions of Case 1.
\item In $G$, a path from $s_1$ cannot have an overlap with $path(e_2 - e_3)$. To see this note that $G'$ is a subgraph of $G$; therefore if path$(e_2 - e_3$) exists in it, then it necessarily has to belong to a path $P_{3i}$ from $s_3$ to $t_3$. We emphasize that the entire path including $e_2$ and $e_3$ have to belong to $P_{3i}$ because by assumption all nodes in the graph have in-degree + out-degree at most 3. In a similar manner, the path from $s_1$ that overlaps with path$(e_2 - e_3$) also needs to belong to path $P_{3j}$.If $i = j$, then it implies the existence of a path from $s_1$ to $t_1$ that has an overlap with $P_{21}' \cup P_{22}'$; however, this is explicitly ruled out by the discussion in the previous bullet. Thus, $i \neq j$; however, this is impossible since the paths $P_{3i}$ and $P_{3j}$ are edge disjoint.

\end{itemize}
Accordingly, in the discussion below, we will assume that the above scenarios do not occur.
\\
\noindent {\it Graph modification procedure for original graph $G$:}
\begin{itemize}
\item[(i)] Remove all edges downstream of $e_2$ on $P'_{21}$ that have no overlap with a path from $\cup_{i=1}^{5} P_{3i}$.
\item[(ii)] Identify an edge, denoted $e_{first}$ on $P'_{22}$, with the property that $e_{first}$ is the edge closest to $s_2$ such that there exists a $path(s_1 - e_{first})$.   Note that $e_{first}$ exists due to the existence of path $\pm$ in $G$.\
\item[(iii)] Remove edges downstream of $e_{first}$ while maintaining the following properties - (a) there exists a path from $e_{first} - t_2$, and (b) $max-flow(s_3 - t_3) = 5$. Rename $P'_{22}$ to be $path(s_2 - e_{first} - t_2)$. It is important to note that after this procedure, removal of any edge downstream of $e_{first}$ would cause either property (a) or (b) to fail.
\item[(iv)] Identify edge $e_{last} \in P'_{22}$ such that it is the edge closest to $t_2$ with the property that it has two incoming edges - $e'_1 \notin P'_{22}$ such that there exists $path(s_1 - e'_1)$ and $e'_2 \in P'_{22}$. Again $e'_1$ is guaranteed to exist as $\pm$ exists in $G$.
\end{itemize}
As a consequence of the modification procedure,  there is no overlap between $path(s_1 - e'_1)$ and $P'_{22}$. To see this, assume otherwise, i.e., an overlap segment, denoted $E_{os}$ exists between $path(s_1 - e'_1)$ and $P'_{22}$. As $e_{first}$ is the edge closest to $s_2$ such that there is a path between $s_1$ and $e_{first}$, it follows that $E_{os}$ is downstream of $e_{first}$ along $P'_{22}$. However, this contradicts the property of the modified graph after Step (iii) in the modification procedure above.

Next, note that $path(e_2 - e_3)$ has to overlap with a path from $\cup_{i=1}^5 P_{3i}$ (as $G$ is minimal) which means that the downstream neighboring edge of $e_2$ along $P'_{21}$ cannot belong to any path in $\cup_{i=1}^5 P_{3i}$ and will be removed in Step (i). Likewise 
the incoming edge of $t_2$ along $P'_{21}$ will also be removed.
At the end of the graph modification procedure, and using the observations made above, it is clear that we can identify a subgraph $\tilde{G}$ of $G$ that is topologically equivalent to either
Fig. \ref{fig:t2ex12_de} or \ref{fig:t2ex13_de}.

Next, we perform random linear coding over the modified graph except at edge $e_{last}$ and all the edges downstream of $e_{last}$, and impose the precoding constraints $[\beta_1 ~ \beta_2]\underline{\xi} = 0$ and $\underline{\gamma}^T \underline{\theta} = 0$. This ensures that $t_1$ is satisfied. Furthermore, note that there is no path from $e_{last}$ to $t_1$; therefore any code assignment on $e_{last}$ and its downstream edges will not affect decoding at $t_1$.

For $t_2$ to decode $X_2$, we first demonstrate that by using deterministic coding for edge $e_{last}$, the $X_1$ component can be canceled while the $X_2$ component can be maintained on $e_{last}$. Note that $e'_1$ and $e'_2$ denote the incoming edges of $e_{last}$; we denote the transfer matrix to these two edges by $[\tilde{M}_{21}~\tilde{M}_{22}~\tilde{M}_{23}]$.
\begin{claim}
\label{claim:tilde_G}
For the network structures in Fig. \ref{fig:t2ex12_de} and Fig. \ref{fig:t2ex13_de}, the determinant of $[\tilde{M}_{21}~\tilde{M}_{22}\underline{\xi}]$ is not identically zero where $\underline{\xi}$ satisfies $[\beta_1 ~ \beta_2]\underline{\xi} = 0$.
\end{claim}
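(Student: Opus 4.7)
The plan is to follow the same template as the proof of Claim~\ref{claim:fullrank}. First, apply Lemma~\ref{lemma:common} to absorb the precoding constraint $[\beta_1~\beta_2]\underline{\xi}=0$ into the determinant, reducing the problem to showing that a single polynomial in the local coding coefficients is not identically zero. Then, for each of the canonical subgraphs $\tilde{G}$ in Figs.~\ref{fig:t2ex12_de} and~\ref{fig:t2ex13_de}, exhibit one concrete assignment of local coding coefficients for which this polynomial evaluates to a nonzero value; by Schwartz-Zippel, this will suffice.

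For Fig.~\ref{fig:t2ex12_de}, I would exploit the structural property established in the graph-modification procedure, namely that $path(s_1 - e'_1)$ does not overlap with $P'_{22}$. This allows a local coding assignment in which $e'_1$ carries a pure multiple of $X_1$ (delivered via $\pm$ branching off $P'_{11}$). The edge $e'_2 \in P'_{22}$ lies downstream of $e_{first}$, and by routing $\xi_2 X_2$ cleanly along $P'_{22}$ down to $e_{last}$ we can arrange $e'_2$ to carry either $\xi_2 X_2$ alone or a known $X_1$-plus-$\xi_2 X_2$ combination (the latter coming from the overlap with $\pm$ at $e_{first}$). Choosing additionally the coefficients along $P'_{11}$ so that $[\beta_1~\beta_2]=[1~0]$ (consistent with suppressing the $\xi_1 X_2$ contribution past $e_3$ on $P'_{11}$), the constraint $[\beta_1~\beta_2]\underline{\xi}=0$ forces $\underline{\xi}=[0~\xi_2]^T$ with $\xi_2\neq 0$. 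A direct inspection of the resulting $2\times 2$ matrix $[\tilde{M}_{21}~\tilde{M}_{22}\underline{\xi}]$ shows its two columns are linearly independent, yielding a nonzero determinant.

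The topology in Fig.~\ref{fig:t2ex13_de} is treated analogously; the difference is only in the relative position of $e_1$ and $e_3$ along $P'_{11}$ (with $e_1$ downstream of, or equal to, $e_3$), which does not affect the existence of a clean assignment whereby $e'_1$ captures $X_1$ via the branch along $\pm$ while $e'_2$ retains a nonzero $\xi_2 X_2$ component under the null-space choice of $\underline{\xi}$. The main obstacle I anticipate is verifying that the candidate assignment is globally consistent given the overlaps among $\pm$, $\pn$, $P'_{11}$, $P'_{21}$, and $P'_{22}$, and that the minimality and acyclicity of $\tilde{G}$ (together with the bullet-point exclusions of overlaps with $path(e_2-e_3)$ preceding the modification procedure) rule out any hidden cancellation in the determinant. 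Once the candidate assignment is validated in both topologies, a direct $2\times 2$ determinant computation completes the proof.
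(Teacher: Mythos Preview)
Your proposal is correct and follows essentially the same approach as the paper: reduce via Lemma~\ref{lemma:common} to showing the determinant in eq.~(\ref{eq:deterin}) is not identically zero, then exhibit one explicit assignment of local coding coefficients on $\tilde{G}$ (with $[\beta_1~\beta_2]=[1~0]$ and a corresponding simple $[\tilde{M}_{21}~\tilde{M}_{22}]$) for each of the two topologies. The paper's explicit matrices are $\left[\begin{smallmatrix}1&0&0\\0&0&1\end{smallmatrix}\right]$ for Fig.~\ref{fig:t2ex12_de} and $\left[\begin{smallmatrix}1&1&0\\0&0&1\end{smallmatrix}\right]$ for Fig.~\ref{fig:t2ex13_de}, which matches your description; note that the Schwartz--Zippel lemma is not actually needed for the claim itself, since exhibiting a single nonzero evaluation already shows the polynomial is not identically zero.
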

\begin{proof}
Based on previous arguments, we have identified the subgraph $\tilde{G}$ of $G$ that is topologically equivalent to either Fig. \ref{fig:t2ex12_de} or \ref{fig:t2ex13_de}.
By Lemma \ref{lemma:common}, proving the claim is equivalent to showing that the determinant of eq. (\ref{eq:deterin}) is not identically zero. Based on $\tilde{G}$ it is evident that local coding vectors for the case of Fig. \ref{fig:t2ex12_de} can be chosen such that
\begin{figure*}[htbp]
\hspace{-0.1in}\center{
\subfigure[]{\label{fig:t2ex12_de}
\includegraphics[width=42mm,clip=false, viewport=0 5 270 300]{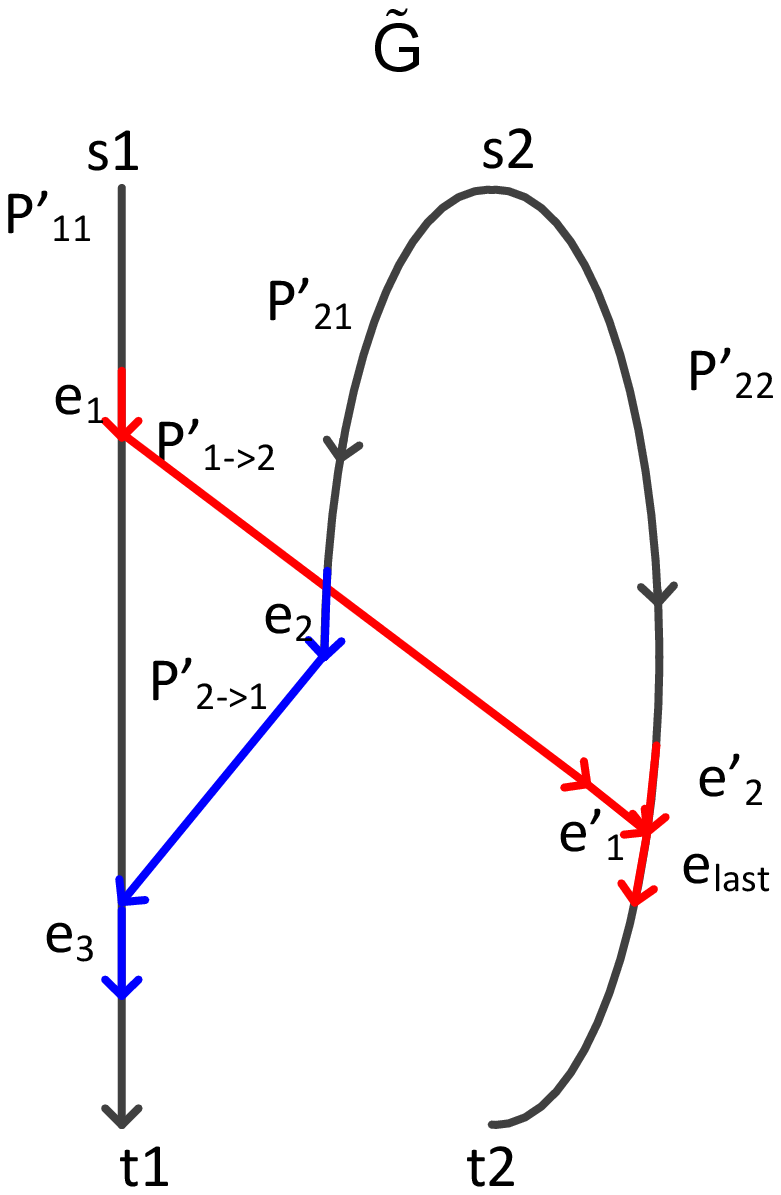}\vspace{-2mm}}
\subfigure[]{\label{fig:t2ex13_de}
\includegraphics[width=42mm,clip=false, viewport=0 5 270 300]{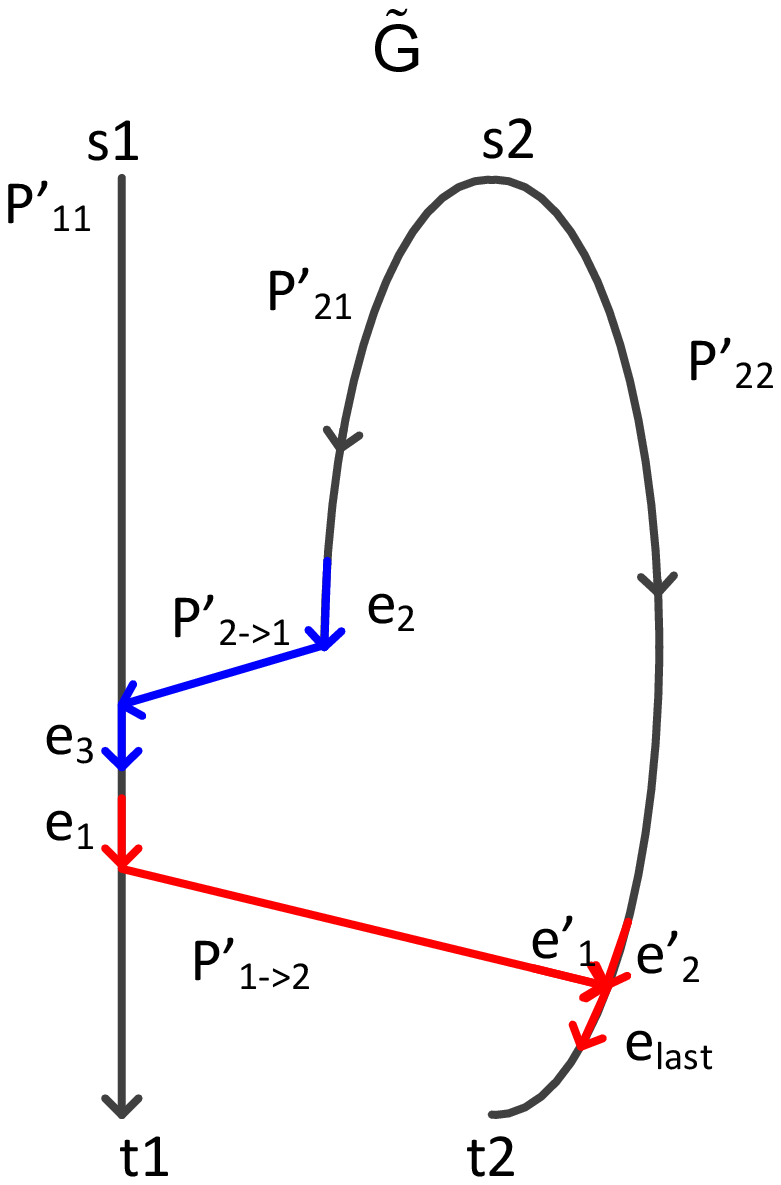}\vspace{-2mm}}
\subfigure[]{\label{fig:t2ex12_de_final}
\includegraphics[width=50mm,clip=false, viewport=0 5 310 340]{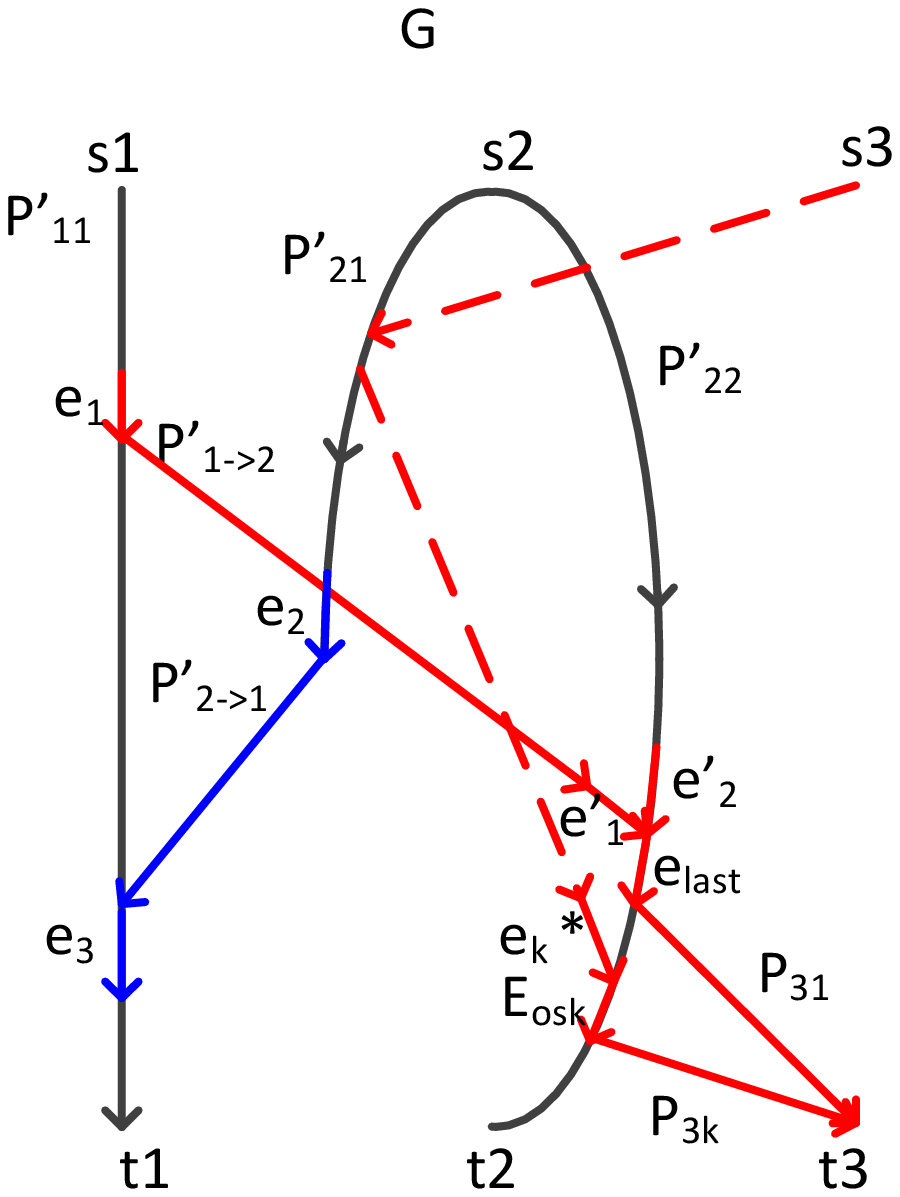}\vspace{-2mm}}
\caption{Figures (a) and (b) denote possible subgraphs $\tilde{G}$ obtained after the graph modification procedure for $G$. Figure (c) shows an example of the overlap between the red $s_3-t_3$ paths and $P'_{22}$.}}
\vspace{-0.2in}
\end{figure*}

$$[\beta_1~~\beta_2]=[1~0], \text{~and}$$
\begin{equation}
[\tilde{M}_{21}~~\tilde{M}_{22}]=\left[{\begin{array}{ccc}
1&0&0\\
0&0&1\\
\end{array}}\right].
\end{equation}
Similarly, for the case of Fig. \ref{fig:t2ex13_de} they can be chosen as
$$[\beta_1~~\beta_2]=[1~0], \text{~and}$$
\begin{equation}
[\tilde{M}_{21}~~\tilde{M}_{22}]=\left[{\begin{array}{ccc}
1&1&0\\
0&0&1\\
\end{array}}\right].
\end{equation}
Substituting the local coefficients into eq. (\ref{eq:deterin}) we have the required conclusion. 
\end{proof}
We now want to argue that $t_2$ can be satisfied.
Note that edge $e'_1$ must belong to a path from $\mathcal{P}_3$, as the graph is minimal. Assume that there are $k$ paths from $\mathcal{P}_3$ that overlap with $path(e_{last} - t_2)$; w.l.o.g. we assume that these are the paths $P_{31}, \dots, P_{3k}$.

Next, we note that there can be at most one overlap between a path $P_{3j}$ and $path(e_{last} - t_2)$. This is due to Step (iii) of the graph modification procedure, where we removed edges downstream of $e_{first}$, (and hence $e_{last}$) such that the $max-flow(s_3 - t_3) = 5$ and there is path between $e_{first} - t_2$. If there are multiple overlaps between $P_{3j}$ and $path(e_{last} - t_2)$, this would mean that there exists at least one edge that was not removed by Step (iii). As depicted in Fig. \ref{fig:t2ex12_de_final}, we denote the overlap segments as $E_{os1}, \dots, E_{osk}$, where $E_{osj}$ is upstream of $E_{os(j+1)}$ for $j = 1 ,..., k-1$ along $P'_{22}$. Also note that the first edge of $E_{os1}$ is $e_{last}$.

The next step in the code assignment is to use deterministic local coding coefficients so that the transmitted symbol on $e_{last}$ does not have an $X_1$ component. Note that it is guaranteed to have an $X_2$ component by the Claim \ref{claim:tilde_G} above. Following this, we again use random linear coding on edges downstream of $e_{last}$. By the definition of $e_{last}$ there is no edge $\in P'_{22}$ downstream of $e_{last}$ that is reachable from $s_1$. Thus all coding vectors along $P'_{22}$ downstream of $e_{last}$ do not have an $X_1$ component. Let the coding vector on the edge $\in E_{osk}$ closest to $t_2$ be denoted by $[0~|~\underline{\hat{\beta}}^T~|~\underline{\hat{\gamma}}^T]$, where it is evident that $\hat{\beta} \neq 0$ w.h.p. We enforce the precoding constraint $\underline{\hat{\gamma}}^T \underline{\theta} = 0$. This satisfies $t_2$.

Finally, we discuss the decoding at $t_3$. Consider the overlap segments  $E_{os1}, \dots, E_{osk}$ discussed above. Each of these overlap segments has an incoming edge that does not lie on $P'_{22}$ (the other has to be on $P'_{22}$). We denote these edges by $e^*_i, i = 1, \dots, k$, where we emphasize that $e^*_1 = e'_1$. Let the edges entering $t_3$ on paths $P_{3(k+1)}, \dots, P_{35}$ be denoted $e^*_{k+1}, \dots, e^*_5$. Denote the transfer matrix on the edges $e^*_1, \dots, e^*_5$ by $[\hat{M}_{31} ~|~ \hat{M}_{32} ~|~ \hat{M}_{33}]$. Note that with high probability it holds that $rank(\hat{M}_{33}) = 5$, since the max-flow from $s_3$ to these set of edges is $5$.

Next consider the rank of the coding vectors on edges $\{e_{last}, e^*_2, e^*_3, e^*_4, e^*_5\}$. For the sake of argument suppose that we remove the row of $\hat{M}_{33}$ corresponding to $e^*_1$ and replace it with the corresponding row of $e_{last}$. As we used a deterministic code assignment for edge $e_{last}$ the rank of the updated $\hat{M}_{33}$ may drop to four, however it will be no less than four since it has four linearly independent row vectors.

It can be seen that further random linear coding downstream of $e_{last}$ will therefore be such that $rank(M_{33})$ (recall that $[M_{31} | M_{32} |M_{33}]$ is the transfer matrix to $t_3$) is at least four w.h.p. Moreover, it can be seen that the information on $E_{osk}$ also reaches $t_3$, thus $t_3$ can decode $X_2$. Therefore at $t_3$ over the other four incoming edges we have a system of equations specified by the matrix $[\breve{M}_{31} | \breve{M}_{33}]$ (of dimension $4 \times 6$) with unknowns $X_1$ and $X_3$. Furthermore $rank(\breve{M}_{33}) \geq 3$. The constraints on $\underline{\theta}$ thus far dictate that there are $q^3 - 1$ non-zero choices for it. As shown in the appendix (cf. Lemma \ref{lemma:distinctValue}) this implies that there are at least $q^2 - 1$ distinct values for $\breve{M}_{33} \underline{\theta}$. For decoding $X_3$ at $t_3$, from Lemma \ref{lemma:partialDecode}, we need to have
\begin{equation}
\label{eq:cons32}
\breve{M}_{33}\underline{\theta}\notin span(\breve{M}_{31}).
\end{equation}
As there are at most $q$ vectors in the span of $M_{31}$, it follows that there are at least $q^2 - q - 1 > 0$ non-zero values of $\underline{\theta}$ such that $t_3$ can be satisfied.
\end{proof}

\begin{figure*}[t]
\hspace{-0.1in}\center{
\subfigure[]{\label{fig:col}
\includegraphics[width=35mm,clip=false, viewport=50 0 350 500]{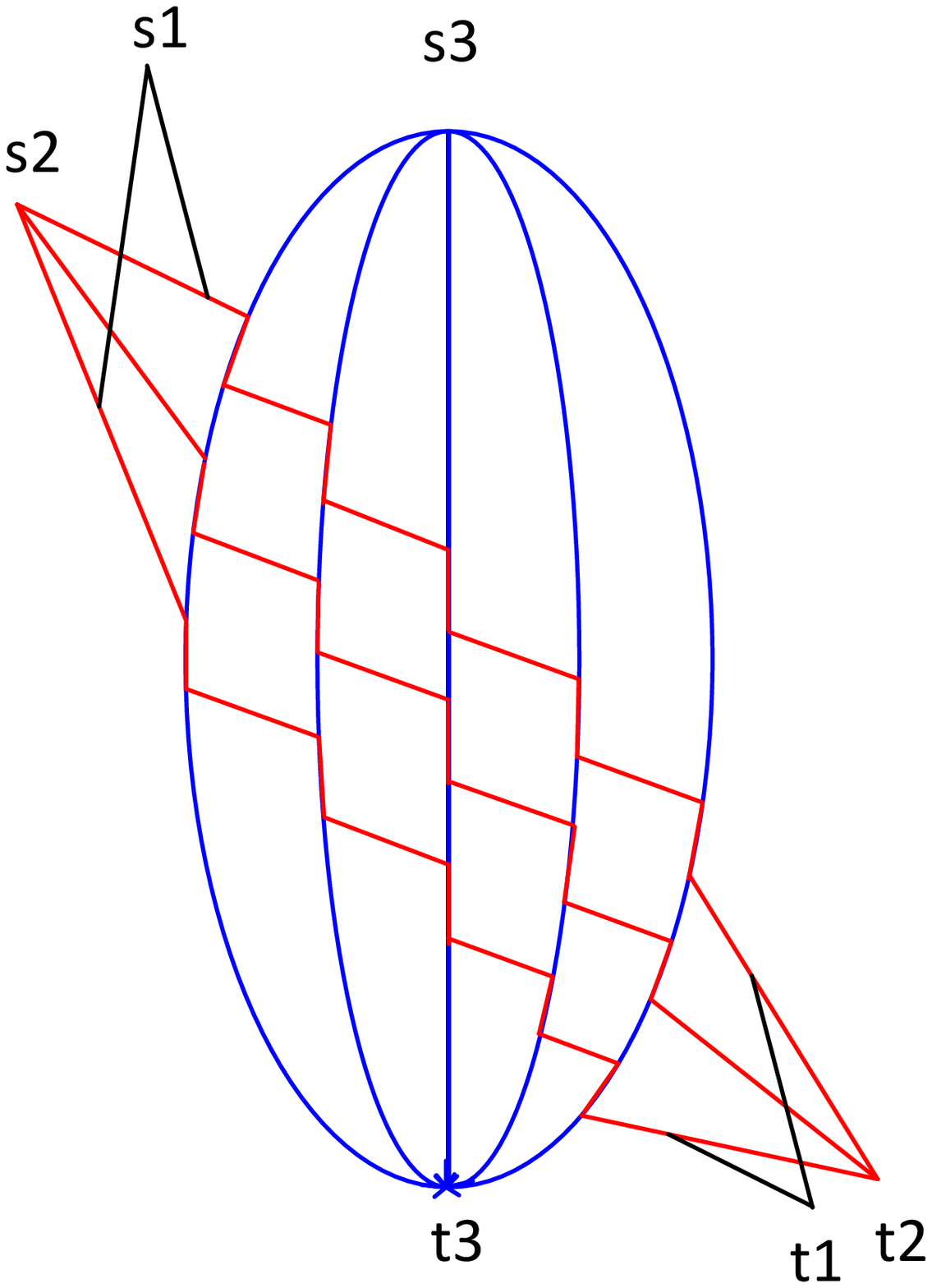}\vspace{-2mm}}
\subfigure[]{\label{fig:high_o}
\includegraphics[width=35mm,clip=false, viewport=0 0 440 720]{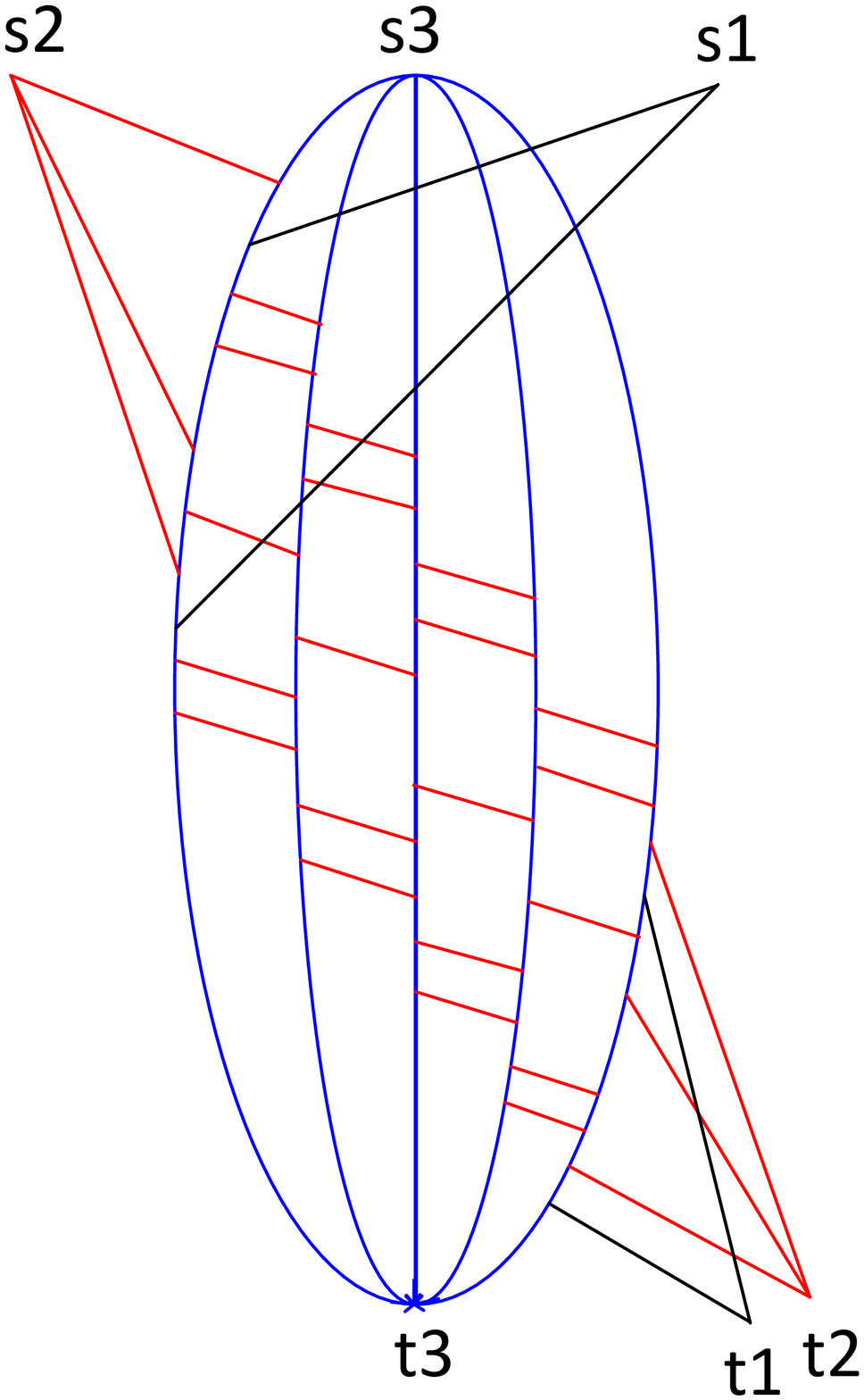}\vspace{-2mm}}
\subfigure[]{\label{fig:med_o}
\includegraphics[width=35mm,clip=false, viewport=0 0 440 720]{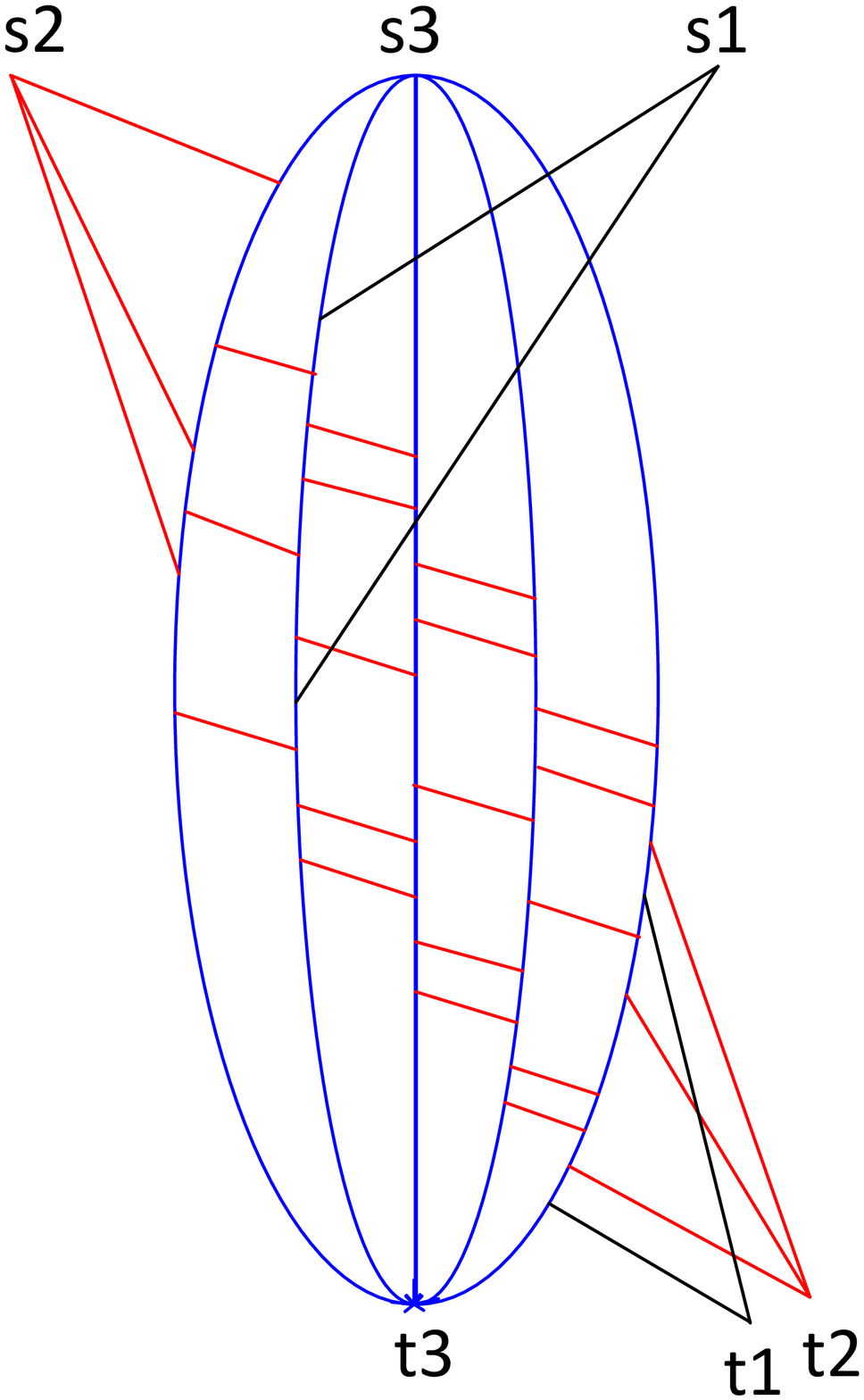}\vspace{-2mm}}
\subfigure[]{\label{fig:low_o}
\includegraphics[width=35mm,clip=false, viewport=0 0 440 720]{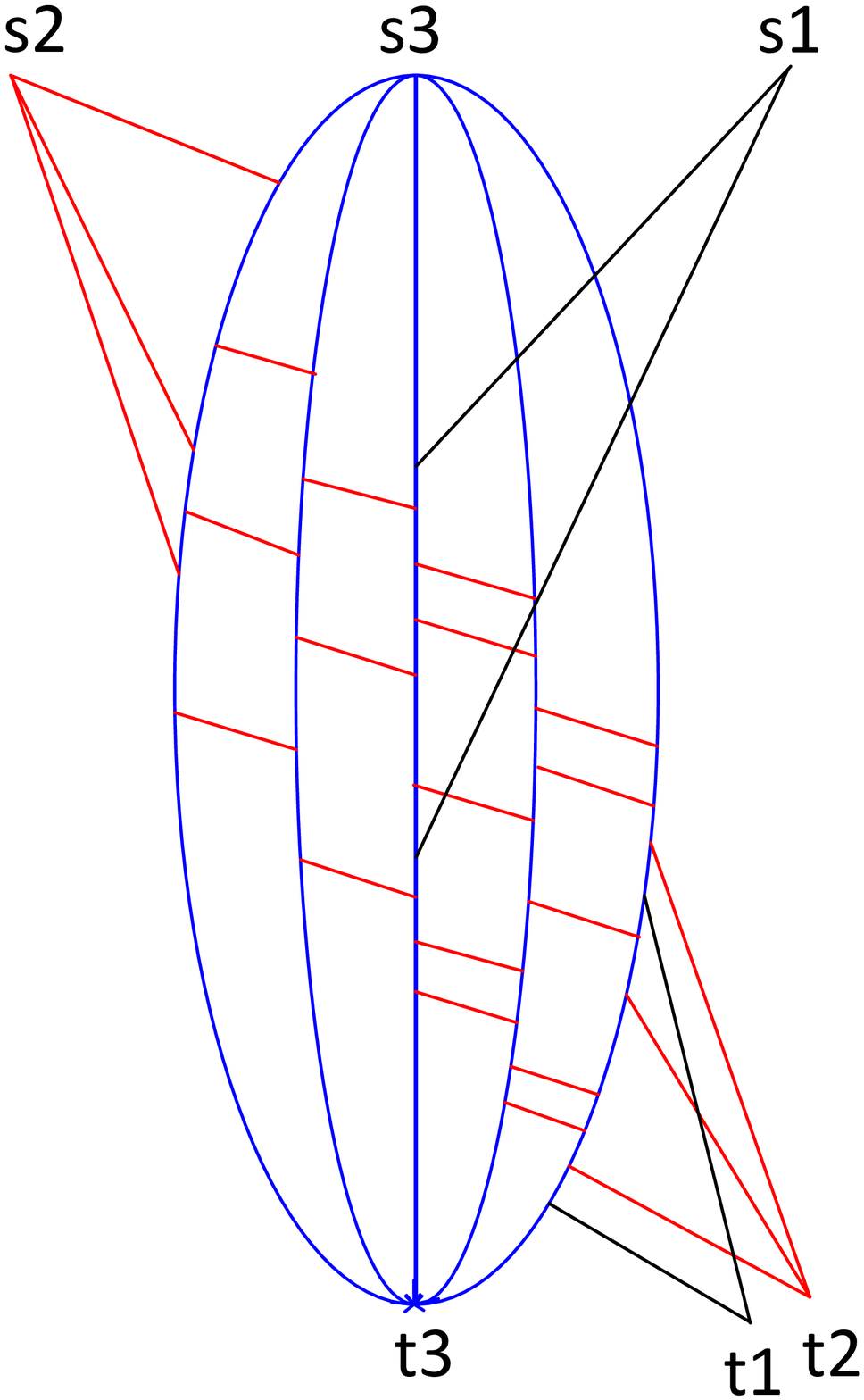}\vspace{-2mm}}
\caption{\label{fig:basis} a) Level-1 network. b) Level-2 network. c) Level-3 network. d) Level-4 network.}}
\vspace{-0.2in}
\end{figure*}
\section{Simulation Results}
\label{sec:simu}
Our feasibility results thus far have been for the case of unit-rate transmission over networks with unit-capacity edges. In this section, we present simulation results that demonstrate that these can also be used for networks with higher edge capacities, that can potentially support higher rates for the connections. The main idea is to pack multiple basic feasible solutions along with fractional routing solutions to achieve a higher throughput. The packing can be achieved by formulating appropriate integer linear programs. We compared these results to the case of solutions that can be achieved via pure fractional routing.

We applied our technique to several classes of networks. We did not see a benefit in the case of networks generated using random geometric graphs (this is consistent with previous results \cite{Traskov06}). We have found that our techniques are most powerful for networks where the paths between the various $s_i - t_i$ pairs have significant overlap. Accordingly, we experimented with four classes of networks (shown in Fig. \ref{fig:basis}) with varying levels of overlap between the different source-terminal pairs. 
The level-1 network (Fig. \ref{fig:basis}(a)) has the maximum overlap between the $s_1 - t_1$ paths and the other paths; the overlap decreases with an increase in the level number of the network.
The edge capacities in the networks were chosen randomly and independently with distributions as explained below. We conducted two sets of simulations. 
\begin{list}{}{\leftmargin=0.1in \labelwidth=0cm \labelsep = 0cm}
\item[$\bullet$~] {\it Simulation 1.} Let $C$ denote the edge capacity. For the level-1 network for the black edges we chose $P(C = 1) = 0.25, P(C=2) = 0.4, P(C=3) = 0.35$; for the other edges, $P(C = 1) = 0.15, P(C=2) = 0.6, P(C=3) = 0.25$. In the other networks we chose $P(C = 1) = 0.15, P(C=2) = 0.6, P(C=3) = 0.25$ for all the edges. Thus in this set of simulations, the maximum edge capacity is three. We generated 300 networks from these distributions and compared the performance of our schemes with pure fractional routing. The results shown in the first row of Table \ref{table:dis1} indicate that the level-1 network has the maximum number of instances where a difference in the throughput was observed; both $[1~2~5]$ and $[2~2~4]$ structures appear here. For the other networks, the $[2~2~4]$ structure appeared most often. The second row of Table \ref{table:dis1} records the average performance improvement when there was a difference between our scheme and routing; it varies between 4.9\% to 5.59\%.
\item[$\bullet$~] {\it Simulation 2.} In this set of simulations we increased the average edge capacity. For the level-1 network for the black edges we chose $P(C = 5) = 0.25, P(C=6) = 0.4, P(C=7) = 0.35$; for the other edges, $P(C = 5) = 0.15, P(C=6) = 0.6, P(C=7) = 0.25$. In the other networks we chose $P(C = 5) = 0.15, P(C=6) = 0.6, P(C=7) = 0.25$ for all the edges. Again, we generated 300 networks from these distributions and compared the performance of our schemes with pure fractional routing. The results shown in the third row of Table \ref{table:dis1} indicate that in this higher capacity simulation, the number of networks where our schemes outperform pure routing is significantly higher. For instance for the level-2 and level-3 networks more than 50\% of the networks showed an increase in the throughput using our methods. Another interesting point, is that one observes an increased gap for level-3 networks compared to the other cases. The fourth row of Table \ref{table:dis1} records the average performance improvement when there was a difference between our scheme and routing; it varies between 0.45\% to 1.16\%.
\end{list}

We found that though there were instances of all the structures being packed by the ILP, the majority were $[2~2~4]$ structures. For the level-4 network, since $[2~2~4]$ structure cannot be packed effectively, there is a significant drop in the proportions of networks that exhibit a difference with respect to routing as compared to the level-3 and level-4 networks. There were significant advantages in our approach for the case of networks with higher edge capacities as in these networks the chance of packing our basic feasible structures is higher. The average performance improvement obtained when there was a difference between our schemes and routing is not very high. We remark that the complexity of running the ILP increases with higher edge capacities and that was a limiting factor in our experiments; the performance improvement may be higher for large scale examples. Overall, our results indicate that there is a benefit to using our techniques even for networks with higher capacities, where the different source-terminal paths have a large overlap.
%

\begin{table}\scriptsize \caption{Proportions of networks with differences and performance improvement} \centering
{\begin{tabular}{c|c c c c}
\hline\\
{\bf Network} & Level-1  & Level-2 & Level-3 & Level-4 \\
\hline\\
Simulation 1 proportions& 5.33\% & 2.33\% & 1\% & 0\\
\hline\\
Performance improvement& 5.59\% & 5.06\% & 4.90\% & -\\
\hline\\
Simulation 2 proportions& 47\% & 53\% & 80.67\% & 2.33\%\\
\hline\\
Performance improvement& 1.16\% & 1.31\% & 1.36\% & 0.45\%\\
\hline
\end{tabular}}
\label{table:dis1} \vspace{-0.1in}
\end{table} \normalsize

%


\section{Conclusions and Future Work}
\label{sec:con}
In this work we considered the three-source, three-terminal multiple unicast problem for directed acyclic networks with unit capacity edges. Our focus was on characterizing the feasibility of achieving unit-rate transmission for each session based on the knowledge of the connectivity level vector. For the infeasible instances we have demonstrated specific network topologies where communicating at unit-rate is impossible, while for the feasible instances we have designed constructive linear network coding schemes that satisfy the demands of each terminal.
Our schemes are non-asymptotic and require vector network coding over at most two time units. Our work leaves out one specific connectivity level vector, namely $[1~2~4]$ for which we have been unable to provide either a feasible network code or a network topology where communicating at unit rate is impossible. Our experimental results indicate that there are benefits to using our techniques even for networks where the edges have higher and potentially different capacities. Specifically, our basic feasible solutions can be packed along with routing to obtain a higher throughput. Future work would include a study of real-world networks where these techniques are most useful.


\appendix

\begin{claim}
\label{claim:endecode}
For two independent random variables $X_1$ and $X_2$ with $H(X_1)=a$ and $H(X_2)=b$, if $H(X_1|Y)\leq \ep$ where $Y$ is another random variable with $H(Y)\leq a$, then $b-\ep\leq H(X_2|Y)\leq b$, $H(Y|X_2)\geq a-2\ep$.
\end{claim}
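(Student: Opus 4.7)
The plan is to prove the three inequalities one at a time using standard information-theoretic identities (chain rule, independence, and the fact that conditioning reduces entropy). The hypothesis $H(X_1 \mid Y) \le \epsilon$ should be interpreted as saying that $Y$ \emph{almost} determines $X_1$, while the bound $H(Y) \le a = H(X_1)$ says $Y$ has no ``room'' left to carry information about $X_2$; both intuitions drive the argument.

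First I would dispatch the upper bound $H(X_2 \mid Y) \le H(X_2) = b$, which is immediate from conditioning reducing entropy. Next, for the lower bound $H(X_2 \mid Y) \ge b - \epsilon$, I would argue in two steps. Using independence of $X_1$ and $X_2$, we have $H(X_1, X_2) = a + b$, and since $I(X_1, X_2; Y) \le H(Y) \le a$, we obtain
\begin{equation*}
H(X_1, X_2 \mid Y) \;=\; H(X_1, X_2) - I(X_1, X_2; Y) \;\ge\; (a+b) - a \;=\; b.
\end{equation*}
On the other hand, the chain rule plus ``conditioning reduces entropy'' gives
\begin{equation*}
H(X_1, X_2 \mid Y) \;=\; H(X_2 \mid Y) + H(X_1 \mid X_2, Y) \;\le\; H(X_2 \mid Y) + H(X_1 \mid Y) \;\le\; H(X_2 \mid Y) + \epsilon.
\end{equation*}
Combining the two yields $H(X_2 \mid Y) \ge b - \epsilon$.

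For the final inequality $H(Y \mid X_2) \ge a - 2\epsilon$, I would expand $H(Y, X_1 \mid X_2)$ in two ways. Using independence, $H(X_1 \mid X_2) = H(X_1) = a$, so
\begin{equation*}
H(Y, X_1 \mid X_2) \;=\; H(X_1 \mid X_2) + H(Y \mid X_1, X_2) \;\ge\; a.
\end{equation*}
The other expansion gives $H(Y, X_1 \mid X_2) = H(Y \mid X_2) + H(X_1 \mid Y, X_2)$, and since $H(X_1 \mid Y, X_2) \le H(X_1 \mid Y) \le \epsilon$, we conclude $H(Y \mid X_2) \ge a - \epsilon$, which is in fact stronger than the claimed $a - 2\epsilon$.

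There is no real obstacle here: every step is a one-line manipulation. The only point that needs a small bit of care is the lower bound $H(X_1, X_2 \mid Y) \ge b$, where the independence hypothesis and the entropy bound on $Y$ must both be invoked; missing either one would not give the right constant. Everything else is a mechanical application of the chain rule combined with the monotonicity of conditional entropy.
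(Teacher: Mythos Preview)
Your proof is correct. The route differs from the paper's: there the argument first establishes the auxiliary bounds $H(Y)\geq a-\epsilon$ and $H(Y\mid X_1)\leq \epsilon$, and then chains through them, whereas you work directly with the joint entropies $H(X_1,X_2\mid Y)$ and $H(Y,X_1\mid X_2)$ via the chain rule. Your approach is a bit more streamlined and, as you noticed, actually yields the sharper bound $H(Y\mid X_2)\geq a-\epsilon$; the paper's derivation loses an extra $\epsilon$ because it combines the two already-lossy estimates $H(Y)\geq a-\epsilon$ and $H(X_2\mid Y)\geq b-\epsilon$ in the identity $H(Y\mid X_2)=H(Y)+H(X_2\mid Y)-H(X_2)$. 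Note also that your argument for $H(Y\mid X_2)\geq a-\epsilon$ does not even use the hypothesis $H(Y)\leq a$, which is consistent with the fact that an \emph{upper} bound on $H(Y)$ should not be needed to \emph{lower}-bound $H(Y\mid X_2)$.
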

\begin{proof}
Since $H(X_1)=a$ and $H(X_1|Y)\leq \ep$, we have
$$
H(Y)=H(X_1,Y)-H(X_1|Y)\geq H(X_1)-H(X_1|Y)\geq a-\ep.
$$
Next $H(Y)\leq a$ implies that
$$
H(Y|X_1)=H(X_1| Y)+H(Y)-H(X_1)\leq \ep+a-a=\ep.
$$
As $X_1$ and $X_2$ are independent and $H(X_2)=b$, we have
\begin{equation*}
\begin{split}
b&=H(X_2)=H(X_2|X_1)\leq H(X_2|X_1, Y)+H(Y|X_1)\\
&\leq H(X_2|X_1, Y)+\ep\leq H(X_2|Y)+\ep\leq b+\ep.
\end{split}
\end{equation*}
Thus,
\vspace{-0.1in}
\begin{equation*}
b-\ep\leq H(X_2|Y)\leq b.
\end{equation*}
Finally, we obtain
\begin{equation*}
\begin{split}
H(Y|X_2)&=H(Y)-I(Y;X_2)=H(Y)+H(X_2|Y)-H(X_2)\\
&\geq a-\ep+b-\ep-b=a-2\ep
\end{split}
\end{equation*}
\end{proof}

\begin{lemma}
\label{lemma:common}
If $\beta_1\neq 0$, $det([M_{21}~~M_{22}\underline{\xi}])$ can be represented by
\vspace{-0.13in}
\begin{equation}
\label{eq:deterin}
\frac{\xi_2}{\beta_1}\text{det}\left[{\begin{array}{cc}
\alpha'_1&-\beta_2\beta'_{11}+\beta_1\beta'_{12}\\
\alpha'_2&-\beta_2\beta'_{21}+\beta_1\beta'_{22}\\
\end{array}}\right].
\end{equation}
where $\underline{\xi}$ satisfies $[\beta_1 ~\beta_2] \underline{\xi} = 0$.
\end{lemma}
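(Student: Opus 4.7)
The plan is a direct algebraic computation driven by the precoding constraint on $\underline{\xi}$. Since $\beta_1 \neq 0$, the constraint $\beta_1 \xi_1 + \beta_2 \xi_2 = 0$ can be solved uniquely for $\xi_1$ in terms of $\xi_2$, namely $\xi_1 = -(\beta_2/\beta_1)\,\xi_2$. This eliminates one degree of freedom and parameterizes $\underline{\xi}$ by the single scalar $\xi_2$.

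First I would substitute $\xi_1 = -(\beta_2/\beta_1)\xi_2$ into the product $M_{22}\underline{\xi}$. Collecting terms, each component of the resulting $2 \times 1$ vector carries a common factor of $\xi_2/\beta_1$, so that
\[
M_{22}\underline{\xi} \;=\; \frac{\xi_2}{\beta_1}\left[\begin{array}{c}-\beta_2\beta'_{11}+\beta_1\beta'_{12}\\ -\beta_2\beta'_{21}+\beta_1\beta'_{22}\end{array}\right].
\]
Placing this alongside $M_{21}=[\alpha'_1~\alpha'_2]^T$ to form the $2\times 2$ matrix $[M_{21}~|~M_{22}\underline{\xi}]$ and invoking the multilinearity of the determinant in the second column, the scalar $\xi_2/\beta_1$ factors out cleanly, leaving exactly the determinant displayed on the right-hand side of~(\ref{eq:deterin}).

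There is no real obstacle here: the lemma is a purely algebraic identity, and the only point that requires attention is the hypothesis $\beta_1 \neq 0$, which is precisely what allows us to solve for $\xi_1$ and thereby pull the common factor out of the column. The value of the identity lies not in its proof but in its downstream use, as it recasts the question of whether $\det[M_{21}~|~M_{22}\underline{\xi}]$ is identically zero (subject to the precoding constraint) into the simpler question of whether a particular $2\times 2$ matrix in the basic transfer coefficients $\alpha'_i, \beta_j, \beta'_{ij}$ is non-singular, a form that can be verified by exhibiting an explicit assignment of local coding vectors, as done in the proofs of Claim~\ref{claim:fullrank} and Claim~\ref{claim:tilde_G}.
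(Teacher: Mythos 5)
Your proposal is correct and follows essentially the same route as the paper's own proof: solve the precoding constraint for $\xi_1=-\beta_2\xi_2/\beta_1$ (valid since $\beta_1\neq 0$), substitute into $M_{22}\underline{\xi}$, and pull the common scalar $\xi_2/\beta_1$ out of the second column of the determinant. No gaps.
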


\begin{proof}
Because $\underline{\xi}$ satisfies $[\beta_1 ~\beta_2] \underline{\xi} = 0$, we can have $\xi_1=-\beta_2\xi_2/\beta_1$. Note $\xi_2$ can be selected to be nonzero, regardless of the value of $\beta_2$. 
By substituting $\xi_1$ into $[M_{21}~~M_{22}\underline{\xi}]$, the determinant of $[M_{21}~~M_{22}\underline{\xi}]$ becomes
\begin{equation}
\label{eq:12random}
\begin{split}
&\text{det}\left[M_{21}~M_{22}\left[{\begin{array}{c}
-\frac{\beta_2\xi_2}{\beta_1}\\
\xi_2\\
\end{array}}\right]\right] =\frac{\xi_2}{\beta_1}\text{det}\left[{\begin{array}{cc}
\alpha'_1&-\beta_2\beta'_{11}+\beta_1\beta'_{12}\\
\alpha'_2&-\beta_2\beta'_{21}+\beta_1\beta'_{22}\\
\end{array}}\right],
\end{split}
\end{equation}
%
where $\xi_2/\beta_1$ is nonzero.
\end{proof}
\begin{lemma}
\label{lemma:partialDecode} Consider a system of equations
$Z=H_1X_1+H_2X_2$, where $X_1$ is a vector of length $l_1$ and $X_2$
is a vector of length $l_2$ and $Z\in span([H_1~~H_2])$\footnote{$span(A)$ refers to
the column span of $A$.}. The matrix
$H_1$ has dimension $z_t\times l_1$, and rank $l_1-\sigma$, where
$0\leq\sigma\leq l_1$. The matrix $H_2$ is full rank and has
dimension $z_t\times l_2$ where $z_t\geq (l_1+l_2-\sigma)$.
Furthermore, the column spans of $H_1$ and $H_2$ intersect only in
the all-zeros vectors, i.e. $span(H_1)\cap
span(H_2)=\{0\}$. Then there exists a unique solution for
$X_2$.
\end{lemma}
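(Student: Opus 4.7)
The plan is to establish existence and uniqueness of $X_2$ separately, with uniqueness being the substantive content.

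For existence, the hypothesis $Z\in\mathrm{span}([H_1~H_2])$ immediately yields at least one pair $(X_1,X_2)$ satisfying $Z=H_1X_1+H_2X_2$, so we may focus entirely on uniqueness of the $X_2$ component.

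For uniqueness, I would suppose $(X_1,X_2)$ and $(X_1',X_2')$ are two solutions and rearrange to obtain $H_1(X_1-X_1')=H_2(X_2'-X_2)$. The left-hand side lies in $\mathrm{span}(H_1)$ while the right-hand side lies in $\mathrm{span}(H_2)$, so both sides lie in the intersection $\mathrm{span}(H_1)\cap\mathrm{span}(H_2)=\{0\}$. This forces $H_2(X_2'-X_2)=0$. Because $H_2$ has dimension $z_t\times l_2$ with $z_t\geq l_1+l_2-\sigma\geq l_2$ and is full rank, its rank equals $l_2$, i.e., it has full column rank, so $X_2'-X_2=0$ and $X_2$ is unique.

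I would briefly remark on the role of the dimensional hypothesis $z_t\geq l_1+l_2-\sigma$: together with $\mathrm{span}(H_1)\cap\mathrm{span}(H_2)=\{0\}$, it is consistent with the direct sum $\mathrm{span}(H_1)\oplus\mathrm{span}(H_2)$ having dimension $(l_1-\sigma)+l_2$, which fits inside the ambient $z_t$-dimensional space; this is a sanity check that the hypotheses are compatible, but it is not invoked directly in the uniqueness argument itself. There is no real obstacle here—the proof is essentially a one-line direct-sum argument—so I would keep the write-up short and emphasize how the two structural hypotheses (trivial intersection and full column rank of $H_2$) are each used exactly once.
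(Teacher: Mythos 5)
Your argument is correct and is essentially identical to the paper's own proof: both establish existence from $Z\in span([H_1~H_2])$ and uniqueness by subtracting two solutions, invoking the trivial intersection of the column spans and then the full column rank of $H_2$. Your added remark that $z_t\geq l_1+l_2-\sigma\geq l_2$ pins down ``full rank'' as full \emph{column} rank is a small but welcome clarification over the paper's wording.
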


\begin{proof} Since $Z\in span([H_1~~H_2])$, there
exists $X_1$ and $X_2$ such that $Z=H_1X_1+H_2X_2$. Now assume there
is another set of $X'_1$ and $X'_2$ such that $Z=H_1X'_1+H_2X'_2$.
Then we will have
\begin{equation}
\label{eq:pf} H_1(X_1-X'_1)=H_2(X_2-X'_2).
\end{equation}
Because $span(H_1)\cap span(H_2)=\{0\}$, both sides of eq.
\ref{eq:pf} are zero. Furthermore, since $H_2$ is a full rank
matrix, $X_2=X'_2$. The solution of $X_2$ is unique.
\end{proof}

\begin{lemma}
\label{lemma:distinctValue} There are at least $q^2 - 1$ distinct values for $\breve{M}_{33} \underline{\theta}$ when there are $q^3 - 1$ distinct values for $\underline{\theta}$.
\end{lemma}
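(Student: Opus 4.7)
The plan is to translate the statement into a dimension count. The hypothesis that $\underline{\theta}$ takes $q^3 - 1$ distinct values means $\underline{\theta}$ ranges over the nonzero vectors of a three-dimensional subspace $V$ of $GF(q)^5$, namely the simultaneous null space of the two precoding constraints $\underline{\gamma}^T \underline{\theta} = 0$ and $\underline{\hat{\gamma}}^T \underline{\theta} = 0$ imposed during the proof of Theorem \ref{th:case1}. I would then study the linear map $\phi : V \to GF(q)^4$ given by $\phi(\underline{\theta}) = \breve{M}_{33}\underline{\theta}$. Because the fibers of $\phi$ are cosets of its kernel, the image has cardinality
\[
|\phi(V)| \;=\; q^{\,3 - \dim(V \cap \ker \breve{M}_{33})},
\]
so it suffices to show $\dim(V \cap \ker \breve{M}_{33}) \le 1$, which gives $|\phi(V)| \ge q^2 \ge q^2 - 1$ as required.

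To establish this intersection bound, I would stack the two constraint covectors on top of the four rows of $\breve{M}_{33}$ to form a $6 \times 5$ matrix $A$. Then $V \cap \ker \breve{M}_{33} = \ker A$ exactly, so the claim $\dim(V \cap \ker \breve{M}_{33}) \le 1$ becomes $\mathrm{rank}(A) \ge 4$. Every entry of $A$ is a polynomial in the indeterminate local coding vectors used throughout the paper. Following the Schwartz--Zippel pattern repeatedly invoked earlier (cf.\ Claim \ref{claim:tilde_G}), it suffices to exhibit one specific assignment of local coding vectors, derived from the graph structure of the minimal subgraph $G'$ identified in the proof of Theorem \ref{th:case1}, for which some $4 \times 4$ minor of $A$ is nonzero. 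Under random linear network coding, $\mathrm{rank}(A) \ge 4$ then holds with high probability, and the lemma follows.

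The main obstacle will be producing the explicit coefficient witness, because the bare inequality $\mathrm{rank}(\breve{M}_{33}) \ge 3$ only forces $\dim \ker \breve{M}_{33} \le 2$ and so does not rule out a two-dimensional intersection with $V$ by pure dimension counting. What makes the argument work is that the constraint vectors $\underline{\gamma}$ and $\underline{\hat{\gamma}}$ arise from edges tied to the $t_1$ and $t_2$ decoding conditions respectively, while $\breve{M}_{33}$ encodes the $s_3$-to-$t_3$ transfer along essentially disjoint parts of the network, so the two subspaces should be in general position generically. A short case analysis over the two admissible topologies in Figs.\ \ref{fig:t2ex12_de} and \ref{fig:t2ex13_de}, mirroring the explicit assignments already constructed in Case 2, should be enough to exhibit the required $4 \times 4$ nonvanishing minor and close the argument.
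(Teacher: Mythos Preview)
Your dimension-counting reduction is correct and is exactly the skeleton of the paper's argument: one needs $\dim(V\cap\ker\breve{M}_{33})\le 1$, equivalently that stacking the relevant constraint row(s) onto $\breve{M}_{33}$ yields a matrix of rank at least $4$. Where your plan diverges is in \emph{how} you propose to certify that rank bound.

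The shortcut you are missing is that one of your two constraint covectors, $\underline{\hat{\gamma}}$, is precisely the row of $M_{33}$ that was deleted to form $\breve{M}_{33}$. In the main proof the information on $E_{osk}$ reaches $t_3$, so one incoming edge of $t_3$ carries the coding vector $[0\mid\underline{\hat{\beta}}^{T}\mid\underline{\hat{\gamma}}^{T}]$, and $\breve{M}_{33}$ consists of the \emph{other} four rows of $M_{33}$. Hence $[\breve{M}_{33};\,\underline{\hat{\gamma}}^{T}]$ is $M_{33}$ up to a row permutation, and the already-established inequality $\mathrm{rank}(M_{33})\ge 4$ delivers the needed bound immediately. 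This is what the paper does: it picks $\breve{\underline{\gamma}}_{1}=\underline{\hat{\gamma}}$, notes $\mathrm{rank}[\breve{M}_{33};\,\breve{\underline{\gamma}}_{1}]\ge 4$ ``since $\mathrm{rank}(M_{33})\ge 4$'', and then completes to rank $5$ with an arbitrary $\breve{\underline{\gamma}}_{2}$ only to run the same counting you wrote down. The first constraint $\underline{\gamma}$ plays no role.

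Your proposed alternative route---building a fresh Schwartz--Zippel witness from the subgraph $G'$ in the style of Claim~\ref{claim:tilde_G}---runs into genuine difficulties. The matrix $\breve{M}_{33}$ records the $s_3\!\to\! t_3$ transfer along the paths $P_{31},\dots,P_{35}$, none of which lie in $G'$; an assignment supported on $G'$ would zero out $\breve{M}_{33}$ entirely, so ``mirroring the explicit assignments already constructed in Case 2'' cannot work as stated. You would have to build a witness over the full graph $G$, whose $s_3$--$t_3$ structure is not pinned down by the case analysis. In addition, because of the deterministic cancellation performed at $e_{last}$, the entries of $\underline{\hat{\gamma}}$ are rational rather than polynomial in the local coding indeterminates, so a direct Schwartz--Zippel step would need extra care. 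None of this is unfixable, but it is a substantial detour compared to the one-line observation the paper uses.
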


\begin{proof} Since $\breve{M}_{33}$ is a $4\times 5$ matrix with rank at least 3, we can find two vectors $\breve{\underline{\gamma}}_1$ and $\breve{\underline{\gamma}}_2$ such that the matrix
$\breve{M}'_{33}=[
\breve{M}_{33}^T~|~\breve{\underline{\gamma}}_1~|~\breve{\underline{\gamma}}_2]^T$
and $rank(\breve{M}'_{33})=5$. This implies that there are $q^3-1$ distinct values for $\breve{M}'_{33} \underline{\theta}$. Next note that since $rank(M_{33})\geq 4$,  $\breve{\underline{\gamma}}_1$ can be selected as the coding vector for $\underline{\theta}X_3$ on $E_{osk}$ so that
$rank[
\breve{M}_{33}^T~|~
\breve{\underline{\gamma}}_1]^T\geq 4$.
The precoding constraint implies that $\breve{\underline{\gamma}}_1^T \underline{\theta}=0$. Hence, by removing $\breve{\underline{\gamma}}_1\underline{\theta}$ from $\breve{M}'_{33} \underline{\theta}$, there continue to be $q^3-1$ distinct vectors. If we further remove $\breve{\underline{\gamma}}_2\underline{\theta}$ from $\breve{M}'_{33} \underline{\theta}$, there will be at least $q^2-1$ distinct values, i.e., there are $q^2-1$ distinct values for $\breve{M}_{33} \underline{\theta}$.
\end{proof}

\bibliographystyle{IEEEtran}
\bibliography{unicast}

\begin{thebibliography}{10}
\providecommand{\url}[1]{#1}
\csname url@samestyle\endcsname
\providecommand{\newblock}{\relax}
\providecommand{\bibinfo}[2]{#2}
\providecommand{\BIBentrySTDinterwordspacing}{\spaceskip=0pt\relax}
\providecommand{\BIBentryALTinterwordstretchfactor}{4}
\providecommand{\BIBentryALTinterwordspacing}{\spaceskip=\fontdimen2\font plus
\BIBentryALTinterwordstretchfactor\fontdimen3\font minus
  \fontdimen4\font\relax}
\providecommand{\BIBforeignlanguage}[2]{{%
\expandafter\ifx\csname l@#1\endcsname\relax
\typeout{** WARNING: IEEEtran.bst: No hyphenation pattern has been}%
\typeout{** loaded for the language `#1'. Using the pattern for}%
\typeout{** the default language instead.}%
\else
\language=\csname l@#1\endcsname
\fi
#2}}
\providecommand{\BIBdecl}{\relax}
\BIBdecl

\bibitem{rm}
R.~Koetter and M.~M\'{e}dard, ``{An Algebraic approach to network coding},''
  \emph{IEEE/ACM Trans. on Networking}, vol. 11, no. 5, pp. 782--795, 2003.

\bibitem{hoMKKESL06}
T.~Ho, M.~M\'{e}dard, R.~Koetter, D.~R. Karger, M.~Effros, J.~Shi, and
  B.~Leong, ``{A Random Linear Network Coding Approach to Multicast},''
  \emph{IEEE Trans. on Info. Th.}, vol. 52(10), pp. 4413--4430, 2006.

\bibitem{ebrahimiF11}
J.~B. Ebrahimi and C.~Fragouli, ``Algebraic algorithms for vector network
  coding,'' \emph{IEEE Trans. on Info. Th.}, vol.~57, no.~2, pp. 996 --1007,
  2011.

\bibitem{DoughertyFZ05}
R.~Dougherty, C.~Freiling, and K.~Zeger, ``Insufficiency of linear coding in
  network information flow,'' \emph{IEEE Trans. on Info. Th.}, vol.~51, no.~8,
  pp. 2745 -- 2759, 2005.

\bibitem{medardEHK03}
M.~M\'{e}dard, M.~Effros, T.~Ho, and D.~Karger, ``{On coding for non-multicast
  networks},'' \emph{$41^{st}$ Allerton Conference on Communication, Control
  and Computing}, 2003.

\bibitem{yan06isit}
X.~Yan, R.~W. Yeung, and Z.~Zhang, ``{The Capacity Region for Multi-source
  Multi-sink Network Coding},'' in \emph{IEEE Intl. Symp. on Info. Th.}, 2007,
  pp. 116--120.

\bibitem{HarveyIT}
N.~Harvey, R.~Kleinberg, and A.~Lehman, ``{On the capacity of information
  networks},'' \emph{IEEE Trans. on Info. Th.}, vol. 52, no. 6, pp. 2345--2364,
  2006.

\bibitem{Traskov06}
D.~Traskov, N.~Ratnakar, D.~Lun, R.~Koetter, and M.~Medard, ``{Network Coding
  for Multiple Unicasts: An Approach based on Linear Optimization},'' in
  \emph{IEEE Intl. Symp. on Info. Th.}, 2006, pp. 1758--1762.

\bibitem{tracy06allerton}
T.~Ho, Y.~Chang, and K.~J. Han, ``{On constructive network coding for multiple
  unicasts},'' in \emph{44th Allerton Conf. on Comm., Contr. and Comp.}, 2006.

\bibitem{wangIT10}
C.-C. Wang and N.~B. Shroff, ``{Pairwise Intersession Network Coding on
  Directed Networks},'' \emph{IEEE Trans. on Info. Th.}, vol. 56, no. 8, pp.
  3879--3900, 2010.

\bibitem{feder09}
E.~Erez and M.~Feder, ``{Improving the Multicommodity Flow Rate with Network
  Codes for Two Sources},'' \emph{IEEE J. Select. Areas Comm.}, vol. 27, no. 5,
  pp. 814--824, 2009.

\bibitem{Kamath11}
S.~U. Kamath, D.~N.~C. Tse, and V.~Anantharam, ``{Generalized Network Sharing
  Outer Bound and the Two-Unicast Problem},'' in \emph{Proceedings of Netcod},
  2011, pp. 1--6.

\bibitem{javidi08}
J.~Price and T.~Javidi, ``{Network Coding Games with Unicast Flows},''
  \emph{IEEE J. Select. Areas Comm.}, vol. 26, no. 7, pp. 1302--1316, 2008.

\bibitem{JafarISIT}
A.~Das, S.~Vishwanath, S.~A. Jafar, and A.~Markopoulou, ``{Network Coding for
  Multiple Unicasts: An Interference Alignment Approach},'' in \emph{IEEE Intl.
  Symp. on Info. Th.}, 2010, pp. 1878 -- 1882.

\bibitem{interfernce10}
A.~Ramakrishnan, A.~Das, H.~Maleki, A.~Markopoulou, S.~Jafar, and
  S.~Vishwanath, ``{Network coding for three unicast sessions: Interference
  alignment approaches},'' in \emph{48th Allerton Conf. on Comm., Contr. and
  Comp.}, 2010, pp. 1054 -- 1061.

\bibitem{LiLiunicast}
Z.~Li and B.~Li, ``{Network coding: the Case of Multiple Unicast Sessions},''
  in \emph{42st Allerton Conf. on Comm., Contr. and Comp.}, 2004.

\bibitem{wang11}
J.~Han, C.-C. Wang, and N.~Shroff, ``{Analysis of Precoding-based Intersession
  Network Coding and The Corresponding 3-Unicast Interference Alignment
  Scheme},'' in \emph{49th Allerton Conf. on Comm., Contr. and Comp.}, 2011,
  pp. 1033--1040.

\bibitem{kamalRLL11}
A.~E. Kamal, A.~Ramamoorthy, L.~Long, and S.~Li, ``Overlay protection against
  link failures using network coding,'' \emph{IEEE/ACM Trans. on Networking},
  vol.~19, no.~4, pp. 1071 --1084, 2011.

\bibitem{shizhengR11}
S.~Li and A.~Ramamoorthy, ``{Protection against link errors and failures using
  network coding in overlay networks},'' \emph{IEEE Trans. on Comm.}, vol.~59,
  no.~2, pp. 518--528, 2011.

\bibitem{shenvi}
S.~Shenvi and B.~K. Dey, ``{A Simple Necessary and Sufficient Condition for the
  Double Unicast Problem},'' in \emph{IEEE Intl. Conf. Comm.}, 2010, pp. 1--5.

\bibitem{huangR_ITW11}
S.~Huang and A.~Ramamoorthy, ``{An achievable region for the double unicast
  problem based on a minimum cut analysis},'' in \emph{IEEE Information Theory
  Workshop (ITW)}, 2011, pp. 120 -- 124.

\bibitem{caja08}
V.~Cadambe and S.~Jafar, ``Interference {alignment} and {degrees} of {freedom}
  of the {K} user {interference} {channel},'' \emph{IEEE Trans. on Info. Th.},
  vol.~54, no.~8, pp. 3425--3441, 2008.

\bibitem{ramamoorthyL12_jnl}
A.~Ramamoorthy and M.~Langberg, ``{Communicating the sum of sources over a
  network},'' 2012 [Online] available at http://arxiv.org/abs/1001.5319.

\bibitem{LSB06}
M.~Langberg, A.~Sprintson, and J.~Bruck, ``{The encoding complexity of network
  coding},'' \emph{IEEE Trans. on Info. Th.}, vol. 52, no. 6, pp. 2368--2397,
  2006.

\bibitem{motwaniR}
R.~Motwani and P.~Raghavan, \emph{Randomized Algorithms}.\hskip 1em plus 0.5em
  minus 0.4em\relax Cambridge University Press, 1995.

\end{thebibliography}
\end{document}